 \ifpdf\setlength{\pdfpagewidth}{8.5in}\setlength{\pdfpageheight}{11in}\fi
\newtheorem{thm}{Theorem}[section]
\newtheorem{theorem}{Theorem}[section]
\newtheorem{lemma}[thm]{Lemma}
\newtheorem{example}[thm]{Example}
\newtheorem{corollary}[thm]{ Corollary}
\newcommand{\ignore}[1]{}
\newcommand{\notinproc}[1]{#1}
\newcommand{\onlyinproc}[1]{}
\newcommand\E{\textsf{E}}
\newcommand{\var}{\mathop{\sf Var}}
\newcommand{\kth}{\text{k}^{\text{th}}}
\newcommand{\ppart}{\text{{\sc bucket}}}
\newcommand{\ADS}{\mathop{\rm ADS}}
\begin{document}

 \title{All-Distances Sketches, Revisited: \\ HIP Estimators for
Massive Graphs Analysis}

 \ignore{
 \numberofauthors{1}
\author{
\alignauthor Edith Cohen\\
       \affaddr{Microsoft Research}\\
       \affaddr{Mountain View, CA, USA}\\
       \email{edith@cohenwang.com}
} }

\author{Edith Cohen\thanks{Tel Aviv University, Israel 
{\tt edith@cohenwang.com}}
} 


\IEEEcompsoctitleabstractindextext{
\begin{abstract}
Graph datasets with billions of edges, such as social and Web graphs,
are prevalent, and scalability is critical.  All-distances sketches
(ADS) [Cohen 1997],  are a powerful tool for scalable
approximation of statistics.

The sketch is a small size sample of the distance relation of a node which emphasizes closer
nodes. Sketches for all nodes are computed using a nearly linear
computation and estimators are applied to sketches of nodes to
estimate their properties.

We provide, for the first time, a unified exposition of ADS algorithms
and applications.  We present the Historic Inverse Probability (HIP)
estimators which are applied to the ADS of a node to estimate a large
natural class of statistics.  For the important special cases of
neighborhood cardinalities (the
number of nodes within some query distance)  and closeness centralities, HIP estimators
have at most half the variance of previous 
estimators and we show that this is essentially optimal.  Moreover,
HIP obtains a polynomial improvement for more general statistics and the
estimators are simple, flexible, unbiased, and elegant.

  For approximate distinct counting on data streams, HIP outperforms
the original estimators for the HyperLogLog MinHash
  sketches (Flajolet et al. 2007), obtaining significantly
improved estimation quality for this state-of-the-art practical algorithm.
\end{abstract}
}

 \maketitle

\section{Introduction}
Massive graph datasets are prevalent and include social and
Web graphs.  Structural properties and relations of nodes in the graph
successfully model 
relations and properties of entities in the  underlying network.
In particular,  many key properties are formulated in terms of 
the set of shortest-path distances $d_{ij}$ (from node $i$ to
$j$) \cite{Bavelas:HumanOrg1948}.


The distance distribution of a node specifies, 
for $d \geq 0$,  the cardinality of the
$d$-neighborhood of the node (number of nodes that are of distance at
most $x$).  We can also consider
the distance distribution of the whole graph, which is the number of pairs of nodes for
each distance $d_{ij}< d$.  
 More generally, we can consider distance-based statistics of the form
\begin{equation} \label{Cquery}
Q_g(i) = \sum_{j| d_{ij}<\infty} g(j,d_{ij}) \  ,
\end{equation}
where  $g(j,d_{ij}) \geq 0$ is a function over both node IDs (meta data)
 and distances.  
Choosing $g(d_{ij})=d_{ij}$, $Q_g(i)$ is the sum of distances from $i$, which is
(the inverse of) the classic closeness centrality measure
\cite{Bavelas:HumanOrg1948,EW_centrality:SODA2001,classiccloseness:COSN2014}.
Centrality which
decays relevance with distance
\cite{Rosenblatt:stats1956,CoSt:pods03f,CoKa:jcss07} and weighs nodes
by their meta-data is expressed using
\begin{equation} \label{closenesscdef}
C_{\alpha,\beta}(i) = \sum_{j| d_{ij}< \infty} \alpha(d_{ij}) \beta(j) \ ,
\end{equation}
where $\alpha \geq 0$ is
monotone non-increasing {\em kernel} and $\beta\geq 0$ is a 
function over node IDs which weighs (or filters)
nodes based on their meta-data.  For example, $\beta$ can 
depend on gender, locality, activity, or age in a social network or a
topic in a Web graph.
When using $\beta\equiv 1$, we can express the
$d$-neighborhood cardinality
using $\alpha(x)=1$ if $x\leq d$ and
$\alpha(x)=0$ otherwise; the set of reachable nodes from $i$ using
$\alpha(x) \equiv 1$; exponential attenuation using
$\alpha(x) = 2^{-x}$ \cite{Dangalchev:2006}; and (inverse) harmonic
mean of distances using $\alpha(x) = 1/x$ \cite{Opsahl:2010,BoldiVigna:IM2014}.

 These distance-based statistics can be computed exactly using
shortest paths computations.  When there are many queries, however, this is computationally
expensive on very large networks.
Efficient algorithms which approximate the distance distributions were
proposed in the last two decades
\cite{ECohen6f,PGF_ANF:KDD2002,bottomk07:ds,CGLM:ICSR2011,hyperANF:www2011}.
Implementations \cite{BackstromBRUV12:websci2012} based on ANF
\cite{PGF_ANF:KDD2002} and hyperANF \cite{hyperANF:www2011} , and more
recently, \cite{CDFGGW:COSN2013}, based on
\cite{ECohen6f,bottomk07:ds}, can process graphs with billions of
edges.

At the core of  all these algorithms
are  {\em All Distances Sketches (ADS)} \cite{ECohen6f}.
The ADS of
a node $v$ contains a random sample of nodes, where the inclusion
probability of a node $u$ decreases with its distance from $v$ (more precisely,
inversely proportional to the number of nodes closer to $v$ than
$u$).  For each included node, the ADS also contains its distance from $v$.
The sketches of different nodes are {\em coordinated}, which means that the
inclusions of each particular node in the sketches of other nodes are positively
correlated.  While coordination \cite{BrEaJo:1972} is an artifact of the way the sketches are computed 
(we could not compute independent sketches as efficiently), it 
enables further applications such as estimating
similarity between neighborhoods of two nodes \cite{ECohen6f}, 
distances,  closeness similarities \cite{CDFGGW:COSN2013}, and
timed influence \cite{DSGZ:nips2013,timedinfluence:2014}.

  An ADS extends
the simpler {\em
  MinHash} sketch \cite{FlajoletMartin85,ECohen6f}
(The term  {\em least element lists} was used in \cite{ECohen6f}, the
term min-wise/min-hash was coined later by Broder \cite{Broder:CPM00}) of
subsets, and inherits its properties.
MinHash sketches are extensively used for approximate distinct
counting \cite{FlajoletMartin85,ECohen6f,DurandF:ESA03,BJKST:random02}  and similarity estimation
\cite{ECohen6f,CoWaSu,Broder:CPM00,BRODER:sequences97} and come in three 
flavors, which correspond to sampling schemes:
A $k$-min sketch \cite{FlajoletMartin85,ECohen6f} is a $k$ sample obtained with replacement, a
bottom-$k$ sketch \cite{ECohen6f} is a $k$ sample without replacement, and a
$k$-partition sketch \cite{FlajoletMartin85} samples one from each of the $k$ buckets in a
uniform at random partition of elements to buckets.  The three flavors provide
different tradeoffs between update costs, information, and maintenance
costs that are suitable for different applications.  In all three, 
the integer parameter $k\geq 1$ 
controls a tradeoff between the information content (and accuracy of
attainable estimates) of
the sketch and resource usage (for storing, updating, or querying
the sketch). 

   The ADS of a
node $v$ can be viewed as representing  MinHash sketches for
all neighborhoods of $v$, that is, it includes a
MinHash sketch of the set of the $i$ closest nodes to $v$, 
for all possible values of $i$.
Accordingly, ADSs come in the same three flavors:
$k$-mins
\cite{ECohen6f,PGF_ANF:KDD2002}, bottom-$k$
\cite{ECohen6f,bottomk07:ds}, and $k$-partition
\cite{hyperANF:www2011}, and have expected size $\leq k\ln n$ \cite{ECohen6f,bottomk:VLDB2008}.  

  We make several contributions.  Our first contribution is a
unified exposition,  provided for the first time,  of ADS flavors
 (Section~\ref{ADSdef:sec}), their relation to MinHash sketches, and
 scalable algorithms (Section~\ref{comptADS:Sec}). This view
 facilitated our unified derivation of general estimators, and we hope
would enable future research and applications of these versatile
 structures.


\ignore{
which compute the 
set of ADSs 
are based on classic shortest-paths algorithms: {\sc PrunedDijkstra} \cite{ECohen6f}, 
which performs pruned applications of Dijkstra's algorithm (or Breadth First Searches  for unweighted graphs).
\cite{ECohen6f}, {\sc DP}
\cite{PGF_ANF:KDD2002,hyperANF:www2011}, which uses dynamic 
programming and applies when edges are unweighted (ADS computation is
implicit
in \cite{PGF_ANF:KDD2002,CGLM:ICSR2011,hyperANF:www2011}, as entries 
are computed but not retained.), and {\sc LocalUpdates}, which
applies dynamic programming to weighted graphs.
With unweighted graphs, these algorithms perform $O(k m \log n)$ edge relaxations (where $m$ is 
 the number of edges and $n$ the number of nodes), and this is also their
main-memory single-processor running times. {\sc LocalUpdates} is
node-centric and appropriate for MapReduce and similar platforms
\cite{pregel:sigmod2010,Naiad:sosp2013}.
}

Our main technical contributions are 
{\em estimators}  for distance-based statistics.
Our estimators are simple, elegant,  and designed for practice:
getting the most from the information present 
in the sketch, in terms of minimizing variance.  
We now provide a detailed overview of
our contributions and the structure of the paper.



\smallskip
\noindent
{\bf MinHash cardinality estimators:}    
Prior to our work, ADS-based neighborhood cardinality estimators 
\cite{ECohen6f,bottomk07:ds,PGF_ANF:KDD2002,hyperloglog:2007,hyperANF:www2011}
were applied by obtaining
from  the ADS a corresponding MinHash 
sketch of the neighborhood and applying a cardinality estimator 
\cite{ECohen6f,FlajoletMartin85,DurandF:ESA03,hyperloglog:2007} to that 
MinHash sketch.  We refer to these estimators as {\em basic}.  

In Section \ref{ADSestNSbasic:sec} we review cardinality estimators,
focusing on variance. Since we are interested in relative error, we 
use the Coefficient of Variation (CV), which is the ratio of the standard
deviation to the mean.  
Our treatment of basic estimators facilitates their comparison to
the new estimators we propose here.
The first-order term (and an upper bound) on the  CV  of the basic
estimators is $1/\sqrt{k-2}$ (see \cite{ECohen6f} for $k$-mins
sketches and extension to bottom-$k$ sketches here).
We show, by applying  the Lehmann-Scheff\'e theorem \cite{LehSche1950},
that these estimators are 
the (unique) optimal unbiased estimators, in terms of minimizing
variance.



\smallskip
\noindent
{\bf Historic Inverse Probability (HIP) estimators:}    
 In Section \ref{ADSRCest:sec} we present the novel HIP estimators.  
HIP estimators 
 improve over the basic cardinality estimators by
utilizing all information present in the ADS (or accumulated during
its computation), rather than just using the
MinHash sketch of the estimated neighborhood.  

 The key idea behind HIP is to consider,
for two nodes $i,j$, the inclusion probability of $j$
in $\ADS(i)$.  Although this probability can not be computed from the
sketch, it turns out that we can still work with it if
we condition it on the randomization of nodes other than $j$.  We refer to
this conditioned inclusion as the {\em HIP probability} and can compute it when $j\in \ADS(i)$.
We can now obtain for each node $j$
a nonnegative estimate $a_{ij} \geq 0$, which we refer to as {\em
  adjusted weight} on its presence with respect to $i$.  The adjusted
weight $a_{ij}$  is equal
to the inverse of the HIP probability when $j\in \ADS(i)$ and $0$
otherwise and is unbiased (has expectation $1$ for any $j$
reachable from $i$). 

 The HIP estimator for the cardinality of the $d$-neighborhood of $i$
 is simply the sum of the adjusted weights of nodes in $\ADS(i)$ that are of
distance at most $d$ from $i$.  We show that 
the HIP estimators obtain a factor-2
reduction in variance over basic estimators, with CV  upper bounded by
the first-order term $1/\sqrt{2(k-1)}$).
We further show that our HIP
 estimators are essentially optimal for ADS-based neighborhood cardinality
estimates, and nearly match an asymptotic (for large enough
cardinality) lower bound of $1/\sqrt{2k}$
on the CV.  Moreover, the HIP estimates can accelerate existing
implementations (ANF \cite{PGF_ANF:KDD2002} and hyperANF
\cite{hyperANF:www2011}) 
essentially without changing the computation.
Our simulations demonstrate
a factor $\sqrt{2}$ gain in both mean square error and mean relative error of HIP over basic estimators.

 Moreover, our HIP estimates also provide us with
 unbiased and nonnegative estimates for
general distance-based statistics such as \eqref{Cquery} and
\eqref{closenesscdef}.
The HIP estimate for \eqref{Cquery}  is
$\hat{Q}_g(i) = \sum_{j\in \ADS(i)}
 a_{ij} g(j,d_{ij})$, which is a sum over (the logarithmically many)
 nodes in $\ADS(i)$.  Similarly, the HIP  estimator for
 \eqref{closenesscdef} is
\begin{equation} \label{centralityest}
\hat{C}_{\alpha,\beta}(i) = \sum_{j\in\ADS(i)} a_{ij} \alpha(d_{ij})
 \beta(j)\ .
\end{equation}

In \cite{CoSt:pods03f,CoKa:jcss07} we estimated  \eqref{closenesscdef}
from $\ADS(i)$ for any (non-increasing) $\alpha$.  The handling of a
general $\beta$, however, required an ADS computation specific to
$\beta$ (see Section \ref{nonuniformweights:sec}).  
The estimators used a reduction to
basic neighborhood cardinality estimators, inheriting the CV of  $1/\sqrt{k-2}$.  
On the same problem, our ADS HIP  estimator \eqref{centralityest}
has CV that is upper bounded by
 $1/\sqrt{2(k-1)}$. Moreover, the HIP estimator applies also when
 the filter $\beta$ in \eqref{closenesscdef} (or the function $g$ in \eqref{Cquery})
are specified after the sketches are computed.  This flexibility of
using the same set of sketches for many queries is important in
the practice of social networks or Web graphs analysis.
For these queries, our HIP estimators obtain up to an $(n/k)$-fold 
improvement in variance over state of the art, which we believe is a subset-weight estimator applied to
the MinHash sketch of all reachable nodes (by taking the average of $g(d_{ij},j)$ over the $k$ samples, multiplied by a cardinality estimate of the number of reachable nodes $n$). 

\ignore{
\smallskip
\noindent
{\bf Permutation estimators:}  
The basic and HIP estimators have CV that is essentially 
independent of cardinality (the neighborhood size). 
The HIP {\em permutation estimator},  presented in Section
\ref{permest:sec}, exploits knowledge of an
upper bound on the domain size (total number of nodes) to obtain more accurate
cardinality estimates than plain HIP for sets that comprise
a good fraction (experimentally, at least 20\%) of the domain.
}

\smallskip
\noindent {\bf ADS for data streams:} Beyond graphs, an ADS can be
viewed as a sketch of an ordered set of elements. HIP is applicable
in all application domains where such ADS sketches can be computed.
Another important application
domain is data streams, where elements populate stream entries.
Instead of distance, we can consider elapsed time from the start to
the first occurrence of the element, which emphasizes earlier entries.
  Alternatively, we can consider
elapsed time from the most recent occurrence to the current time,
which emphasizes recent entries (appropriate for time decaying
statistics \cite{CoSt:pods03f}).
In both cases, an  ADS can be computed
efficiently over the stream \notinproc{(see details in Section \ref{streamADS:sec})}, and we can apply HIP to estimate
statistics of the form \eqref{Cquery} and \eqref{closenesscdef}, where the respective notion
of elapsed time replaces distance in the expressions.  
A stream statistics which received enormous attention, and we treat in detail,
is distinct counting.

\smallskip
\noindent
{\bf HIP estimators for approximate distinct counting:}  
Almost all streaming distinct counters
\cite{FlajoletMartin85,ECohen6f, DurandF:ESA03,BJKST:random02,KNW:PODS2010,hyperloglogpractice:EDBT2013}
maintain a MinHash sketch of the distinct
 elements.
To answer a query (number of distinct elements seen so
 far),  a ``basic'' estimator is applied to the sketch.
In Section \ref{distinctcount:sec} we instead apply
our HIP estimators.  To do that, we consider 
the sequence of elements which invoked an update of the MinHash sketch
over time.  These updates corresponds to entries in the ADS computed with respect to
elapsed time from the start of the stream to the first occurrence of
each element.
Even though the ADS entry is not retained, (the streaming algorithm only
 retains the MinHash sketch), we can compute the
adjusted weight of the new distinct element that  invoked the update.
These adjusted weights are added up to obtain a running estimate on
the number of distinct elments seens so far.
To apply HIP, we therefore need to maintain the MinHash sketch
and an additional approximate (non-distinct) counter, which maintains an
approximate count of distinct elements.  The approximate counter  is updated (by
a positive amount which corresponds to the adjusted weight of the element) each time the sketch is updated.

 We experimentally compare our HIP estimator to the
HyperLogLog approximate distinct counter \cite{hyperloglog:2007},
which is considered to be the state of
the art practical solution.
To facilitate comparison, we apply HIP to the same MinHash sketch
with the same parametrization that the HyperLogLog estimator was designed for.  Nonetheless, we demonstrate significantly
more accurate estimates using HIP.
Moreover, our HIP estimators are unbiased, principled, and do not require
ad-hoc corrections.  They are flexible in that they apply to all
MinHash sketch flavors and can be further
parametrized according to application needs or to obtain even better
accuracy for the same memory.

\ignore{
($k$-mins sketch or a $k$-partition sketches were proposed by
 Flajolet and Martin algorithm \cite{FlajoletMartin85,DurandF:ESA03}, 
and bottom-$k$ sketches 
 $k$-partition sketch (based on stochastic averaging),
as in the hyperloglog counters
 \cite{FlajoletMartin85,DurandF:ESA03,BJKST:random02}, or bottom-$k$ sketch
 \cite{bottomk07:ds, bottomk:VLDB2008,Lumbroso:AOFA2007} of the prefix
 of the stream that was seen so far.  These algorithms differ in the
representation (size and format) of the $k$ counters constituting the
sketch, and are designed  to optimize
asymptotic bounds \cite{KNW:PODS2010} or practical performance
\cite{hyperloglogpractice:EDBT2013}. 
All the algorithms we are aware of, however,  maintain a sketch that is in one of
these three forms.
}

\ignore{
essentially by treating the ``distance'' as elapsed 
time. In a DP computation, 
we can view information about other nodes
as being ``streamed''  in to each node by increasing distance.  Nodes
occur multiple times in this ``stream'' but only the first occurrence (smallest distance) counts.  In fact, only $k$ entries of the ADS need to be retained
to procede with the computation and
the same estimators
used for ``neighborhood size''  also apply to the number of distinct
items in a stream. Wit this view, the
$k$-partition ADSs (but with
limited precision ranks) is the counting sketch  used in 
the Durand and Flajolet hyperLoglog counters \cite{DurandF:ESA03} and
$k$-mins ADSs were used by
Flajolet and Martin \cite{FlajoletMartin85}.
When we apply the basic estimators, essentially off-the-shelf,
we obtain the storage and update bounds of \cite{BJKST:random02}.
The relative error parameter $\epsilon = \Theta(1/\sqrt{k})$, the
storage is $O(k)$ (registers of size $\log n$).  The update time is
$O(\log n)$ (taking minimum of numbers of size $O(\log n)$) and the query time
(computing a basic estimate from the information) is $O(k\ log n)$.

Our HIP estimators provide a factor 2 improvement in variance over the
same sketches, attaining
better accuracy.
 The HIP estimators also apply (and remain
unbiased) with limited-precision registers (essentially using $k$
registers of size $O(\log\log n)$).  Moreover, the analysis is simple and transparent and the estimators
are (nearly, due to limited precision) unbiased.  Furthermore,
 (with further slight tweaks
on representation of the data structure) we show that we can match
the asymptotic bounds of optimal theoretical distinct counting algorithms
\cite{BJKST:random02,KNW:PODS2010}.
}

\ignore{
  Distances are tracked implicitly by iteration number, so
this means that we only need to retain $k$ rank values per node in
active memory for the computation.
The HIP estimators easily extend for limited precision, while
  retaining unbiasedness.  Variance, however, can decrease.
 We analyse the tradeoff between decreased variance by increasing $k$
 and increased variance due to  limited rank precision.  We show that
 we hit  the  optimal balance when we
use only the exponent value (in binary representation of
$[0,1]$ ranks).   That is, using ranks that are inverse integral
powers of $2$. This means that the representation of each value requires 
$\log\log n$ bits, where $n$ is the number of nodes.  This explains
experimental results in \cite{hyperANF:www2011} (using weaker estimators).
}

\smallskip
\noindent
{\bf Approximate counters:}  
In Section \ref{approxcount:sec} we consider the simpler problem of approximate (not
distinct) counters.  Approximate counters are a component of our HIP
approximate distinct counters, and have many other important applications.
We extend the classic counters of Morris \cite{Morris77}  and Flajolet
\cite{Flajolet:BIT85} to apply for arbitrary positive increases, and
addition of approximate counters, using again, inverse probability estimation.

Lastly, we 
present a cardinality estimator which applies only
to the $\ADS$ size (Section \ref{adssizeest:section}) and
discuss extension to non-uniform node weights
(Section \ref{nonuniformweights:sec}).


\section{All-distances sketches}  \label{ADSdef:sec}

 We start with a brief review of MinHash sketches.  The MinHash
 sketch is a randomized summary of a subset $N$ of
items (from some domain $U$) and comes in
three flavors: $k$-mins, $k$-partition, and bottom-$k$, where the
parameter $k$ determines the sketch size.

The sketch is defined with respect to (one or more,
depending on flavor) random permutations of the domain $U$.  It is
convenient to specify a permutation by assigning random {\em rank}
values, $r(j)\sim U[0,1]$, to items.  The permutation is the list of
items sorted by increasing rank order.  To specify multiple
permutations, we use multiple rank assignments.  A $k$-mins sketch
\cite{FlajoletMartin85,ECohen6f} includes the smallest rank in
each of $k$ independent permutations and corresponds to sampling $k$
times with replacement.  A $k$-partition sketch
\cite{FlajoletMartin85,hyperloglog:2007,LOZ:NIPS2012} first maps items uniformly at random to $k$
buckets and then includes the smallest rank in each bucket.
A bottom-$k$ sketch \cite{ECohen6f,BRODER:sequences97} (also known as 
KMV sketch \cite{BJKST:random02}, coordinated order samples
\cite{BrEaJo:1972,Rosen1997a,Ohlsson_SPS:1998}, or CRC \cite{LiChurchHastie:NIPS2008}) 
includes the
$k$ smallest ranks in a single permutation and corresponds
to sampling $k$ times without replacement. 
For 
$k=1$, all three flavors are the same.

MinHash sketches of different subsets $N$ are {\em coordinated}
if they are generated using the same random permutations (or mappings)
of the domain $U$. The concept of sample coordination was introduced by
\cite{BrEaJo:1972}, and in the context of sketches, by \cite{ECohen6f}.

\begin{wrapfigure}{r}{0.2\textwidth}
\centerline{
\ifpdf
\includegraphics[width=0.16\textwidth]{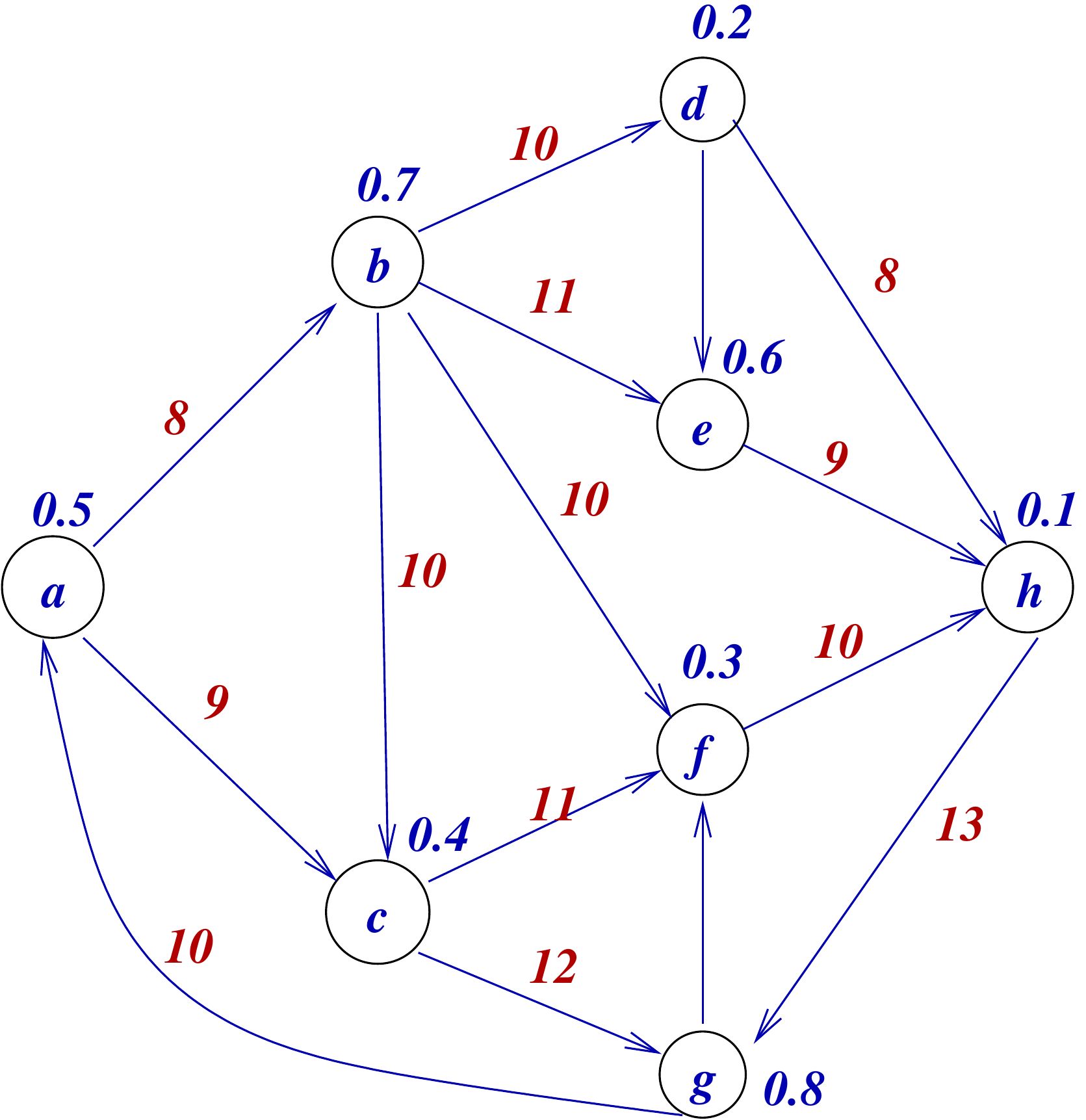} 
\else
 \epsfig{figure=ADS.eps,width=0.16\textwidth} 
\fi
}
\caption{ A directed graph with random rank values associated with its
 nodes. \label{ADS:fig}}
\end{wrapfigure}
 Before continuing to graphs, we introduce some terminology.
For a set $N$ and a numeric function $r:N$,  the function $\kth_r(N)$ returns the $\kth$ smallest value in the
range of $r$ on $N$.  If $|N|<k$ then we define $\kth_r(N)$ to be the
supremum of the range of $r$ (we mostly use $r\in [0,1]$ and the
supremum is $1$.)
  We consider directed or undirected, weighted or unweighted  graphs.
For nodes $i,j$, let $d_{ij}$ be the distance from $i$ to $j$.
For a node $i$ and distance $d$, we use the notation $N_d(i)$ for the set of nodes within distance $d$ from $i$ and
$n_d(i)=|N_d(i)|$ for the cardinality of $N_d(i)$.
We use the notation $\Phi_{<j}(i)$ 
for the set of nodes that are  closer
to node $i$ than node $j$ and 
 $\pi_{ij} = 1+|\Phi_{< j}(i)|$ for the
{\em Dijkstra rank} of $j$ with respect to $i$ ($j$'s position in the
nearest neighbors list of $i$).

 The ADS
of a node $i$, $\ADS(i)$,  is a set of node
 ID and distance pairs.  The included nodes are a sample  of the nodes
reachable from $i$ and with each included node $j\in \ADS(i)$ we store the distance
$d_{ij}$. $\ADS(i)$ is the
union of coordinated MinHash sketches of the neighborhoods $N_d(i)$ (for all possible values of $d$). 
The ADSs are defined with respect to  random
mappings/permutations of the set of all nodes and
come in the same three flavors, according to the underlying MinHash sketches: Bottom-$k$, $k$-mins, and $k$-partition.
For $k=1$, all flavors are equivalent.

For simplicity, our definitions of $\ADS(i)$ assume that
distances $d_{ij}$ are unique for different $j$ (Which can be achieved
using tie breaking).  \notinproc{A definition which does not use tie breaking is
given in Appendix \ref{notieADS:sec}.}

A {\em bottom-$k$} ADS~\cite{bottomk07:ds} is
defined with respect to a single random permutation. 
  ADS$(i)$ includes a node $j$ if and only if
the rank of $j$ is one of the $k$ smallest ranks amongst nodes that are at least
as close to $i$:
\begin{equation} \label{botkADS}
j \in \text{ADS}(i) \iff  r(j) < \kth_r(\Phi_{< j}(i))\ .
\end{equation}

A {\em $k$-partition} ADS (implicit in~\cite{hyperANF:www2011})
is defined with respect to a  random partition $\ppart:V\rightarrow [k]$ of
 the nodes to $k$ subsets $V_h= \{i| \ppart(i)=h\}$ 
and a random permutation.
The ADS
 of $i$ contains $j$ if and only if $j$ has the smallest
 rank among nodes in its bucket that are at least as close to $i$.
\begin{eqnarray*}
\lefteqn{j \in \text{ADS}(i) \iff }\\
&&  r(j) < \min \left\{ r(h) \mid \begin{array}{l} \ppart(h)=\ppart(j)
  \\ \wedge\,  h\in \Phi_{< j}(i) \end{array} \right\}\ .
\end{eqnarray*}

  A $k$-mins ADS \cite{ECohen6f,PGF_ANF:KDD2002}
  is simply $k$ independent bottom-$1$ ADSs, 
  defined with respect to $k$ independent permutations.

 We mentioned that an All-Distances Sketch of $v$
  ``contains'' a MinHash sketch of the neighborhood $N_d(v)$ with respect to any distance $d$.
  We briefly explain how for a given $d \geq 0$, we obtain a MinHash
  sketch of the set of nodes $N_d(v)$ from $\ADS(v)$.  
 Note that the same permutation over
nodes applies to both the ADS and the MinHash sketches.

For a $k$-mins ADS,  the $k$-mins MinHash sketch
of $N_d(v)$ includes, for each of  the $k$ permutations $r$, 
$x \gets \min_{u\in N_d(v)}
r(u)$.  The value for a given permutation is the
minimum rank of a node of distance at most $d$ in the respective
bottom-$1$ ADS.
These $k$ minimum rank values 
$x^{(t)}$ $t\in [k]$ obtained from the different permutations are the  $k$-mins MinHash sketch
of $N_d(v)$. 
Similarly, the  bottom-$k$ MinHash sketch of $N_d(v)$ includes the
$k$ nodes of minimum rank in $N_d(v)$, which are also the $k$ nodes of 
minimum rank in $\ADS(v)$ with distance at most $d$.
A $k$-partition MinHash sketch of $N_d(v)$  is similarly obtained from a
$k$-partition ADS by taking, for each bucket $i\in [k]$,  the smallest
rank value
in $N_d(v)$ of a node in bucket $i$.  This is also the smallest value
in $\ADS(v)$ over nodes in bucket $i$ that have distance at most $d$
from $v$.

 Some of our analysis assumes that the
rank $r(j)$ and (for $k$-partition ADSs) the bucket $\ppart(j)$ are
readily available  for each node $j$.  This can be achieved using 
random hash functions.

    For directed graphs, we consider both the {\em forward} and the {\em
    backward} ADS, which are specified respectively using forward or reverse
paths from $i$.  When needed for clarity,  we augment the notation with
$\overrightarrow{X}$ (forward) and $\overleftarrow{X}$
(backward) when $X$ is the ADS, $N$, or $n$.

\begin{example}
Consider the graph of Figure~\ref{ADS:fig}.
  To determine the forward ADS of node $a$, we sort nodes in
  order of increasing distance from $a$. 
The order is   $a,b,c,d,e,f,g,h$ with respective distances
$(0,8,9,18,19,20,21,26)$.
We now consider the random ranks of nodes to determine which entries
are included in the ADS.
For $k=1$, the (forward) ADS of $a$ includes the lowest rank within
each distance and is:
 $\overrightarrow{\text{ADS}}(a) = \{ (0,a), (9,c),
 (18,d),(26,h) \}$.  
The first value in each pair is the distance from
$a$ and the second is the node ID.  
To compute the reverse ADS of $b$,  we look at nodes in sorted reverse
distance from $b$: $b,a,g,c,h,d,e,f$  with respective reverse
distances $(0,8,18,30,31,39,40,41)$. We obtain
 $\overleftarrow{\text{ADS}}(b) = \{ (0,b), (8,a), (30,c),
 (31,h) \}$.  
The bottom-$2$ forward ADS of $a$ contains
all nodes that have one of the $2$ smallest ranks in the prefix of the
sorted order:  so it also includes 
$\{(8,b),  (20,f)\}$. 
\end{example}

 The expected number of nodes in ADS$(i)$ is $\leq k\ln(n)$, where $n$
 is the number of reachable nodes from $i$:  This was established in
\cite{ECohen6f}   for $k$-mins ADS
and in \cite{bottomk:VLDB2008} for bottom-$k$ ADS.  We present a
proof for completeness.
 \begin{lemma} \cite{ECohen6f,bottomk:VLDB2008} \label{updates:lemma}
The expected size of a bottom-$k$ ADS is
$$k+k(H_n-H_k) \approx k(1+\ln n - \ln k) \ ,$$
where $H_i=\sum_{j=1}^i 1/j$ is the $i$th Harmonic number and $n$ is the number of nodes reachable from $v$.
The expected size of a $k$-partition ADS is accordingly
$k H_{n/k} \approx k (\ln n - \ln k)$.
 \end{lemma}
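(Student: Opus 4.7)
My plan is to sort the reachable nodes by distance from $v$ and reduce each inclusion event to a ``record'' event for i.i.d.\ uniform ranks. Number the $n$ nodes reachable from $v$ as $v_1, v_2, \ldots, v_n$ in increasing order of distance from $v$, so that $\Phi_{<v_j}(v) = \{v_1, \ldots, v_{j-1}\}$, and note $\E[|\ADS(v)|] = \sum_{j=1}^n \Pr[v_j \in \ADS(v)]$ by linearity of expectation.

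For the bottom-$k$ flavor, \eqref{botkADS} says $v_j \in \ADS(v)$ iff $r(v_j)$ is among the $k$ smallest of $\{r(v_1), \ldots, r(v_j)\}$. For $j \leq k$ this always holds (since $\kth_r$ of fewer than $k$ values is the supremum), contributing $k$ to the expectation. For $j > k$, the $j$ ranks are i.i.d.\ uniform, so by symmetry the probability that $r(v_j)$ is among the $k$ smallest is exactly $k/j$. Summing, $\E[|\ADS(v)|] = k + \sum_{j=k+1}^n k/j = k + k(H_n - H_k)$, proving the first claim.

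For the $k$-partition flavor, I would condition on the bucket map $\ppart$. Let $N_h$ be the number of reachable nodes in bucket $h$ and list them as $u_{h,1}, \ldots, u_{h,N_h}$ in increasing distance from $v$. The definition of the $k$-partition ADS gives $u_{h,\ell} \in \ADS(v)$ iff $r(u_{h,\ell})$ is the smallest among $r(u_{h,1}), \ldots, r(u_{h,\ell})$; since the ranks are i.i.d.\ uniform, this running-minimum event has probability $1/\ell$, so the expected contribution from bucket $h$ is $\sum_{\ell=1}^{N_h} 1/\ell = H_{N_h}$. Taking the outer expectation yields $\E[|\ADS(v)|] = \E\bigl[\sum_{h=1}^k H_{N_h}\bigr]$.

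The main (and essentially only) obstacle is converting this last expression into the stated $kH_{n/k}$. Since $\ppart$ is uniform, each $N_h$ is $\mathrm{Bin}(n,1/k)$ with mean $n/k$, and the bucket sizes concentrate tightly around this mean. Using $H_x = \ln x + \gamma + O(1/x)$ together with a Taylor expansion of $\ln$ around $n/k$ gives $\E[H_{N_h}] = H_{n/k} + o(1)$ (with $H$ extended to non-integer arguments via this logarithmic identification), whence $\E[|\ADS(v)|] \approx kH_{n/k}$ as claimed. Both bounds share the same kernel: the per-chain expected number of running-minimum records, evaluated at chain length $j$ (bottom-$k$, exactly) or $N_h$ (partition, up to the concentration estimate above).
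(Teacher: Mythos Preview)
Your argument is correct and matches the paper's approach essentially line for line: sort by distance, observe that the $j$th node enters with probability $\min\{1,k/j\}$, and sum; for $k$-partition, treat each bucket as a bottom-$1$ chain of (approximately) $n/k$ nodes. If anything, your $k$-partition treatment is slightly more careful than the paper's, which simply asserts that each bucket has about $n/k$ nodes and hence contributes about $\ln(n/k)$ records, whereas you make the conditioning on $N_h$ and the concentration step explicit.
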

\begin{proof}
 For bottom-$k$ ADS,  we consider the nodes sorted
by increasing distance from $v$,  assuming unique distances.
  The $i$th node is included
in the bottom-$k$ ADS of $v$ with probability $p_i=\min\{1, k/i\}$.   
Node inclusions are independent (when distances are unique, but
otherwise are negatively correlated).
The expected size of the ADS of $v$ is the sum of node inclusions
which is $$\sum_{i=1}^n p_i = k+k(H_n-H_k)\ .$$
Similarly, for $k$-partition, (assuming a random partition and
permutation), the expected number of included nodes from each bucket is
$\ln (n/k)$ (since each bucket includes in expectation $n/k$ nodes) 
and therefore the total expected size is $k\ln (n/k)$.
\end{proof}

\medskip
\noindent
{\bf Base-$b$ ranks}:  The ADS definition we provided applies with
unique ranks, which are effectively node IDs. Unique IDs are of 
size $\lceil \log_2 n\rceil$ and 
support queries involving meta-data based node selections. 
For many queries, including neighborhood cardinality estimation and
pairwise similarities, it suffices to work with
ranks that have a smaller representation:
For some base $b>1$, 
we use the rounded rank values $r'(j)=b^{-h_j}$,
where $h_j=\lceil -\log_b r(j) \rceil$.  The rounded rank can be
represented by the integer $h_j$.
 The value of the base $b$ 
trades-off the sketch size  and the information it carries, where
both increase when $b$ is closer to $1$.

 With base-$b$ ranks, the expected value of the largest $h_j$, which
 corresponds to the smallest $r(j)$, 
 is $\log_b n$.  Thus, the representation size of the rounded smallest rank
is $\log_2\log_b n$.  The expected deviation from the expectation is
$\leq \log_b 2$, which means that a set of $k$ smallest ranks in a
neighborhood or the $k$ smallest ranks in different permutations can
be compactly represented using an expected number of $\log_2\log_2 n  + k\log_b 2$ bits.

 In the sequel, we consider ranks $r \sim U[0,1]$ and point out
 implications of using base-$b$ ranks. 

\ignore{
\medskip
\noindent
{\bf Exponentially distributed ranks}:
 For estimator analysis, it is sometimes convenient to work with  exponentially
distributed ranks, drawn from an exponential distribution with
parameter 1 instead from the uniform distribution.   We can switch
between these alternatives using a simple
one-to-one monotone transformation $r'(i) = -\ln(1-r(u))$. From
monotonicity, the rank permutation and ADS remains the same.  In
particular, transformation can be applied to rank values after the ADS
is computed.
}

\section{ADS Computation}  \label{comptADS:Sec}

 There are two meta approaches to ADS
computation on graphs.
The first,  {\sc PrunedDijkstra} (Algorithm~\ref{pDijkstra:alg}) uses pruned applications of
Dijkstra's single-source shortest paths
algorithm (BFS when unweighted). A
$k$-mins version was given in \cite{ECohen6f} and
bottom-$k$ version in \cite{bottomk07:ds}.
 The second, {\sc DP},  ($k$-mins in \cite{PGF_ANF:KDD2002} and
 $k$-partition in \cite{hyperANF:www2011}), is node-centric, 
applies to unweighted graphs and is based on
dynamic programming or Bellman-Ford  shortest paths computation.  
Both approaches, however, can be adopted to all three ADS flavors.

  We propose here {\sc LocalUpdates} (Algorithm~\ref{local:alg}),  which extends {\sc DP}
  to weighted graphs.   {\sc LocalUpdates} is 
node-centric and suitable for
MapReduce or similar platforms \cite{pregel:sigmod2010,Naiad:sosp2013}.  
When the node operations are 
synchronized, as with MapReduce, the total number of iterations needed until no more 
updates can be performed is bounded by the diameter of the graph 
(maximum over pairs of nodes of the number of hops in the shortest 
path between them). 

\begin{algorithm}[h]
\caption{$\ADS$ set for $G$ via {\sc PrunedDijkstra}\label{pDijkstra:alg}}
\For {$u$ by increasing $r(u)$}
{
Perform Dijkstra from $u$ on $G^T$ (the transpose graph) \\
\ForEach{scanned node $v$}
{\eIf {$|\{(x,y)\in \ADS(v) \mid y< d_{vu}\}|=k$}{prune Dijkstra at $v$}{$\, \ADS(v)\gets \ADS(v) \cup \{(r(u),d_{vu})\}$}}
}
\end{algorithm}

\ignore{
\begin{algorithm}\caption{$\ADS$ set for $G$ via {\sc PrunedDijkstra}\label{pDijkstra:alg}}
\begin{algorithmic}
\For {$u$ by increasing $r(u)$}
\State Perform a pruned Dijkstra from $u$ on $G^T$ (the transpose graph)
\State When visiting node $v$:
\If {$|\{(x,y)\in \ADS(u) \mid y< d_{vu}\}|$} prune Dijkstra at $v$
\Else {$\, \ADS(v)\gets \ADS(v) \cup \{(r(u),d_{vu})\}$}
\EndIf 
\EndFor 
\end{algorithmic}
\end{algorithm}
}

Both {\sc Pruned Dijkstra's} and {\sc DP} can be performed
in $O(k m \log n)$ time (on unweighted graphs) 
on a single-processor in main memory, 
where $n$ and  $m$ are the number of nodes and edges in the graph.   

 These algorithms maintain a partial ADS for each node, as entries 
of node ID and distance pairs.  $\ADS(v)$ is initialized with the pair
$(v,0)$.     The basic operation we use is {\em edge relaxation}
(named after the corresponding operation in shortest paths computations). 
When relaxing $(v,u)$, $\ADS(v)$ is updated using $\ADS(u)$.  For
bottom-$k$, the relaxation modifies $\ADS(v)$ when $\ADS(u)$ contains a node $h$ such that 
$r(h)$ is smaller than the $k$th smallest rank amongst nodes 
in $\ADS(v)$ with distance at most $d_{uh}+w_{vu}$  from $v$.  More 
precisely,
if $h$ was inserted to $\ADS(u)$ after the most recent relaxation of 
the edge $(v,u)$,
we can update $\ADS(v)$ using {\sc insert}$(v,h,d_{uh}+w_{vu})$:

\SetKwFunction{insert}{insert}
\begin{function} \caption{insert($v,x,a$)}
\KwIn{$\ADS(v)$ and a node-distance pair $(x,a)$}
\KwOut{updated $\ADS(v)$}
\If {$x\not\in \ADS(v)$ and
$r(x) < {\rm k^{th}} \left\{r(y) | y\in \ADS(v) \wedge 
    d_{vy} \leq a\right\}$}
 {$\ADS(v) \gets  \ADS(v) \cup \{(x,a)\}$}
\end{function}

Both {\sc PrunedDijkstra} and {\sc DP} perform relaxations 
in an order which guarantees that inserted entries are part of the 
final ADS, that is, there are no other nodes that are both 
closer and have lower rank: {\sc PrunedDijkstra}
iterates over all nodes in increasing rank, runs
Dijkstra's algorithm from the node on the transpose graph, and prunes at nodes when the ADS is not updated.
{\sc DP} performs iterations, where in
each iteration, all edges $(v,u)$ such that $\ADS(u)$ was 
updated in the previous step,  are relaxed. Therefore, entries 
are inserted by increasing distance. 

The edge relaxation function $\insert$ is stated so that 
it applies for both algorithms,
but some of the conditional statements are redundant:
The test $x\not\in \ADS(v)$ is not needed with {\sc PrunedDijkstra}
(we  only need to record that if/when node $i$ was already updated in the 
current Dijkstra) and the test $d_{vy}<a$ is not needed with DP (since all 
entries in current iteration are of distance at most $a$). 

 To obtain a bound on the number of edge relaxations performed we note that
a relaxation of $(v,u)$ can be useful only when $\ADS(u)$ was modified
since the previous relaxation of $(v,u)$. 
Therefore, each relaxation can be ``charged'' to a modification at its 
sink node, meaning that the total number of relaxations 
with sink $u$ is bounded by the size of $\ADS(u)$ times the in-degree of $u$. 
We obtain that the expected total number of relaxations is 
$O(k m \log n)$.  

 We provide details on bottom-$k$ ADS algorithms.
A $k$-mins ADS set can be computed by performing $k$ separate computations of 
a bottom-$1$ ADS sets (using $k$ different permutations).  To compute 
a $k$-partition ADS set, we perform a separate 
bottom-$1$ ADS computation for each of the $k$ buckets (but with the modification 
that the ADS of nodes not included in the bucket is initialized to $\emptyset$). 
The total number of relaxations is $O(k m \log n)$, which again is $m$ times the expected size of 
of $k$-partition ADS (which is the same as the size of a bottom-$k$
ADS).

{\sc LocalUpdates} incurs more overhead than
{\sc PrunedDijkstra}, as entries can also be deleted from the ADS.  For adversarially constructed graphs (where
distance is inversely correlated with hops), the overhead can be made
linear.
 In practice, however, we can expect a  logarithmic overhead.  We can also
guarantee an $O(\log n)$ overhead by settling for
$(1+\epsilon)$-approximate ADSs (where $\epsilon> 1/n^c$).  A
$(1+\epsilon)$-approximate $\ADS(u)$  satisfies
$$ v\not\in\ADS(u) \implies r(v)> \kth_x\{(x,y)\in
\ADS(u) \mid y<(1+\epsilon)d_{uv} \}\ .$$
We can compute a $(1+\epsilon)$-approximate ADS set by updating
the ADS only when on updates $\insert(i,x,a)$ 
for which the condition is not violated, that is,
$$r(x) < {\rm k^{th}} \left\{r(y) | y\in \text{ADS}(i) \wedge 
    d_{iy} \leq a(1+\epsilon)\right\}\ .$$
We can show that with approximate ADSs, 
the overhead on the total number of updates is bounded by
$\log_{1+\epsilon} \frac{n w_{\max}}{w_{\min}}$, where $w_{\max}$ and $w_{\min}$ are
the largest and smallest edge lengths.  When $\epsilon =
\Omega(1/\text{poly} n)$, we can assume wlog that the ratio
$w_{\max}/w_{\min}$ is polynomial, obtaining a logarithmic overhead.

\begin{algorithm}
\caption{ADS set for $G$ via {\sc LocalUpdates} \label{local:alg}}
\tcp{Initialization}
\For{$u$}{ $\ADS(u) \gets \{(r(u),0)\}$}
\tcp{Propagate updates $(r,d)$ at node $u$}
\If {$(r,d)$ is added to $\ADS(u)$}{
\ForEach{$y \mid (u,y)\in G$}{{\bf send}
 $(r,d+w(u,y))$ to $y$}
}
\tcp{Process update $(r,d)$ {\bf received} at $u$}
 \If {node $u$ {\bf receives} $(r,d)$}
{ \If {$r < \kth_x\{(x,y)\in \ADS(u) \mid y<d \}$}  
 {$\ADS(u) \gets \ADS(u) \cup\{(r(v),d) \}$}
\tcp{Clean-up $\ADS(u)$}
\For{entries $(x,y)\in \ADS(u) \mid y>d$  by 
  increasing $y$}{
\If {$x > \kth_h\{(h,z)\in \ADS(u) \mid z<y \}$}{$\ADS(u) \gets
  \ADS(u) \setminus (x,y)$}
}}
\end{algorithm}
\ignore{
\begin{algorithm}\caption{$\ADS$ set for $G$ via {\sc LocalUpdates}\label{local:alg}}
\begin{algorithmic}
\State {\bf Initialize:} 
\For {$u$} $\ADS(u) \gets \{(r(u),0)\}$
\EndFor 
\State {\bf Send updates:} \If {$(r,d)$ was added to $\ADS(u)$ in 
  the previous iteration} $\forall \{y \mid (u,y)\in G \}$ send 
 $(r,d+w(u,y))$ to $y$. 
\State \EndIf 
\State {\bf Process updates:}
 \If {node $u$ received $(r,d)$}
 \If {$r < \kth_x\{(x,y)\in \ADS(u) \mid y<d \}$}  
\State $\ADS(u) \gets \ADS(u) \cup\{(r(v),d) \}$
\State {\sc Clean-up} $\ADS(v)$ \Comment{Scan entries $(x,y)$ such that $y>d$  by 
  increasing $y$.  Remove $(x,y)$ if $x > \kth_h\{(h,z)\in \ADS(u) \mid z<y \}$}
\EndIf 
\EndIf 
\end{algorithmic}
\end{algorithm}
}

\notinproc{
\subsection{ADS for streams}  \label{streamADS:sec}

  In the introduction we considered ADS computed for a stream of entries of the form
  $(u,t)$, where $u$ is an element from a set and $t$ is a current
  time.  The ADS of distinct elements in a
 stream can be defined with respect to (i) elapsed time from the start to
  the first occurrence of each element or (ii) the elapsed
  time from the most recent occurrence of each element to the current time.
We now describe streaming algorithms which maintain a bottom-$k$ ADS in
both cases. The construction for other flavors is similar.

  The first case is essentially equivalent to 
maintaining a MinHash sketch of the prefix of the
  stream processed so far and recording all entries that resulted in
  the sketch being modified.
We construct an ADS when elements
  are inserted in increasing distance:  For bottom-$k$ ADS we maintain
  $\tau$, which is the $kth$ smallest rank over elements processed so
  far.  When we process an entry $(u,t)$, we test if $r(u)<\tau$, if
  so, we insert $(u,t)$ to the ADS and update $\tau$.  

  The second case was mentioned in \cite{CoSt:pods03f} for streams where
  all entries are distinct.  To sketch only the most recent distinct
  occurrences, we do as follows.
We use some $T$ which is greater than the end
 time of the stream.  Note that entries in the ADS are processed in
 decreasing distance $T-t$.  This means that the newest entry always needs
 to be inserted, and the ADS may need to be cleaned up accordingly.
When processing $(u,t)$, we test if $u$ is already in the ADS and
remove the entry if it is.  We then insert a new entry $(u,T-t)$ and
clean up the ADS by scanning all ADS entries in increasing distance
and removing all entries where the rank is larger than the $k$th
smallest seen so far.
 }

\section{MinHash cardinality estimators}  \label{ADSestNSbasic:sec}
 In this section we 
review estimators for the cardinality $|N|=n$ of a subset $N$ that are applied 
 to a MinHash sketch of $N$.   

The cardinality of $N_d(v)$ can be estimated by
obtaining its MinHash sketch from $\ADS(v)$ and applying a cardinality
estimator to this MinHash sketch.
This also applies to directed graphs, in which case we can estimate the size of
the outbound $d$-neighborhood $\overrightarrow{n}_d(v)$ from
$\overrightarrow{\ADS}(v)$ and similarly estimate the size of the inbound
$d$-neighborhood $\overleftarrow{n}_d(v)$ from
$\overleftarrow{\ADS}(v)$.

As mentioned in the introduction, 
we use the CV to measure the quality of the estimates.
The CV
of an estimator $\hat{n}$
 of $n$ is $\sqrt{\E[(n-\hat{n})^2]}/n$.
  For the same value of the parameter $k$,
the bottom-$k$ sketch contains the most information \cite{bottomk07:ds}, but all flavors 
are similar when $n\gg k$.  We first consider full precision ranks and
then explain the implication of working with base-$b$ ranks.
For illustrative purposes, we start with the 
$k$-mins sketch and then the more informative bottom-$k$ 
sketch.  The lower bound for the $k$-partition 
sketch is implied by the bound for the other flavors.

\subsection{$k$-mins estimator} 
A $k$-mins sketch is a vector $\{x_i\}$ $i\in [k]$.
The
cardinality estimator 
  $\frac{k-1}{\sum_{i=1}^k -\ln (1-x_i)}$ was
presented and analysed in~\cite{ECohen6f}.
It is unbiased for $k>1$.  Its variance
is  bounded only for $k>2$ and the CV is equal to $1/\sqrt{k-2}$.  The
Mean Relative Error (MRE) is
$$\frac{2(k-1)^{k-2}}{(k-2)!\exp(k-1)} \approx
\sqrt{\frac{2}{\pi(k-2)}}\ . $$ 
This estimator can be better understood when we view the ranks as
exponentially distributed with parameter $1$ (rather
than uniform from $U[0,1]$).  These are equivalent, as we can use a simple 1-1 monotone transformation
$y=-\ln(1-x)$ which also preserves the MinHash definition.
In this light,  the minimum rank is exponentially
distributed with parameter $n$.  Our estimation problem is to 
estimate the 
parameter of an exponential distribution from $k$ independent samples
and we use the estimator  $\frac{k-1}{\sum_{i=1}^k
  y_i}$, where $y_i=-\ln(1-x_i)$.

  We now apply estimation theory in order to understand how well
  this estimator uses the information available in the MinHash
  sketch.
\begin{lemma}
Any unbiased estimator applied to the $k$-mins MinHash sketch must have CV that is 
at least $1/\sqrt{k}$. 
\end{lemma}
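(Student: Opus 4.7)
The plan is to reduce the claim to the Cramér--Rao lower bound for estimating the parameter of an exponential distribution from $k$ i.i.d.\ samples.

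First, I would use the monotone bijection $y = -\ln(1-x)$ already mentioned in the discussion of the $k$-mins estimator to re-express the sketch $(x_1,\ldots,x_k)$ as $(y_1,\ldots,y_k)$, where each $y_i$ is the minimum of $n$ independent rates-$1$ exponentials, hence $y_i \sim \mathrm{Exp}(n)$, and the $y_i$ are mutually independent. Since this transformation does not depend on $n$ and is coordinate-wise invertible, the class of unbiased estimators of $n$ based on the $x_i$ and the class based on the $y_i$ are in bijection with identical variance, so it suffices to prove the lower bound in the exponential parametrization.

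Next, I would compute the Fisher information for a single exponential observation. With $\log f(y;n) = \log n - n y$, one has $\partial_n \log f = 1/n - y$ and $\partial_n^2 \log f = -1/n^2$, so $I_1(n) = 1/n^2$ and the Fisher information of the entire sketch is $I_k(n) = k/n^2$. The exponential family satisfies the standard regularity conditions (support independent of $n$, smooth density, interchange of differentiation and integration), so Cramér--Rao applies directly: any unbiased estimator $\hat n$ based on $(y_1,\ldots,y_k)$ satisfies
\[
\var(\hat n) \;\geq\; \frac{1}{I_k(n)} \;=\; \frac{n^2}{k},
\]
and therefore $\mathrm{CV}(\hat n) = \sqrt{\var(\hat n)}/n \geq 1/\sqrt{k}$.

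The main potential obstacle is the suspicion that an estimator might exploit information in the sketch that lies outside the exponential parametrization, but because the transformation $y_i = -\ln(1-x_i)$ is a measurable bijection not depending on $n$, no estimator based on the original $k$-mins sketch can do better than the best estimator based on the transformed sample; the Cramér--Rao bound therefore transfers verbatim. I would close by noting that the bound $1/\sqrt{k}$ differs from the $1/\sqrt{k-2}$ achieved by the estimator reviewed above only in lower-order terms, so the existing estimator is asymptotically efficient in $k$.
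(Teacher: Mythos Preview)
Your proposal is correct and follows essentially the same approach as the paper: transform to exponential observations, compute the Fisher information $I_k(n)=k/n^2$, and apply the Cram\'er--Rao bound to obtain $\mathrm{CV}\geq 1/\sqrt{k}$. The only cosmetic difference is that you compute the single-observation information $I_1(n)=1/n^2$ and invoke additivity, whereas the paper differentiates the joint log-likelihood directly; your added remarks on regularity and the bijection preserving the estimator class are sound and slightly more explicit than the paper's version.
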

\begin{proof}
For cardinality $n$, 
each of the $k$ entries (minimum ranks)  is an exponentially 
distributed random variable and therefore has density function $n 
e^{-nx }$.  

Since entries in the $k$-mins sketch are independent, the density 
function (likelihood function) of the sketch is the product of the 
$k$  density functions 
 $f(\boldsymbol{y};n)=n^k e^{-n\sum_{i=1}^k y_i}$. 
Its logarithm, the log likelihood function,  is 
$\ell(\boldsymbol{y};n)= k \ln n -n \sum_{i=1}^k y_i$. 
The Fisher information, $I(n)$, is the negated expectation of the 
second partial derivative of $\ell(\boldsymbol{y};n)$ (with respect to 
the estimated parameter $n$). We have $$\frac{\partial^2 
  \ell(\boldsymbol{y};n)}{\partial^2 n}= -\frac{k}{n^2}\ .$$
This is constant, and equal to its expectation.  Therefore $I(n) = k/n^2$. 

 We now apply the Cram\'er-Rao lower bound which states that the 
 variance of any unbiased estimator is at least the inverse of the 
 Fisher information:
$\frac{1}{I(n)} = \frac{n^2}{k}$.  A corresponding lower bound of 
$\frac{1}{\sqrt{k}}$ on the 
CV is obtained by taking the square root and dividing by $n$. 
\end{proof}

We next show that the sum $\sum_{i=1}^k y_i$ captures all necessary information 
to obtain a minimum variance estimator for $n$.  
\begin{lemma}
The sum of the minimum ranks $\sum_{i=1}^k y_i$ is a sufficient statistics for 
estimating $n$ from a $k$-mins sketch. 
\end{lemma}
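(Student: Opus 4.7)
The plan is to invoke the Fisher--Neyman factorization theorem directly. The preceding lemma has already established that the joint density (likelihood) of the transformed sketch entries $\boldsymbol{y}=(y_1,\dots,y_k)$, where $y_i=-\ln(1-x_i)$, is
\[
f(\boldsymbol{y};n) = n^k \exp\!\left(-n \sum_{i=1}^k y_i\right),
\]
reflecting the fact that each $y_i$ is independently exponentially distributed with rate $n$.

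The key observation is that this likelihood depends on the data $\boldsymbol{y}$ only through the single quantity $T(\boldsymbol{y})=\sum_{i=1}^k y_i$. Thus it factors trivially as $f(\boldsymbol{y};n) = g(T(\boldsymbol{y});n)\cdot h(\boldsymbol{y})$ with $g(t;n)=n^k e^{-nt}$ and $h(\boldsymbol{y})\equiv 1$. The Fisher--Neyman factorization criterion then yields sufficiency of $T$ for $n$ immediately. Alternatively, one can note that the exponential family with density $n e^{-ny}$ has $y$ as its natural sufficient statistic, so the natural sufficient statistic for $k$ iid draws is $\sum y_i$; either route delivers the conclusion in one line after the setup.

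There is no real obstacle here; the entire content of the statement is already present in the form of the likelihood derived in the previous lemma, so the proof is essentially a one-line appeal to factorization. The only thing to be careful about is to state the theorem being applied and to note that the monotone $1$--$1$ transformation $y_i = -\ln(1-x_i)$ is a bijection, so sufficiency in the $\boldsymbol{y}$-parametrization is equivalent to sufficiency in the original $\boldsymbol{x}$-parametrization, which is why we are justified in working with exponential rather than uniform ranks.

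As a natural follow-up (useful for the surrounding discussion, though not strictly part of this lemma), one could combine sufficiency of $T=\sum y_i$ with completeness of the exponential family and the Lehmann--Scheff\'e theorem to conclude that the unbiased estimator $(k-1)/T$ presented earlier is the unique uniformly minimum-variance unbiased estimator; this is the thread the paper appears to be building toward.
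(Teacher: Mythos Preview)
Your proof is correct and matches the paper's approach exactly: the paper's entire proof is the single sentence that the likelihood $f(\boldsymbol{y};n)=n^k e^{-n\sum_{i=1}^k y_i}$ depends on the sketch only through $\sum_{i=1}^k y_i$. You have simply made the appeal to the Fisher--Neyman factorization criterion explicit, which the paper leaves implicit.
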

\begin{proof}
The likelihood function $f(\boldsymbol{y};n)=n^k 
e^{-n\sum_{i=1}^k y_i}$ depends on the sketch only through the sum $\sum_{i=1}^k y_i$. 
\end{proof}
Therefore, from the Rao-Blackwell Theorem \cite{RaoBlackwell1947}, a minimum variance 
estimator applied to the sketch may only depend on $\sum_{i=1}^k y_i$. 
We can further show that $\sum_{i=1}^k y_i$ is in fact a {\em complete} sufficient 
statistics. 
A sufficient statistics $T$ is complete if any function $g$ for
which $E[g(T)]=0$ for all $n$ must be $0$ almost everywhere (with
probability 1).
The Lehmann-Scheff\'e Theorem \cite{LehSche1950} states that any
unbiased estimator which is a function of a complete sufficient
statistics must be the
unique Uniform Minimum Variance Unbiased Estimator (UMVUE).
Since our estimator is unbiased, it follows that it is the unique UMVUE.
That is, there is no other estimator which is unbiased and has a lower
variance! (for any value of the parameter $n$).

This optimality results provide an interesting insight on
approximate distinct counters.  One can easily come
up with several ways of using the sketch information to obtain an
estimator:  taking the median, averaging quantiles, removing
the two extreme values, and so on.
 The median specifically had been considered
\cite{ECohen6f,ams99,BJKST:random02} because it is convenient for
obtaining concentration bounds.  We now understand that
while these estimators can have variance that is within a constant
factor of optimal, the average carries all information
and anything else is strictly inferior.

\subsection{Bottom-$k$ estimator}
The bottom-$k$ estimator includes the $k$ smallest rank values in $N$,
and we use the estimator
$\frac{k-1}{\tau_k}$,  where
$\tau_k = \kth_r(N)$ is the $k$th smallest rank in $N$. 
This estimator is a conditional inverse-probability
estimator \cite{HT52}:  For each element in $N$ we consider its
probability of being included in the MinHash sketch, conditioned on
fixed ranks of all other elements.  This estimator is therefore unbiased.
The conditioning was applied  with priority sampling \cite{DLT:jacm07}
(bottom-$k$ \cite{bottomk:VLDB2008})  subset sum estimation.

The information content of the bottom-$k$ sketch
is strictly higher than the $k$-mins sketch \cite{bottomk07:ds}.  We
show that the CV of this estimator is upper bounded by the CV of the
$k$-mins estimator:
  \begin{lemma} \label{botkvar:lemma}
The bottom-$k$ estimator has CV $\leq 1/\sqrt{k-2}$.
  \end{lemma}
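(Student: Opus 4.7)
My plan is to compute the variance directly from the exact distribution of $\tau_k$, since this is a one-dimensional statistic with a well-known distribution. Recall that the bottom-$k$ sketch of $N$ is defined from independent uniform ranks $r(j)\sim U[0,1]$ on the $n=|N|$ elements, and $\tau_k$ is the $k$th smallest among these $n$ i.i.d.\ $U[0,1]$ values. Therefore $\tau_k$ follows a Beta$(k,n-k+1)$ distribution, whose density is $\frac{1}{B(k,n-k+1)} x^{k-1}(1-x)^{n-k}$ on $[0,1]$. This reduces the lemma to a pair of moment computations for a Beta random variable.

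First I would verify unbiasedness by computing $E[1/\tau_k]$ via the standard Beta moment identity $E[X^r] = \frac{B(k+r,n-k+1)}{B(k,n-k+1)} = \frac{\Gamma(k+r)\Gamma(n+1)}{\Gamma(k)\Gamma(n+1+r)}$. Plugging in $r=-1$ (valid for $k>1$) yields $E[1/\tau_k] = n/(k-1)$, so $E[(k-1)/\tau_k] = n$, confirming the estimator is unbiased. Next I would apply the same identity with $r=-2$ (valid for $k>2$) to get $E[1/\tau_k^2] = \frac{n(n-1)}{(k-1)(k-2)}$.

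Combining these gives
\begin{equation*}
\mathrm{Var}\!\left(\frac{k-1}{\tau_k}\right) = (k-1)^2\!\left(\frac{n(n-1)}{(k-1)(k-2)} - \frac{n^2}{(k-1)^2}\right) = \frac{n(n-k+1)}{k-2}.
\end{equation*}
Dividing by $n^2$ gives $\mathrm{CV}^2 = \frac{n-k+1}{n(k-2)} \leq \frac{1}{k-2}$, which is the desired bound. Taking a square root yields $\mathrm{CV}\leq 1/\sqrt{k-2}$.

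I do not expect any real obstacles here: the main work is the two Beta-moment calculations and a short algebraic simplification of the variance. The only subtlety is the edge case $k\leq 2$, where $E[1/\tau_k^2]$ (or even $E[1/\tau_k]$) can fail to be finite; for the statement of the lemma this simply requires the implicit assumption $k>2$, matching the analogous restriction for the $k$-mins estimator. It is also worth noting in passing that the bound in fact improves as $n$ grows, and as $n\to\infty$ the CV tends to exactly $1/\sqrt{k-2}$, strictly dominating the $k$-mins CV for every finite $n$, in line with the information-content comparison cited from \cite{bottomk07:ds}.
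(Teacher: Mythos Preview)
Your proof is correct and, in fact, sharper than the paper's: you obtain the exact variance $\frac{n(n-k+1)}{k-2}$ via direct Beta$(k,n-k+1)$ moment computations, whereas the paper only establishes the bound $\frac{n(n-1)}{k-2}$.

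The approaches are genuinely different. The paper does \emph{not} work with the distribution of $\tau_k$ directly. Instead it decomposes the estimator as a sum over all $n$ elements of per-element inverse-probability ``adjusted weights'': each element is viewed as sampled if its rank falls below the $(k-1)$th smallest rank among the other $n-1$ elements, and one bounds the per-element variance $\E[1/p-1]$ by switching to exponential ranks, using $\frac{e^{-x}}{1-e^{-x}}\le 1/x$, and dominating the order-statistic distribution $b_{n-1,k-1}$ by a sum of i.i.d.\ exponentials $s_{n-1,k-1}$. Negative correlation between elements then lets the per-element bound $(n-1)/(k-2)$ be summed. What your route buys is brevity and an exact answer for this lemma; what the paper's route buys is machinery---the conditional-inclusion/adjusted-weight viewpoint and the exponential-rank calculation---that is reused verbatim in the proof of Theorem~\ref{adsRCCV:thm} for the HIP estimator, where there is no single sufficient statistic like $\tau_k$ to exploit.
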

\begin{proof}
We interpret the bottom-$k$ cardinality estimator
as a sum of $n$ negatively correlated inverse-probability \cite{HT52}  estimates for each
element, which estimate the presence of the element in $N$.  (That is, for
each $v\in N$, estimating its contribution of ``$1$'' to the
cardinality and for each $v\not\in N$, estimating $0$). The inclusion
probability of an element is with respect to fixed ranks of all other
elements.
In this case, an element is 
one of the $k-1$ smallest ranks only if its rank value is 
 strictly smaller than the $k-1$ smallest rank amongst the $n-1$.  For 
 elements currently in the sketch, this threshold value is $\tau_k$.
These estimates (adjusted weights) are equal and positive only for the
$k-1$ elements of smallest rank.  
The variance
of the adjusted weight $a$ of an element conditioned on fixing the rank values of other
elements is $1/p-1$, where $p$ is the
probability that the rank of our element is smaller than the
threshold.  The (unconditional) variance of $a$ is the expectation of
$1/p-1$ over the respective distribution over $p$.

  When ranks are exponentially distributed (which is 
  convenient choice  for  analysis), the distribution of the
  $k-1$ smallest amongst $n-1$ 
is the sum of $k-1$ exponential random variables with
parameters $n-1,n-2,\ldots,n-k+1$.  
We denote the density and CDF functions of this distribution by
$b_{n-1,k-1}$ and $B_{n-1,k-1}$, respectively.
We have $p=1-\exp(-x)$
and the adjusted weight of each element has variance of
$1/p-1= \frac{\exp(-x)}{1-\exp(-x)}$ (conditioned on $x$).  
 We now compute the expectation  of the variance according to the
 distribution of $x$.

 We denote by $s_{n,k}$ and $S_{n,k}$ the respective distribution
function of the sum 
of $k$ exponentially distributed random variables with parameter $n$.

{\small
\begin{align*}
\var[\hat{a}] &= 
\int_{0}^{\infty} b_{n-1,k-1}(x)\frac{e^{-x}}{1-e^{-x}} dx 
\leq \int_{0}^{\infty} s_{n-1,k-1}(x)\frac{1}{x} dx \\
&= 
\int_{0}^{\infty} \frac{(n-1)^{k-1} x^{k-2}}{(k-2)!}
e^{-(n-1)x} \frac{1}{x} dx  \\ &
= \frac{(n-1)^{k-1}}{(k-2)!}
\frac{(k-3)!}{(n-1)^{k-2}} = \frac{n-1}{k-2} \ .
\end{align*}}
The first inequality follows from $\frac{e^{-x}}{1-e^{-x}} \leq 1/x$
and $\forall x, B_{n,k}(x) \leq S_{n,k}(x)$, that is,  $B_{n,k}$ is  dominated by
the sum of $k$ exponential random variables with parameter $n$.
We then substitute the
probability density function~\cite{Feller2} (also used for analyzing
the $k$-mins estimator in \cite{ECohen6f})
$$s_{n,k} = \frac{n^k x^{k-1}}{(k-1)!} e^{-nx}\ .$$
The second to last equality uses $\int_0^\infty  x^a e^{-bx}dx = a!/b^{a+1}$ for natural $a,b$,

Estimates for different elements are negatively correlated (an element being
sampled makes is slightly less likely for another to be sampled) and thus, the variance on the
cardinality estimate is at most the sum of variances of the $n$
elements.  The CV is therefore at most
 $$\sqrt{\frac{n(n-1)}{k-2}}/n\leq \sqrt{\frac{1}{k-2}}\ .$$
\end{proof}
The improvement of bottom-$k$ over the $k$-mins estimator is more pronounced when the cardinality $n$
is smaller and closer to $k$.  The first order term, however, is the
same and when $n\gg k$, the CV of the
bottom-$k$ estimator approaches $\sqrt{\frac{1}{k-2}}$.

 We now consider this estimator from the estimation theoretic lens.
   When $n\leq k$, the variance 
  is clearly $0$.  Therefore, any meaningful lower bound must depend 
  on both $n,k$. 
\begin{lemma}
Any unbiased estimator applied to the bottom-$k$ MinHash sketch must 
satisfy 
$$\lim_{n\rightarrow \infty} \mbox{CV}(n,k) 
\geq 1/\sqrt{k}\ .$$
\end{lemma}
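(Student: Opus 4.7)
My plan is to mirror the Cramér-Rao argument used above for the $k$-mins sketch, applied now to the joint density of the $k$ smallest order statistics. To obtain a tractable likelihood, I would first transform each rank $r\sim U[0,1]$ to $y=-\ln(1-r)$, so that the sketch becomes the $k$ smallest values among $n$ i.i.d.\ $\text{Exp}(1)$ random variables. This preserves the estimation problem because the transformation is a fixed monotone bijection that can be applied after sampling.

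The computation of the Fisher information then rests on the classical representation of exponential order statistics in terms of independent spacings. If $y_{(1)}<\cdots<y_{(k)}$ are the observed order statistics, then $s_i := y_{(i)}-y_{(i-1)}$ (with $y_{(0)}=0$) are independent with $s_i\sim\text{Exp}(n-i+1)$, so the log-likelihood factorizes cleanly as
\[
\ell(\mathbf{s};n)=\sum_{i=1}^k \ln(n-i+1)-\sum_{i=1}^k (n-i+1)\,s_i.
\]
Because this is linear in the sufficient statistics $s_i$, its second partial derivative in $n$ is deterministic and equal to $-\sum_{i=1}^k(n-i+1)^{-2}$. Hence
\[
I(n)=\sum_{i=1}^k \frac{1}{(n-i+1)^2}.
\]
Cramér-Rao then yields $\var(\hat n)\geq 1/I(n)$, so $\mbox{CV}(n,k)^2\geq 1/\bigl(n^2\sum_{i=1}^k (n-i+1)^{-2}\bigr)$. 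Each term $n^2/(n-i+1)^2\to 1$ as $n\to\infty$ with $k$ fixed, so the denominator tends to $k$ and the claimed asymptotic bound $1/\sqrt{k}$ follows.

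The main obstacle, as in the $k$-mins lemma, is justifying the use of Cramér-Rao when the parameter of interest $n$ is nominally an integer rather than a continuous one. I would handle this exactly as in the proof above: embed the estimation problem in the continuous family indexed by real $n>k-1$, for which the spacing density is smooth and the exponential-family structure automatically supplies the regularity needed to differentiate under the integral. Any estimator that is unbiased at integer values of $n$ is a legitimate estimator in the continuous family, and the asymptotic Cramér-Rao bound derived there restricts to the integer values of interest. After this point the argument is a short routine limit, and there are no further delicate steps.
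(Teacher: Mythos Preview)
Your proof is correct and follows essentially the same route as the paper: transform to exponentials, use the independent-spacings (R\'enyi) representation of the $k$ smallest order statistics to factor the likelihood, read off the Fisher information $I(n)=\sum_{j=0}^{k-1}(n-j)^{-2}$, and apply Cram\'er--Rao before taking the $n\to\infty$ limit. The only difference is cosmetic indexing, and you add an explicit remark on embedding the integer parameter in a continuous family, which the paper leaves implicit.
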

\begin{proof}
Let $x_1,x_2,\ldots,x_k$ be the $k$ smallest ranks in increasing order.   From basic properties of the 
  exponential distribution, the minimum rank 
  $y_0\equiv x_1$ is 
  exponentially distributed with parameter $n$.  For $i>0$,  the difference between 
  the $i$th smallest and the $i-1$th smallest ranks, $y_i\equiv 
  x_{i+1}-x_{i}$,  is exponentially distributed with parameter $n-i$.  
Moreover, these differences $y_i$ are independent. 
We can therefore equivalently represent the information in the 
bottom-$k$ sketch by $(y_0,\ldots,y_{k-1})$, where $y_i$ is 
independently drawn from an exponential distribution with parameter $n-i$.  The joint density function is the 
product 
$f(\boldsymbol{y};n)=\prod_{i=0}^{k-1} (n-i) e^{-(n-i) y_i}$.   The 
Fisher information is $I(n) = \sum_{i=0}^k \frac{1}{(n-i)^2}$.   
We obtain a lower bound on the CV of at least 
$\frac{1}{\sqrt{\sum_{i=0}^{k-1} \frac{n^2}{(n-i)^2}}}$.  When $n\gg k$, the 
expression approaches $\frac{1}{\sqrt{k}}$. 
\end{proof}

\begin{lemma}
$x_k$ (the $k$th smallest rank) is a sufficient statistics for 
estimating $n$ from a bottom-$k$ sketch. 
\end{lemma}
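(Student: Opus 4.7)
The plan is to apply the Fisher--Neyman factorization theorem: I would write the joint density of the bottom-$k$ sketch as a function of $n$ and exhibit a factorization into one piece that is free of $n$ and another that depends on the data only through $x_k$.

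Concretely, under the uniform-rank parametrisation on $[0,1]$, the sketch $(x_1,\ldots,x_k)$ is the tuple of the first $k$ order statistics of $n$ i.i.d.\ $U[0,1]$ variables. Marginalising the full order-statistics density $n!$ against the free coordinates $x_k<x_{k+1}<\cdots<x_n<1$ yields
\[
f(x_1,\ldots,x_k;n) \;=\; \frac{n!}{(n-k)!}\,(1-x_k)^{n-k}
\]
on $0<x_1<\cdots<x_k<1$. This factors as an indicator of the ordering (free of $n$) times $\frac{n!}{(n-k)!}(1-x_k)^{n-k}$ (depending on the data only through $x_k$), and Fisher--Neyman then delivers sufficiency. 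The same argument transfers to the exponential-rank parametrisation used in the proof of Lemma~\ref{botkvar:lemma}: the density takes the form $\frac{n!}{(n-k)!}\,e^{-\sum_i y_i}\,e^{-(n-k)y_k}$, where the sum $\sum_i y_i$ is $n$-free and so can be absorbed into the data-dependent factor.

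I do not anticipate a real obstacle; the only step that merits a moment of care is the density calculation together with the observation that the sketch is naturally an unordered set, but the sorted tuple is a measurable function of it, so sufficiency for one implies sufficiency for the other. Looking ahead to a Lehmann--Scheff\'e step that parallels the $k$-mins analysis, one would additionally want completeness of $x_k$, which follows from the standard completeness of the Beta (or equivalently shifted-Gamma) family indexed by $n$; combined with unbiasedness of $(k-1)/\tau_k$ this would then identify it as the unique UMVUE among bottom-$k$ sketch estimators.
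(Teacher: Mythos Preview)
Your proposal is correct and uses essentially the same idea as the paper: both apply the Fisher--Neyman factorization theorem to the joint density of the bottom-$k$ sketch. The paper works in the exponential-rank parametrisation via the independent spacings $y_i=x_{i+1}-x_i$, writes the density as $\bigl(\prod_{i=0}^{k-1}(n-i)\bigr)e^{-n\sum_i y_i}\cdot e^{\sum_i i\,y_i}$, and observes that the $n$-dependent factor involves the data only through $\sum_{i=0}^{k-1} y_i = x_k$; your route via the order-statistics density $\tfrac{n!}{(n-k)!}(1-x_k)^{n-k}$ (uniform) or $\tfrac{n!}{(n-k)!}e^{-\sum x_i}e^{-(n-k)x_k}$ (exponential) is the same argument without the spacings detour.
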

\begin{proof}
We can express the joint density function $f(\boldsymbol{y};n)$ as a product of an 
expression that does not depend on the estimated parameter $n$ and 
$e^{-n\sum_{i=0}^{k-1} y_i} \prod_{i=0}^{k-1} (n-i)$.  Therefore,
$x_k=\sum_{i=0}^{k-1} y_i$ is a sufficient statistics. 
\end{proof}
From Rao-Blackwell  Theorem, the $k$th minimum rank 
captures all the useful information contained in the bottom-$k$ sketch 
for obtaining a minimum variance estimator for $n$.  Since it is a 
complete sufficient statistics, and our estimator is unbiased, it follows from the 
Lehmann-Scheff\'e theorem \cite{LehSche1950} that it is 
the unique UMVUE.

\subsection{$k$-partition estimator}  
 The estimator examines the $1<k'\leq k$ nonempty
buckets, and is conditioned on $k'$.  
The size of each bucket has distribution $1+B(n-k',1/k')$, 
where $B$ is the Binomial distribution.
We can approximate bucket sizes by $n/k'$ and apply the 
$k'$-mins estimator (analysis holds for $k'$ equal buckets).  The estimate is
$\frac{k'(k'-1)}{-\sum_{t=1}^k \ln (1-x^{(t)})}$, where $x^{(t)}$ is the minimum
rank in partition $t$. 

  When $n\gg k$, the $k$-partition estimator performs similarly to the
  bottom-$k$ and $k$-mins estimators.  
When $n < k$, there are effectively only $k'<k$
  nonempty buckets.  Even when $n =O(k)$, the expected size of $k'$ is
significantly smaller than $k$, and the CV is more similar
to that of a $k'$-mins estimate, and therefore, can be expected to be
$\sqrt{k/k'}$ larger.  Moreover, the $k$-partition
 estimator is biased down: 
In particular, when $k'=1$, an event with positive probability, the estimate is $0$.  The probability of $k'=1$
for cardinality $n$ is $p= (1/k)^{n-1}$.  Since we do not
generally know $n$, we can not completely correct for this bias.



\subsection{MinHash sketches with base-$b$ ranks} \label{baseb2:sec}

We considered cardinality estimators for sketches with ``full'' ranks taken from the domain $[0,1]$.  
If we work with truncated  ranks but ensure that there are no rank
collisions, the full-rank estimators can be applied by uniformly at random
``filling in''  missing bits to the desired precision or better yet, computing the
expectation of the estimator over these random completions.
A hash range of size $n^c$ and representation $c\log n$ implies that
  with probability $1/n^{c-1}$ there are no rank collisions between
any two nodes in a set of size $n$.


A uniform random completion of the truncated ranks is an equivalent
replacement to the full rank when all elements with 
the same base-$b$ rank that ``contend''  for a sketch entry are
actually represented in the sketch. If this condition does not hold, the
expected  full-rank completions are more likely to be smaller
than uniform completions and estimates obtained with uniform
completion will be biased.

 To satisfy this condition we need to ensure that there are no rank collisions along the 
``inclusion'' threshold.  With bottom-$k$ this means that the base-$b$ $k$th
smallest rank is strictly  smaller than the base-$b$ $(k+1)$th smallest rank.  With $k$-mins ($k$-partition) it means that the base-$b$ minimum is unique in each permutation (bucket).

If we choose $b=1+1/k^c$, probability of collision is at most $1/k^{c-1}$.
In this case, the expected size of the (integer exponent of the) minimum rank is $\log_2\log_b n
\approx \log_2\log_2 n + \log_2 k^c\approx \log_2\log_2 n  + c\log k$.
Moreover, we recall that the expected size of the offset from this
expectation is constant
times $\log_b 2$.  Substituting $b\approx 1/k^c$ we obtain an expected
offset of the order of $c\log k$, 
so we can compactly represent the
sketch using $\log_2\log_2 n + ck\log k$ bits.

 If we work with a larger base, collisions are more likely and introduce
 bias.  The estimators then need to compensate for the bias.
Specialized estimators for base-$2$ ranks with $k$-mins sketches 
were proposed in \cite{FlajoletMartin85} and for $k$-partition
sketches in \cite{hyperloglog:2007}.  The HIP estimators we 
present next naturally apply with base-$b$ ranks.

\section{The HIP estimator} \label{ADSRCest:sec}

We present the {\em Historic Inverse Probability (HIP)} estimators,
which apply with all three ADS flavors and naturally
extend to base-$b$ ranks.  
We show that the HIP estimators
obtain a factor 2  improvement
 in variance over the respective basic estimator and also show that
 they are asymptotically optimal.
We also present a variant of HIP, the {\em permutation} cardinality
estimator,
which applies to bottom-$k$ ADSs when ranks are a strict permutation of
a domain $[n]$. This estimator improves over plain HIP when the
cardinality is at least $0.2 n$.





For each node $j\in \ADS(i)$ we define its
{\em HIP probability} $\tau_{ij}$ as its inclusion probability in
$\ADS(i)$ conditioned on fixed ranks of all nodes that are closer to
$i$.
The HIP estimator is computed by scanning entries $j\in \ADS(i)$  in order of increasing
distance from  $i$.   For each node $j$, we compute its HIP
probability and then $a_{ij}= 1/\tau_{ij} >0$, which we call the {\em adjusted weight} of $j$.
For $j\not\in \ADS(i)$ we define $a_{ij} \equiv 0$.
The adjusted weight is
an inverse probability 
estimate $a_{ij}= 1/\tau_{ij} >0$ on the presence of $j$ in $\ADS(i)$.
This probability $\tau_{ij}$ is strictly positive for all $j$ reachable
from $i$, therefore, the adjusted weight $a_{ij}\geq 0$ has
expectation $\E[a_{ij}]=1$.  Another important property is that
since $a_{ij}$  is positive if and only if $j\in \text{ADS}(i)$, it can be computed from the information in $\ADS(i)$.
The HIP estimates are a conditioned version of the classic
Horvitz-Thompson \cite{HT52} (inverse probability) estimator.
A similar
conditioning technique, in a different context, was used in \cite{DLT:jacm07,bottomk:VLDB2008}.

 As noted in the introduction, we can estimate $Q_g(i)$ (see
 \eqref{Cquery})
from $\ADS(i)$ using
\begin{equation} \label{Cqueryest}
\hat{Q}_g(i) = \sum_{j} a_{ij}g(j,d_{ij})=\sum_{j\in \text{ADS}(i)} a_{ij} g(j,d_{ij})\ .
\end{equation}
Unbiasedness follows from
linearity of expectation, since each adjusted weight is unbiased.
The second equality holds since only nodes $j\in \ADS(i)$ have
positive $a_{ij}>0$.
We note that the estimate can be easily computed from $\ADS(i)$, since for each
included node $j$ we have the distance $d_{ij}$.


When we are only interested in statistics $Q_g$ where $g(d_{ij})$
  only depends on the distance and not
on the node ID $j$, we can compress the ADS representation to a list
of distance and adjusted weights pairs:  For each unique distance $d$
in $\ADS(i)$  we associate an adjusted weight equal to 
 the sum of the adjusted weights of included nodes in $\ADS(i)$ with distance
$d$. 

To finish the presentation of the HIP estimators, we need to explain
how the adjusted weights are computed for $j\in \ADS(i)$.
We focus in detail on
bottom-$k$ ADSs and start with full ranks $r(i) \sim U[0,1]$.

\subsection{HIP estimate for bottom-$k$ ADS}
Consider a node $v$ and list nodes by increasing Dijkstra rank with
respect to $v$, that is node $i$ has $\pi_{vi}=i$.
We obtain a precise expression for the HIP probability of node $i$:
\begin{lemma}
Conditioned on fixed rank values of all nodes in $\Phi_{<i}(v)$, the
probability of $i\in \ADS(v)$ is
\begin{equation} \label{botkRCthreshold}
\tau_{vi}=
\text{k}^{\text{th}}_r\{ \Phi_{<i}(v) \cap \ADS(v) \}\ .
\end{equation}
\end{lemma}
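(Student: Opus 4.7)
The plan is to split the proof into two parts: first establishing that the conditional inclusion probability equals $\kth_r(\Phi_{<i}(v))$, and then showing that this quantity equals the computable expression $\kth_r(\Phi_{<i}(v) \cap \ADS(v))$.

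For the first part, I would unwind the bottom-$k$ ADS definition \eqref{botkADS}: $i \in \ADS(v)$ if and only if $r(i) < \kth_r(\Phi_{<i}(v))$. The right-hand side is a function only of the ranks of nodes in $\Phi_{<i}(v)$, so once we condition on those ranks, it becomes a fixed threshold $\tau \in [0,1]$. Since $r(i) \sim U[0,1]$ is drawn independently of every other node's rank, the conditional probability that $r(i) < \tau$ is exactly $\tau = \kth_r(\Phi_{<i}(v))$.

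For the second part, the task reduces to proving the set-theoretic identity $\kth_r(\Phi_{<i}(v)) = \kth_r(\Phi_{<i}(v) \cap \ADS(v))$. The key claim is that every node $j \in \Phi_{<i}(v)$ whose rank is among the $k$ smallest in $\Phi_{<i}(v)$ must already lie in $\ADS(v)$; once this claim is established, the $k$ smallest ranks of the full prefix $\Phi_{<i}(v)$ coincide with the $k$ smallest ranks of its intersection with $\ADS(v)$, and in particular the $\kth$ smallest values agree. To prove the claim, I would observe that $\Phi_{<j}(v) \subseteq \Phi_{<i}(v) \setminus \{j\}$ because any node closer to $v$ than $j$ is also closer to $v$ than $i$. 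If $r(j)$ is one of the $k$ smallest ranks in $\Phi_{<i}(v)$, then at most $k-1$ nodes in $\Phi_{<i}(v) \setminus \{j\}$ have rank smaller than $r(j)$, hence at most $k-1$ nodes in $\Phi_{<j}(v)$ do, giving $r(j) < \kth_r(\Phi_{<j}(v))$, so by \eqref{botkADS} we have $j \in \ADS(v)$.

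The two parts combine immediately to yield $\tau_{vi} = \kth_r(\Phi_{<i}(v) \cap \ADS(v))$. I do not anticipate a substantial obstacle: the conditioning argument is an elementary application of independence, and the only subtlety is the containment $\Phi_{<j}(v) \subseteq \Phi_{<i}(v) \setminus \{j\}$, which relies on the tie-breaking assumption that distances from $v$ are unique (so that $d_{vj} < d_{vi}$ implies every node closer to $v$ than $j$ is strictly closer than $i$ as well). Ties among ranks occur with probability zero under $r \sim U[0,1]$, so the strict inequality in the definition never causes trouble.
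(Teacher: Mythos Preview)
Your proposal is correct and follows essentially the same approach as the paper's proof: both identify the inclusion event with $r(i)<\kth_r(\Phi_{<i}(v))$, invoke uniformity of $r(i)$ to turn the threshold into a probability, and then argue that the $k$ smallest-ranked nodes of $\Phi_{<i}(v)$ already lie in $\ADS(v)$ so the threshold is computable from the sketch. The paper asserts that last step in a single clause (``since $\ADS(v)$ must include all these nodes''), whereas you spell it out via the containment $\Phi_{<j}(v)\subseteq\Phi_{<i}(v)\setminus\{j\}$; this added detail is sound and makes the argument self-contained.
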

\begin{proof}
Node $i$ is included if and only if
$ r(i) < \text{k}^{\text{th}}_r\{ \Phi_{<i}(v)\}$, that is, $i$'s rank
is smaller than 
the $k$th smallest rank amongst nodes that are closer to $v$ than
$i$.  Note that it is the same as the $k$th smallest rank among nodes
that are in $\ADS(v)$ and closer to $v$ than $i$, since $\ADS(v)$ must
include all these nodes.  
When $r(i)\sim U[0,1]$, this happens with probability $\tau_{vi}$.
\end{proof}
Note that $\tau_{vi}>0$ happens with probability $1$, which is important
for unbiasedness.
The adjusted weights $a_{vi}$  for node $i$ are $0$ if $i\not\in
\ADS(v)$ and $1/\tau_{vi}$ if $i\in \ADS(v)$.
Note that $\tau_{vi}$, and hence $a_{vi}$, can be computed from $\ADS(v)$ for all $i\in \ADS(v)$.


Note that for $i\leq k$ (when $i$ is one of the $k$ closest nodes to $v$), by definition $i\in \ADS(v)$, $\tau_{vi} \equiv 1$,  and
therefore $a_{vi}=1$, since the first $k$ nodes are included with
probability $1$.  
Also note that the adjusted weights of nodes in $\ADS(v)$ are increasing with
the distance $d_{vi}$ (or Dijkstra rank $\pi_{vi}$).  This is because the inclusion
probability in the ADS decreases with distance.  In particular this
means that the variance of $a_{vi}$ increases with $d_{vi}$.

 We show that the variance of the HIP neighborhood cardinality estimator is at least a
 factor of 2 smaller than the variance of the basic
bottom-$k$  cardinality estimator, which in turn dominates  the basic $k$-mins estimator.
\begin{theorem} \label{adsRCCV:thm}
The CV of the ADS HIP estimator for a neighborhood of size $n$ is
$$\leq \frac{\sqrt{1-\frac{n+k(k-1)}{n^2}}}{\sqrt{2( k-1)}}\leq
\frac{1}{\sqrt{2(k-1)}}\ .$$
\end{theorem}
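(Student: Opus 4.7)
The plan is to express $\hat{n} = \sum_j a_{vj}$ where the sum ranges over all $n$ nodes in the neighborhood, ordered by increasing Dijkstra rank so that node $j$ has $\pi_{vj}=j$. Since the first $k$ nodes are always in $\ADS(v)$, they contribute $a_{vj}=1$ deterministically and have zero variance; only nodes with $j > k$ are stochastic. The variance decomposes as the sum of individual variances plus pairwise covariances, so the main tasks are (i) argue the covariance terms vanish, (ii) compute each $\mathrm{Var}(a_{vj})$, and (iii) sum and bound.

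The key step I would do first is the uncorrelatedness argument. For $i < j$ (both $>k$), $a_{vi}$ is measurable with respect to $\mathcal{F}_i = \sigma(r(1),\dots,r(i))$, while the HIP threshold $\tau_{vj}$ is a function of $\mathcal{F}_{j-1}$. Since $r(j) \sim U[0,1]$ is independent of $\mathcal{F}_{j-1}$, the conditional law of $a_{vj}$ given $\mathcal{F}_{j-1}$ is $1/\tau_{vj}$ with probability $\tau_{vj}$ and $0$ otherwise, so $\E[a_{vj}\mid \mathcal{F}_{j-1}] = 1$. By the tower property, $\E[a_{vi}a_{vj}] = \E[a_{vi}\, \E[a_{vj}\mid\mathcal{F}_{j-1}]] = \E[a_{vi}] = 1$, so $\cov(a_{vi},a_{vj}) = 0$. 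This gives $\var(\hat n) = \sum_{j=k+1}^{n} \var(a_{vj})$.

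Next I would compute each variance. For $j > k$, the threshold $\tau_{vj}$ is the $k$-th smallest of $j-1$ i.i.d.\ $U[0,1]$ variables, hence $\tau_{vj}\sim\mathrm{Beta}(k, j-k)$. A direct Beta-integral gives $\E[1/\tau_{vj}] = B(k-1,j-k)/B(k,j-k) = (j-1)/(k-1)$. Conditionally on $\tau_{vj}$, the random variable $a_{vj}$ takes value $1/\tau_{vj}$ with probability $\tau_{vj}$ and $0$ otherwise, so $\E[a_{vj}^2\mid\tau_{vj}] = 1/\tau_{vj}$, yielding $\var(a_{vj}) = \E[1/\tau_{vj}] - 1 = (j-k)/(k-1)$.

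Finally I would sum and bound: $\var(\hat n) = \frac{1}{k-1}\sum_{j=k+1}^n (j-k) = \frac{(n-k)(n-k+1)}{2(k-1)}$. Expanding, $(n-k)(n-k+1) = n^2 - (2k-1)n + k(k-1)$, and the inequality $(n-k)(n-k+1) \leq n^2 - n - k(k-1)$ reduces to $2(k-1)(n-k)\geq 0$, which holds for $n\geq k$. Dividing by $n^2$ produces the stated $\mathrm{CV}^2 \leq (1-(n+k(k-1))/n^2)/(2(k-1))$, and dropping the nonpositive term gives the looser $1/\sqrt{2(k-1)}$ bound. The only genuinely nontrivial step is the uncorrelatedness via filtration; everything else is routine Beta algebra, so I do not anticipate a serious obstacle.
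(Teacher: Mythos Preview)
Your proof is correct and in fact tighter than the paper's. The paper takes a somewhat different route: it switches to exponentially distributed ranks and reuses the bound from Lemma~\ref{botkvar:lemma} to obtain $\var(a_{vi}) \leq (i-1)/(k-1)$, then asserts that the $a_{vi}$ are \emph{negatively correlated} so that the variance of the sum is at most the sum of the per-term bounds, giving $\sum_{i=k+1}^n (i-1)/(k-1) = \bigl(n^2-n-k(k-1)\bigr)/\bigl(2(k-1)\bigr)$, which is exactly the numerator in the stated CV bound.

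Your approach differs in two ways and is cleaner on both counts. First, by working directly with uniform ranks and recognising $\tau_{vj}\sim\mathrm{Beta}(k,j-k)$, you compute the per-term variance \emph{exactly} as $(j-k)/(k-1)$ rather than bounding it by $(j-1)/(k-1)$; this yields an exact total variance $(n-k)(n-k+1)/\bigl(2(k-1)\bigr)$ that you then relax to the stated bound. Second, your filtration argument shows the covariances are \emph{exactly zero} (the $a_{vj}$ form a martingale-difference-plus-one sequence), which is both a sharper and more transparent statement than the paper's negative-correlation claim. The paper's route has the minor advantage of recycling the machinery of Lemma~\ref{botkvar:lemma}; yours has the advantage of being self-contained, exact, and avoiding the change of rank distribution.
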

\begin{proof}
When $n_d(v)\leq k$, the estimate is exact (variance is $0$).
Otherwise, (assuming nodes are listed by Dijkstra ranks
$\pi_{vi}\equiv i$), the variance on  $i$ is $\E[1/p-1]$ where $p$ is the probability that
the rank of $v_i$ is smaller than the $k$th smallest
rank among $v_1,\ldots,v_{i-1}$.
We adapt the analysis of Lemma~\ref{botkvar:lemma} for the variance of
the bottom-$k$ estimator.  We use exponentially distributed ranks, and
have, conditioned on $k$th smallest rank $\tau_{vi}$ in $\Phi_{<i}(v)$ variance
$\exp(-\tau_{vi})/(1-\exp(-\tau_{vi}))$.  We compute the expectation of the variance
for
$\tau_{vi}$ distributed according to $b_{i-1,k}$.
This is a similar computation to the proof of Lemma~\ref{botkvar:lemma} and
we obtain that the variance of the adjusted weight $a_{vi}$ is bounded
by $\frac{i-1}{k-1}$.
Estimates for different $i$ are again negatively correlated and thus the variance of the
neighborhood estimate on $n$ is upper bounded by
$\sum_{i=k+1}^n \frac{i-1}{k-1} = \frac{n^2-n-k^2-k}{2(k-1)}$ and the
upper bound on the CV follows.
\end{proof}

  The bound of Theorem \ref{adsRCCV:thm} extends to distance-decay
closeness centralities. 
\begin{corollary}
For a monotone non-increasing $\alpha(x) \geq 0$ (we define $\alpha(\infty)=0$),
$\hat{C}_{\alpha}(i)= \sum_{j\in \ADS(i)} a_{ij} \alpha(d_{ij})$  is
an unbiased estimator of
$C_\alpha(i) =\sum_j   \alpha(d_{ij})$ with CV that is at most $1/\sqrt{2(k-1)}$.
\end{corollary}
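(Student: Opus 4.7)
The plan is to reduce the bound to Theorem~\ref{adsRCCV:thm} by treating $\hat{C}_\alpha(i)$ as a weighted sum whose weights $\alpha(d_{ij})$ are monotone along the Dijkstra order. I would first dispatch unbiasedness by linearity: for every $j$ reachable from $i$ we have $\E[a_{ij}]=1$, and for every unreachable $j$ we have $\alpha(d_{ij})=\alpha(\infty)=0$ with $a_{ij}=0$, so $\E[\hat{C}_\alpha(i)] = \sum_j \alpha(d_{ij}) \E[a_{ij}] = C_\alpha(i)$.

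For the variance, I would list the reachable nodes in increasing Dijkstra rank from $i$ as $j_1,\ldots,j_n$ and set $\alpha_\ell := \alpha(d_{ij_\ell})$. Because $\alpha$ is non-increasing and distances are non-decreasing along this order, the sequence $\alpha_1 \geq \alpha_2 \geq \cdots \geq \alpha_n \geq 0$ is non-increasing. The pairwise covariances $\cov(a_{ij_\ell},a_{ij_m})$ are a property of the bottom-$k$ ADS sampling alone and do not depend on the $\alpha_\ell$, so the negative correlation invoked in the proof of Theorem~\ref{adsRCCV:thm} carries over. Combined with nonnegativity of the weights, this yields
\[
\var(\hat{C}_\alpha(i)) \leq \sum_{\ell=1}^n \alpha_\ell^2 \var(a_{ij_\ell}) \leq \sum_{\ell=1}^n \alpha_\ell^2 \frac{\ell-1}{k-1},
\]
reusing the per-node bound $\var(a_{ij_\ell}) \leq (\ell-1)/(k-1)$ (which is $0$ for $\ell\leq k$) derived in that proof.

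The only genuinely new ingredient is a Chebyshev-type inequality that replaces the direct evaluation $\sum_{\ell=k+1}^n (\ell-1)\approx n^2/2$ used in the cardinality case. From monotonicity of $\alpha_\ell$ we get $\sum_{\ell<m} \alpha_\ell \geq (m-1)\alpha_m$ for every $m$, hence
\[
C_\alpha(i)^2 = \Bigl(\sum_\ell \alpha_\ell\Bigr)^{2} \geq 2\sum_{\ell<m} \alpha_\ell \alpha_m \geq 2\sum_m (m-1)\alpha_m^2.
\]
Combining the two displays yields $\var(\hat{C}_\alpha(i)) \leq C_\alpha(i)^2/(2(k-1))$, and hence $\mathrm{CV} \leq 1/\sqrt{2(k-1)}$. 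The main obstacle is really just this monotone-sequence inequality; the remaining work is essentially verbatim from the preceding theorem, since the weighted negative-correlation variance bound follows immediately from $\alpha_\ell \geq 0$ and the already-established pairwise negative covariances.
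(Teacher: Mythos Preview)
Your proof is correct, and since the paper states this corollary without proof, your argument is precisely the kind of direct extension of Theorem~\ref{adsRCCV:thm} that the paper leaves to the reader: you reuse the per-node variance bound $\var(a_{ij_\ell})\le(\ell-1)/(k-1)$ and the negative-correlation step verbatim, and supply the one missing piece, the monotone-sequence inequality $\bigl(\sum_\ell\alpha_\ell\bigr)^2\ge 2\sum_m(m-1)\alpha_m^2$, which replaces the arithmetic-series evaluation in the cardinality case.

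One remark worth adding: an alternative route, hinted at in the introduction where the paper refers to earlier work that ``used a reduction to basic neighborhood cardinality estimators,'' is to write $\alpha(d_{ij})=\int_0^\infty \mathbf{1}[\alpha(d_{ij})>t]\,dt$ so that $C_\alpha(i)$ becomes a nonnegative superposition of neighborhood cardinalities $n_{d(t)}(i)$, apply the CV bound of Theorem~\ref{adsRCCV:thm} to each level, and then combine via Cauchy--Schwarz on the covariances. Your approach is more elementary and avoids the Cauchy--Schwarz detour; the level-set approach has the advantage that it never unpacks the proof of the theorem and treats the CV bound as a black box.
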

The Corollary holds for the more
  general form \eqref{closenesscdef} when ADSs are computed with
  respect to the node weights $\beta(i)$, see Section
  \ref{nonuniformweights:sec}.
Otherwise, when estimating $Q_g(v)$ using \eqref{Cqueryest}, the variance is
upper bounded as follows:
\begin{corollary}
$$\var[\hat{Q}_g(v)] \leq \sum_{i | v\leadsto i \wedge \pi_{vi}>k}  g(i,d_{vi})^2
\frac{\pi_{vi}-1}{k-1}\ .$$
\end{corollary}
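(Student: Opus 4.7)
The plan is to mirror the argument already used in the proof of Theorem~\ref{adsRCCV:thm} (the neighborhood cardinality case), with the extra observation that multiplying each adjusted weight by the deterministic factor $g(i,d_{vi})$ scales variances and covariances in a controlled way.

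First I would rewrite the estimator as a sum over all reachable nodes,
\[
\hat{Q}_g(v) \;=\; \sum_{i\,:\,v\leadsto i} a_{vi}\, g(i,d_{vi}),
\]
using that $a_{vi}=0$ whenever $i\notin\ADS(v)$. Because the distances $d_{vi}$ and the values $g(i,d_{vi})$ are deterministic (not random), the only source of randomness is the adjusted weights $a_{vi}$. Therefore
\[
\var[\hat{Q}_g(v)] \;=\; \sum_i g(i,d_{vi})^2 \var[a_{vi}] \;+\; \sum_{i\neq j} g(i,d_{vi})\, g(j,d_{vj})\, \cov[a_{vi},a_{vj}].
\]

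Next I would dispose of the cross-terms and the low-rank terms. For nodes with $\pi_{vi}\leq k$, the definition of the bottom-$k$ ADS forces $i\in\ADS(v)$ with probability $1$ and $\tau_{vi}\equiv 1$, so $a_{vi}=1$ deterministically and its variance vanishes. For $\pi_{vi}>k$, the variance computation carried out in the proof of Theorem~\ref{adsRCCV:thm} (exponentially distributed ranks, conditional variance $e^{-\tau_{vi}}/(1-e^{-\tau_{vi}})$, expectation taken against $b_{\pi_{vi}-1,k}$) yields
\[
\var[a_{vi}] \;\leq\; \frac{\pi_{vi}-1}{k-1}.
\]
For the covariance terms, I would invoke the same negative-correlation property of bottom-$k$ inclusion indicators used in Lemmas~\ref{botkvar:lemma} and~\ref{adsRCCV:thm}: conditioning on a rank being small makes other nodes \emph{less} likely to be among the $k$ smallest, so $\cov[a_{vi},a_{vj}]\leq 0$. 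Since $g\geq 0$ by assumption, each cross term $g(i,d_{vi})g(j,d_{vj})\cov[a_{vi},a_{vj}]$ is non-positive and can be dropped from the upper bound.

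Combining these two steps gives exactly
\[
\var[\hat{Q}_g(v)] \;\leq\; \sum_{i\,:\,v\leadsto i\,\wedge\,\pi_{vi}>k} g(i,d_{vi})^2\,\frac{\pi_{vi}-1}{k-1},
\]
which is the claimed bound. The only genuinely non-trivial input is the negative correlation of the $a_{vi}$'s, but this was already justified earlier in the paper, so there is no real obstacle; the result is essentially a weighted repackaging of Theorem~\ref{adsRCCV:thm}. If I wanted to be fully rigorous about the covariance sign without appealing to intuition, I would note that the indicator $\mathbf{1}[i\in\ADS(v)]$ is a monotone decreasing function of the ranks of other nodes (in $\Phi_{<\max(i,j)}(v)$), so standard FKG/association arguments give $\cov[\mathbf{1}[i\in\ADS(v)],\mathbf{1}[j\in\ADS(v)]]\leq 0$, and this extends to the scaled indicators $a_{vi}$ since the scaling is measurable with respect to the same monotone structure.
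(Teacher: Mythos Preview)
Your proposal is correct and follows essentially the same approach the paper intends: the corollary is stated without proof precisely because it is the weighted version of Theorem~\ref{adsRCCV:thm}, obtained by scaling each $a_{vi}$ by the nonnegative constant $g(i,d_{vi})$, invoking the per-term variance bound $\var[a_{vi}]\le(\pi_{vi}-1)/(k-1)$, and dropping the nonpositive cross terms via the same negative-correlation assertion. Your added FKG-style remark goes slightly beyond what the paper spells out, but the core argument is identical.
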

In contrast, we can consider 
the variance of the naive estimator for $Q_g(v)$ that is
mentioned in the introduction.  That estimator uses a MinHash sketch,
which is essentially a random sample of $k$ reachable nodes.  Since
inclusion probabilities are about $\approx k/n$,
where $n$ is the number of reachable nodes from $v$, the
variance in this case is about 
$\frac{n-1}{k-1}\sum_i g(i,d_{vi})^2$.  We can see that 
when $g(i,d_{vi})$ are concentrated (have higher values) on closer nodes, which the MinHash sketch is less likely to include, the variance of the naive estimate can be up to a factor of $n/k$ higher.

\subsection{HIP estimate for $k$-mins and $k$-partition}
  We briefly present the HIP probabilities for $k$-mins and
  $k$-partition ADS.  
 For $k$-mins, a node $i$ is included in
$\ADS(v)$ only if it has rank value strictly smaller than the minimum rank in $\Phi_{<i}(v)$
  in at least one of the $k$ assignments $r_h$ $h\in [k]$.
 Conditioned on fixed ranks of all the nodes $\Phi_{<i}(v)$, we obtain
\begin{equation} \label{kminsRCthreshold}  
\tau_{vi}= 1-\prod_{h=1}^k (1-\min_{j\leq i-1}  r_h(j)) \ .
\end{equation} 
For $k$-partition ADS, we again fix both the rank values and the
partition mapping 
(to one of the $k$ buckets $V_1,\ldots, V_k$) of all nodes in
$\Phi_{<i}(v)$.  We then compute the inclusion threshold, which is the
probability that $i\in \ADS(v)$ given that conditioning.  This is
with respect to a uniform mapping of node $i$  to one of the $k$
buckets and random rank value.  We obtain
\begin{equation} \label{kpartsRCthreshold}
\tau_{vi} = \frac{1}{k}\sum_{h=1}^k
\min_{j\in V_h \cap \Phi_{<i}(v)} r(j)\ ,
\end{equation}
defining the minimum rank
over an empty set $V_h \cap \Phi_{<i}(v)$  to be $1$.

It is simple to verify that the HIP probabilities
are strictly positive when $i$ is reachable from $v$ and that
$\tau_{vi}$, and therefore the respective adjusted
weight $a_{vi}$, can be computed from $\ADS(v)$ when $i\in\ADS(v)$.

\subsection{Lower bound on variance}

  We show that the variance of the HIP estimates is asymptotically
  optimal for $n\gg k$:
 \begin{theorem}  The first order term, as $n\gg k$,  of the
CV of any (unbiased and nonnegative) linear (adjusted-weights based)
estimator of $n_d(v)$ applied to $\ADS(v)$ must be
$\geq 1/\sqrt{2k}$.
\end{theorem}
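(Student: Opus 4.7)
The plan is to establish the lower bound in three stages: a per-element variance bound via Cauchy--Schwarz, summation to obtain the asymptotic $n^2/(2k)$ term, and a control of the cross-covariance contribution. I focus on bottom-$k$ ADS; the $k$-mins and $k$-partition flavors agree to first order because $\Pr[j\in\ADS(v)]=\min(1,k/\pi_{vj})$ up to negligible terms in all three cases.

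For the per-element bound, fix a reachable node $j$ with Dijkstra rank $\pi_{vj}=i$. The adjusted weight $a_{vj}$ must satisfy $a_{vj}\ge 0$, $a_{vj}=0$ when $j\notin\ADS(v)$, and $\E[a_{vj}]=1$. Cauchy--Schwarz applied to $1 = \E[a_{vj}\,\mathbb{1}[j\in\ADS(v)]]$ gives
\[
1\;\le\;\E[a_{vj}^2]\cdot\Pr[j\in\ADS(v)]\;=\;\E[a_{vj}^2]\cdot\min(1,k/i),
\]
hence $\var[a_{vj}]\ge\max(0,\,i/k-1)$. Summing over the $n=n_d(v)$ nodes with $d_{vj}\le d$ (the first $k$ in Dijkstra order are deterministic and contribute zero variance),
\[
\sum_{j}\var[a_{vj}]\;\ge\;\sum_{i=k+1}^{n}\Bigl(\tfrac{i}{k}-1\Bigr)\;=\;\frac{(n-k)(n-k+1)}{2k}\;=\;\frac{n^2}{2k}\bigl(1+o(1)\bigr)
\]
as $n\gg k$, which is exactly the target first-order term.

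The main obstacle is showing that the aggregate cross-covariance $\sum_{i\ne j}\cov(a_{vi},a_{vj})$ is at most $o(n^2/k)$, so that it cannot cancel the dominant term. The approach I propose is a Rao--Blackwell projection onto the filtration $\mathcal{F}_i=\sigma(r(1),\ldots,r(i-1),\mathbb{1}[i\in\ADS(v)])$. Replacing each $a_{vi}$ by $\tilde a_{vi}=\E[a_{vi}\mid\mathcal{F}_i]$ preserves unbiasedness, does not increase per-element variance, and yields the HIP-like form $\tilde a_{vi}=\mathbb{1}[i\in\ADS(v)]\,\phi_i(r(1),\ldots,r(i-1))$ with $\E[\tau_{vi}\phi_i]=1$. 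For the specific choice $\phi_i=1/\tau_{vi}$ (the HIP estimator itself) the martingale property $\E[\tilde a_{vj}\mid\mathcal{F}_i]=1$ for $i<j$ implies that all cross-covariances of the projections vanish exactly, so $\var[\sum_i\tilde a_{vi}]=\sum_i\var[\tilde a_{vi}]$. For any other choice of $\phi_i$, the deviation $\phi_i-1/\tau_{vi}$ is $\mathcal{F}_i$-orthogonal to HIP and contributes nonnegatively to the per-element variance; a direct computation shows its aggregate covariance contribution is of lower order in $n/k$ because each pairwise covariance carries an additional factor $k/j$ or $k/i$ from the inclusion probability.

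Combining the three steps yields $\var[\hat n_d(v)]\ge \frac{n^2}{2k}(1+o(1))$ for any unbiased nonnegative linear adjusted-weights estimator, and therefore $\mathrm{CV}(\hat n_d(v))\ge 1/\sqrt{2k}$ to first order as $n\gg k$, matching the HIP upper bound of $1/\sqrt{2(k-1)}$ up to the vanishing $1/k$ correction. The hardest part is the rigorous covariance analysis in the third paragraph; the per-element Cauchy--Schwarz step is immediate.
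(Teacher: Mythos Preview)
Your per-element step is correct and is exactly what the paper does, only phrased differently: the paper says ``if we had known $p_i$, the best we could do is use inverse probability weighting,'' which is the same Cauchy--Schwarz bound $\var[a_{vi}]\ge 1/p_i-1=i/k-1$ that you wrote down. The summation to $n^2/(2k)(1+o(1))$ also matches.

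The two proofs part ways on the covariance step, and here your approach has a real gap. The paper does \emph{not} attempt anything like your Rao--Blackwell projection. Instead it fixes the one specific ``ideal'' Horvitz--Thompson estimator $a_{vi}=\mathbb{1}[i\in\ADS(v)]\cdot i/k$ and directly bounds the pairwise covariances of the \emph{inclusion indicators}: for bottom-$k$, conditioning on whether $j<i$ is included perturbs $p_i=k/i$ to at least $(k-1)/i$ or at most $k/(i-1)$, so each covariance is $O(1/i)$ and the total is $O(n)=o(n^2/k)$. That is the whole covariance argument in the paper.

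Your projection step does not work as stated. Replacing each $a_{vi}$ by $\tilde a_{vi}=\E[a_{vi}\mid\mathcal{F}_i]$ reduces each \emph{individual} variance, but because the $\mathcal{F}_i$ are different $\sigma$-algebras this is not a Rao--Blackwell reduction of the \emph{sum}: there is no guarantee that $\var\bigl[\sum_i\tilde a_{vi}\bigr]\le\var\bigl[\sum_i a_{vi}\bigr]$. Negative correlations among the original $a_{vi}$ can be destroyed by projecting each term separately, so you cannot reduce to the HIP-like class this way. Your subsequent orthogonality/``lower order'' remarks are about the projected class and do not repair this; and the assertion that deviations from HIP contribute only lower-order covariance is a heuristic, not a computation.

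In fairness, the paper's own argument is informal on the same point: it exhibits the ideal estimator's variance as $\sim n^2/(2k)$ but does not explicitly rule out that some other adjusted-weights estimator exploits large negative cross-covariances to beat it. Both proofs rest on the (plausible but unproved) premise that the cross terms cannot cancel the $n^2/(2k)$ main term; the paper simply checks this for one canonical estimator rather than attempting the general reduction you sketched.
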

\begin{proof}
   The inclusion probability of the $i$th node from $v$ $i\geq k$  in a bottom-$k$ ADS$(v)$
is $p_i =k/i$.  If we had known $p_i$,  the best we could
do is use inverse probability weighting, that is, estimate $0$ if not
sampled and $1/p_i$ if the node is included.  The variance of this
ideal estimator is $1/p_i-1$.  There are very weak negative
correlations between inclusions of two nodes, making them almost
independent (for $i\gg k \gg 1$): The probability $p_i$ given that $j<i$ is included is
$\geq (k-1)/i$ and given that $j$ is not included is $k/(i-1)$.  The
covariance is therefore $O(1/i)$.  The sum of all covariances
involving node $i$ is therefore $O(1)$ and the sum of all covariances is
$O(n)$.
The variance of this ideal estimator on a neighborhood of size $n>k $ is
at least the sum of variances minus an upper bound on the sum of covariances
$\var[\hat{n}] = \sum_{i=k+1}^n \frac{i-k}{k}=\frac{(n+k+1)(n-k)}{2k}
  -(n-k) - O(n)$.  The CV, $\sqrt{\var[\hat{n}]}/n$,  has first order
  term for $n\gg k$, of $1/\sqrt{2k}$.

Similar arguments apply to $k$-mins and $k$-partition ADS.  For
$k$-mins ADS, the inclusion probability in ADS$(v)$ of the $i$th node
from $v$ is $p_i= 1-(1-1/i)^k \approx k/i$, and we obtain the same sum for
$i=1,\ldots,n$ as with bottom-$k$ ADS.  For $k$-partition, the
inclusion probability is $p_i= \E[1/(1+x)]$ where $x\sim B[i,1/k]$.
\end{proof}

\subsection{Permutation estimator} \label{permest:sec}

 The permutation estimator we present here is applied to 
a bottom-$k$ ADS that is computed with respect to ranks $\sigma_i \in [n]$
  that constitute a random permutation of  $[n]$.
In terms of information content, permutation ranks dominate
random ranks $r(i) \sim U[0,1]$, since random ranks  can be associated
based on the permutation ranks $\sigma$.
The main advantage of the permutation estimator
is that we obtain tighter estimates when
the cardinality we estimate is a good fraction of $n$.
The permutation estimator is only evaluated experimentally.



  The permutation estimator, similarly to HIP, is viewed  as
  computed over a stream of elements.  In the graph setting, the stream corresponds to
  scanning of nodes so that first occurrences of nodes are
  according to  increasing distance from $v$.
The entries in $\ADS(v)$ correspond to nodes on which the sketch was updated. A
positive weight is then associated with these updates.
The weight is an estimate of the number of 
 distinct elements scanned from the previous update (or the beginning if it 
 is the first update) to the current one.  We maintain a running
 estimate $\hat{s}$ on the cardinality $s$ of the set $S$ of distinct
 elements seen so far.  When there is an update, $\hat{s}$ is
 increased by its weight $w$.

The first $k$ updates corresponds to the first $k$ distinct elements. 
Each of these updates has weight $1$ and when the cardinality $s\leq 
k$, our estimate is exact  $\hat{s} = s$.

 Consider now an update that occur after the first $k$ distinct
 elements.  Let $\mu>k$ be the
$k$th smallest rank in $S$ (which is the $k$th smallest permutation
rank in the bottom-$k$ sketch).

We now argue that after an update, the expected number of distinct
nodes until we encounter the next update is 
$w'=\frac{n-s+1}{\mu-k+1}$.     To see this, note that
the number of nodes in $S$ with permutation rank $\mu$ or below
is $k$.  So there are $\mu-k$ remaining
nodes with rank smaller than $\mu$ amongst those in $[n]\setminus S$.
The expectation is that of sampling without replacement until we find
a node with permutation rank below $\mu$.

 When the update occurs, we would like to compute $w'$ and update our
 estimate $\hat{s}$.
But we actually do not know $s$. So  instead we plug-in the unbiased estimate
 $\hat{s}$ to obtain
$w= \frac{n-\hat{s}+1}{\mu-k+1}$.  We then update the bottom-$k$
sketch (and $\mu$ if needed) and $\hat{s}\gets \hat{s}+w$.

   Note that when $\mu=k$, that is, the $k$ smallest elements of the permutation, those
   with
$\sigma_i\leq k$,  are  included in $S$, the probability of an update
is $0$ as the sketch is saturated.  We then need to correct the
estimate to account for the number of nodes that are farther than the
nodes with permutation rank $[k]$.   The correction is computed as follows.

  If the cardinality is $x$, then conditioned on it including all
  the  elements with permutation ranks $[k]$, the expected number of elements
  that are farther than all the elements with permutation ranks in $[k]$ is
 $\frac{x-k}{k+1}$.  So the expected number of elements till the last update 
is $x'=x-\frac{n-k}{k+1}$.   Note that our estimate $\hat{s}$ was
unbiased for $x'$.

Solving $x-\frac{x-k}{k+1}=x'$ for $x$ we obtain the relation
$x=x'\frac{k+1}{k}-1$.  We plug-in $\hat{s}$ for $x'$ and obtain the correction
$\hat{x}= \hat{s}\frac{k+1}{k}-1$.  This correction is used when our
sketch
contains the $k$ elements of permutation ranks $[k]$.


\ignore{
\subsubsection{Permutation estimator on by count ADS}

  We now discuss how the random permutation can be efficiently
  associated with node ID's.   Also in a way so that the permutation
  position can be retrieved efficiently from the node ID (this is not
necessary but useful).

  Suppose we have (on disk) a sorted list of node IDs.  Suppose number
  of nodes is a power of two (assumption can be removed).

  We maintain a tree of all prefixes of the leading 0's followed by
  combination of precision bits.  The tree is of size about $k+\log
  n$.  With each leaf we associate a count of the number fo completions
  to a $\log n$ bit representation.  With each internal node we
  associate a sum of counts of the leaved under it.
  When processing a node ID we trace the tree according to a hash
  based bit string, choosing left or right children according to the
  bits and availability until a leaf is reached.  When a node is
  traversed the count is decremented by 1.   If the count reached 0,
we record the ID of the item at the node (this is the threshold ID
where the node was used.)  The leaf reached is the permutation
position of the node.

  The completed tree after processing with the hash function allows us
  to obtain the compact permutation position from the node ID.

  We can make this completely unbiased by guaranteeing uniqueness with
high probability.  Then we take the expectation of $1/(n-\pi_v)$ over
all completions.
}
\ignore{
\paragraph{k-mins versus bottom-k on the by count permutation estimator}

 What is better, averaging $k$ bottom-1 estimates (aka $k$-mins estimator)
or using one bottom-k estimator ?

  In terms of memory requirement, in parallel computing $k$ bottom-$1$
requires $k\log\log n$ per node ($\log\log n$ precision bits for each bottom-$1$ by count ADS out of $k$ copies.).   Computing one bottom-$k$ requires more
precision, $\log\log n + \log k$ per node. 
We also need to retain the $k$ smallest ranks.  The values are concentrated
around $1/h$ for neighborhood of size $h$ 
 (their differences when $k\ll h$ are also about $1/h$).
For each difference, we can only record the change in the number of
leading 0's and precision bits.  
The probability that the change is $c$ decreases exponentially,
it is at most $\exp(-c)$.  Thus the expected change is $O(1)$.
Turns out to represent the compact permutation ranks of the bottom-$k$
nodes we need $\log\log n +k+k\log k$ in expectation.

  The bottom-$k$ approach is beneficial if $k<\log n$.  Also we can
parametrize $k$ according to maximum of $\log n$ and available memory.

 The benefit to the estimate is
the same, where the relative error is inversely proportions to $\sqrt{k}$.
}

\subsection{Simulations}

We use simulations to study the Normalized Root Mean Square Error
(NRMSE), which corresponds to the CV
when estimator is unbiased,
and the {\em Mean Relative Error} (MRE), defined as
$\E[|n-\hat{n}|]/n$ of the basic, HIP, and permutation  neighborhood-cardinality
estimators. 
  We evaluated
the basic estimators for all three flavors and the bottom-$k$ HIP
estimators. We use
sketches with full ranks, because the optimal basic estimators
are well understood with full ranks. 
Actual representation size for ``full'' ranks is discussed in Section \ref{baseb2:sec}.

 The cardinality $n_d(v)$ is estimated from nodes in $\ADS(v)$ of
 distance at most $d$.   The structure of the ADS and the
behavior of the estimator as a function of the cardinality $n_d(v)$ do
  not depend on the graph structure.   
When nodes are presented in increasing distance 
from $v$, the ADS only depends on the ranks assigned to these nodes.
Our simulation is therefore
  equivalently performed on a stream of $n$ distinct elements, and ADS
  content
is built from the randomized ranks assigned to these elements.
After processing $i$ distinct element, we obtain an estimate of $i$
from the current ADS.  We do so for each cardinality.  We use
multiple runs of the simulation, which are obtained by different randomization of
ranks.
In case of the permutation estimator, the ranks we use are permutation
ranks from a random permutation on all $n$ nodes.  For other estimators,  the
estimate for a certain cardinality does not depend on the total number of nodes.

  Figure~\ref{nsizeest:fig} shows the NRMSE  and the MRE estimates by
average of multiple simulation runs.  We also provide,
for reference, the exact values of the CV ($1/\sqrt{k-2}$)
and MRE ($\approx\sqrt{2/(\pi(k-2))}$) of the $k$-mins basic estimator.  These values are
independent of cardinality and upper
bounds the respective measures for the basic bottom-$k$ estimator.

 Looking at basic estimators,
we can see that (as expected from analysis) for $n\gg k$, the error is
similar for all three flavors and the NRMSE is around $1/\sqrt{k-2}$.
For smaller values of $n$, the bottom-$k$
estimator is more accurate than the $k$-mins which in turn is more
accurate than the $k$-partition estimator:  The bottom-$k$ estimator
is exact for $k\leq n$ and then the relative error slowly increases
until it meets the $k$-mins error.
  We can observe that, as explained by analysis, the
$k$-partition estimator  is less accurate for $n\leq 2k$.

  The figures also include the first-order term (upper bound) for
HIP.  The results for the bottom-$k$ HIP estimator 
clearly demonstrate the improvement of the HIP
estimators: We can see that the error of the
bottom-$k$ HIP  estimator is a factor of $\sqrt{2}$ smaller than that
of the basic bottom-$k$ estimator.   
The figures also demonstrate the benefit of using our permutation
estimator:  The NRMSE and MRE of the permutation estimate were always
at most that of HIP.  The two are comparable when the
estimated cardinality is at most $0.2 n$.  When it exceeds $0.2 n$,  we observe a
significant advantage for the permutation estimator over plain HIP.


\begin{figure}[htbp]
\centerline{
\begin{tabular}{cc}
\ifpdf
\includegraphics[width=0.21\textwidth]{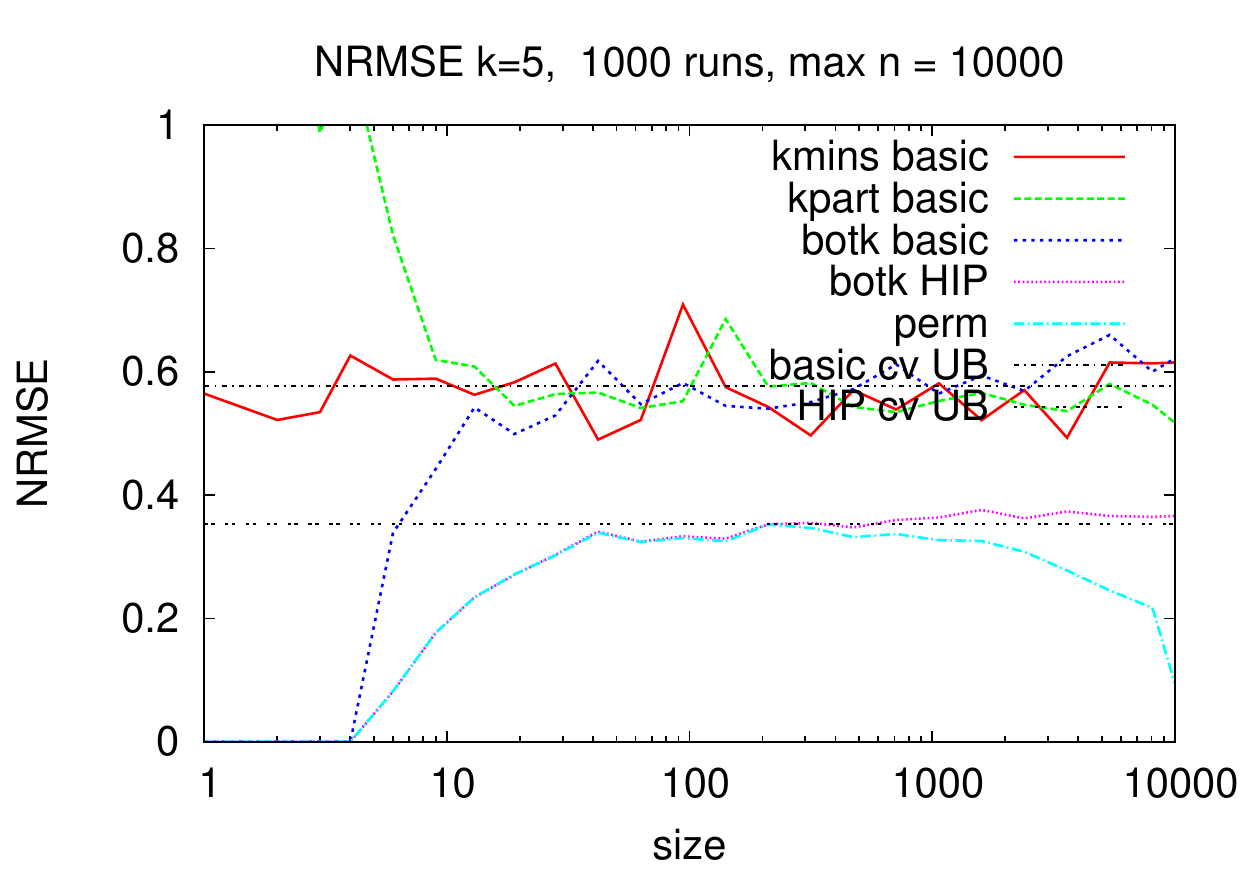} &
\includegraphics[width=0.21\textwidth]{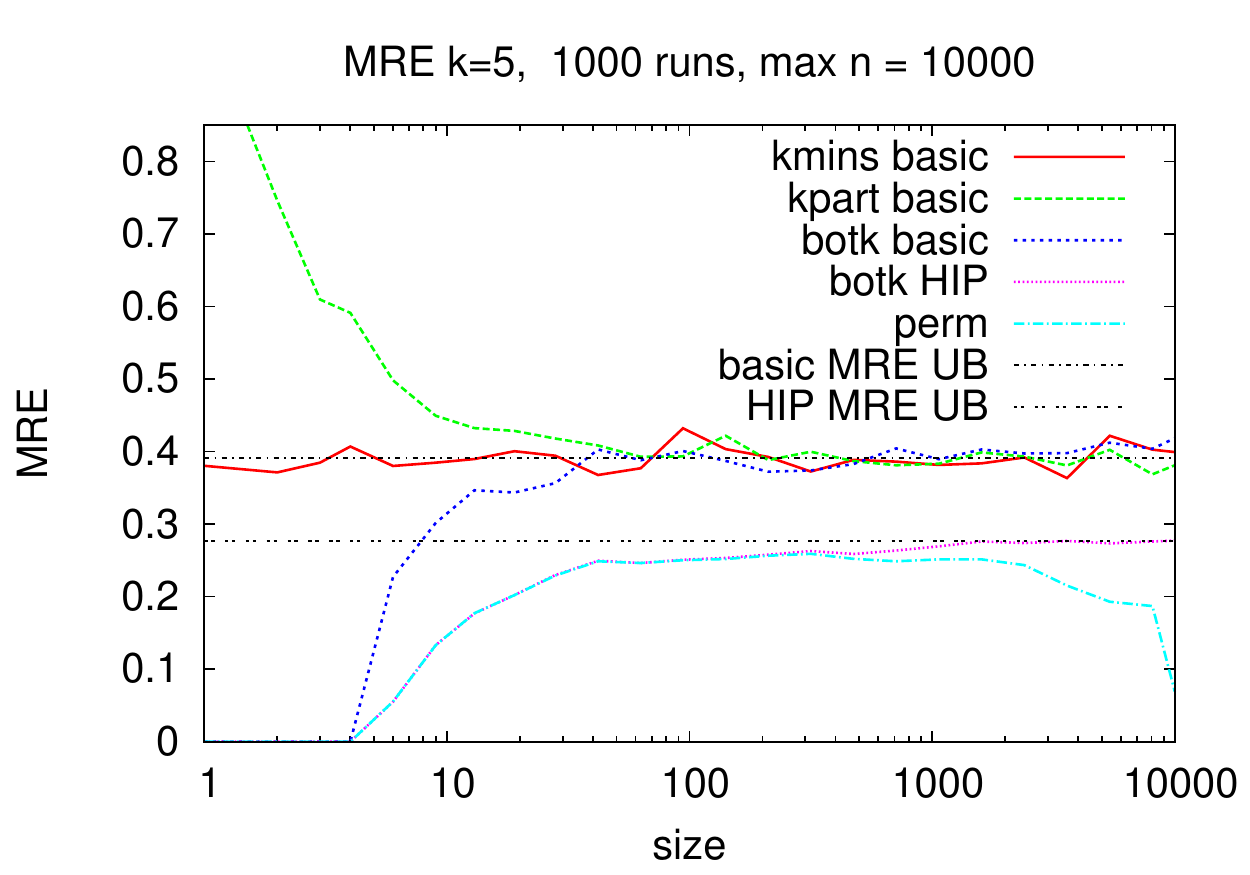} \\
\includegraphics[width=0.21\textwidth]{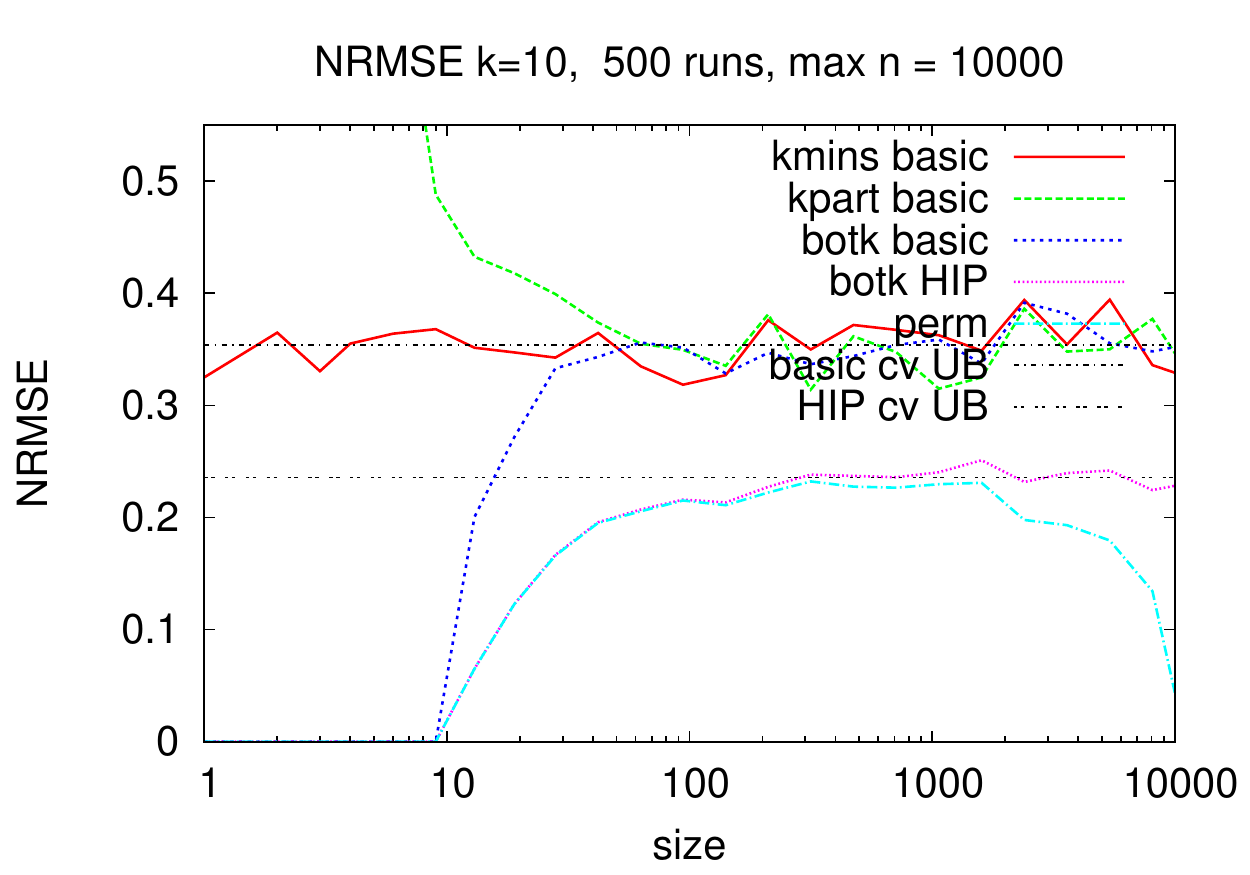} &
\includegraphics[width=0.21\textwidth]{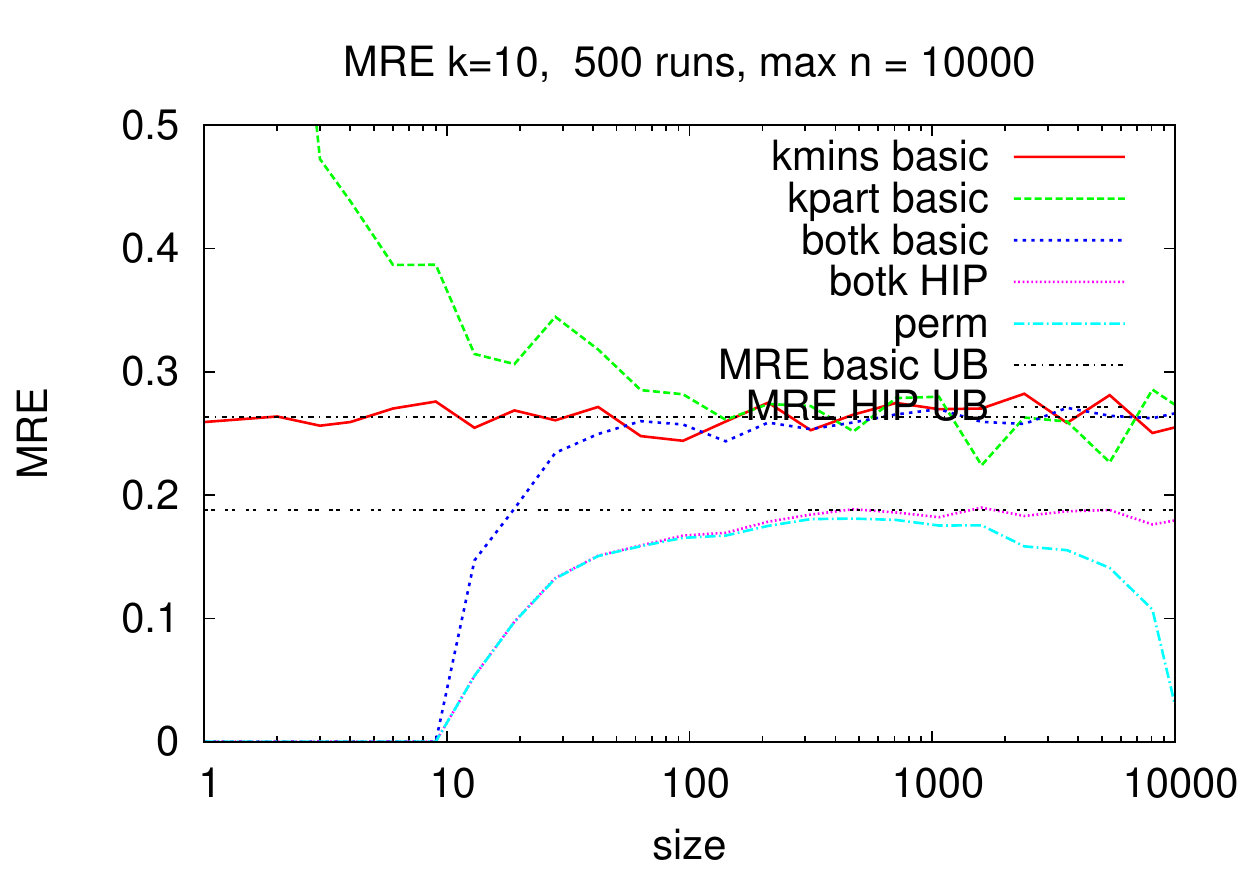} \\
\includegraphics[width=0.21\textwidth]{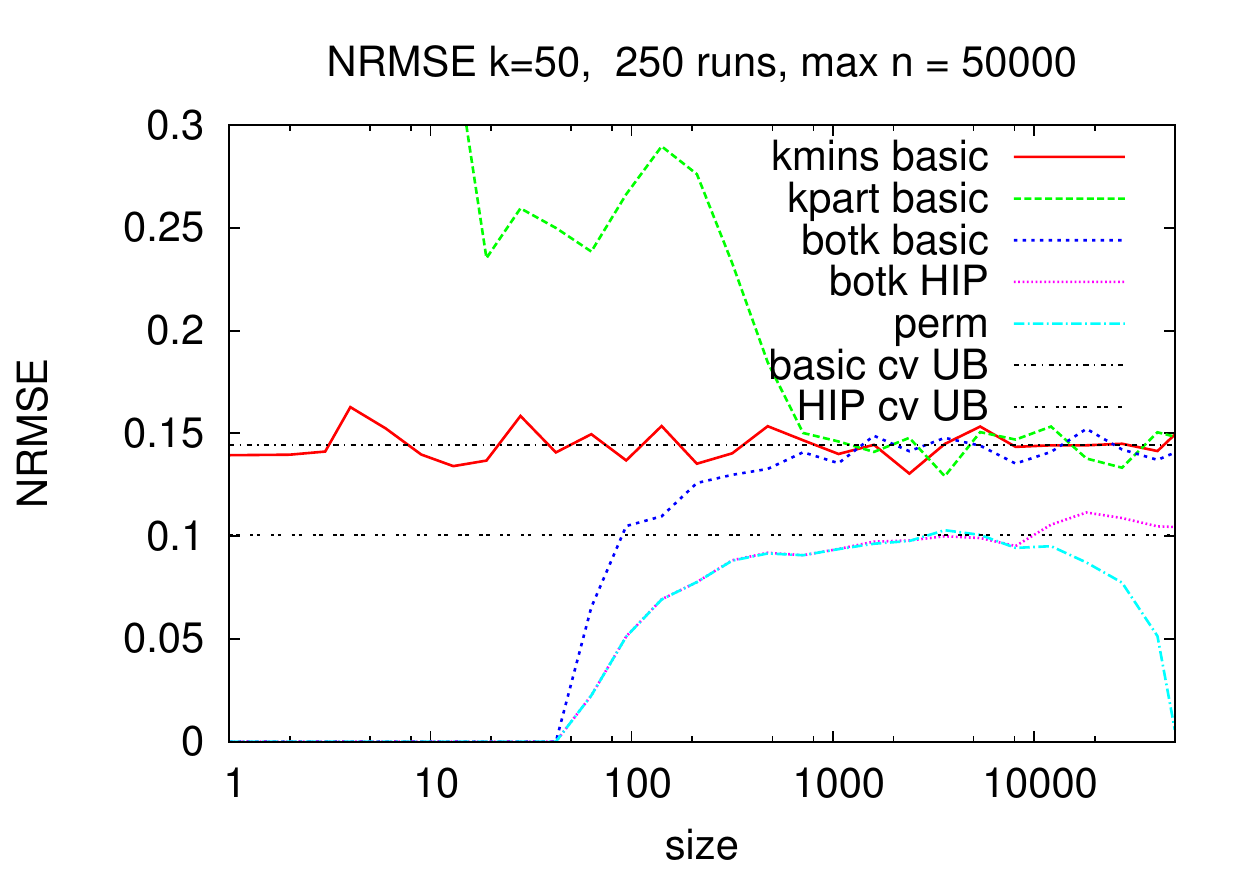} &
\includegraphics[width=0.21\textwidth]{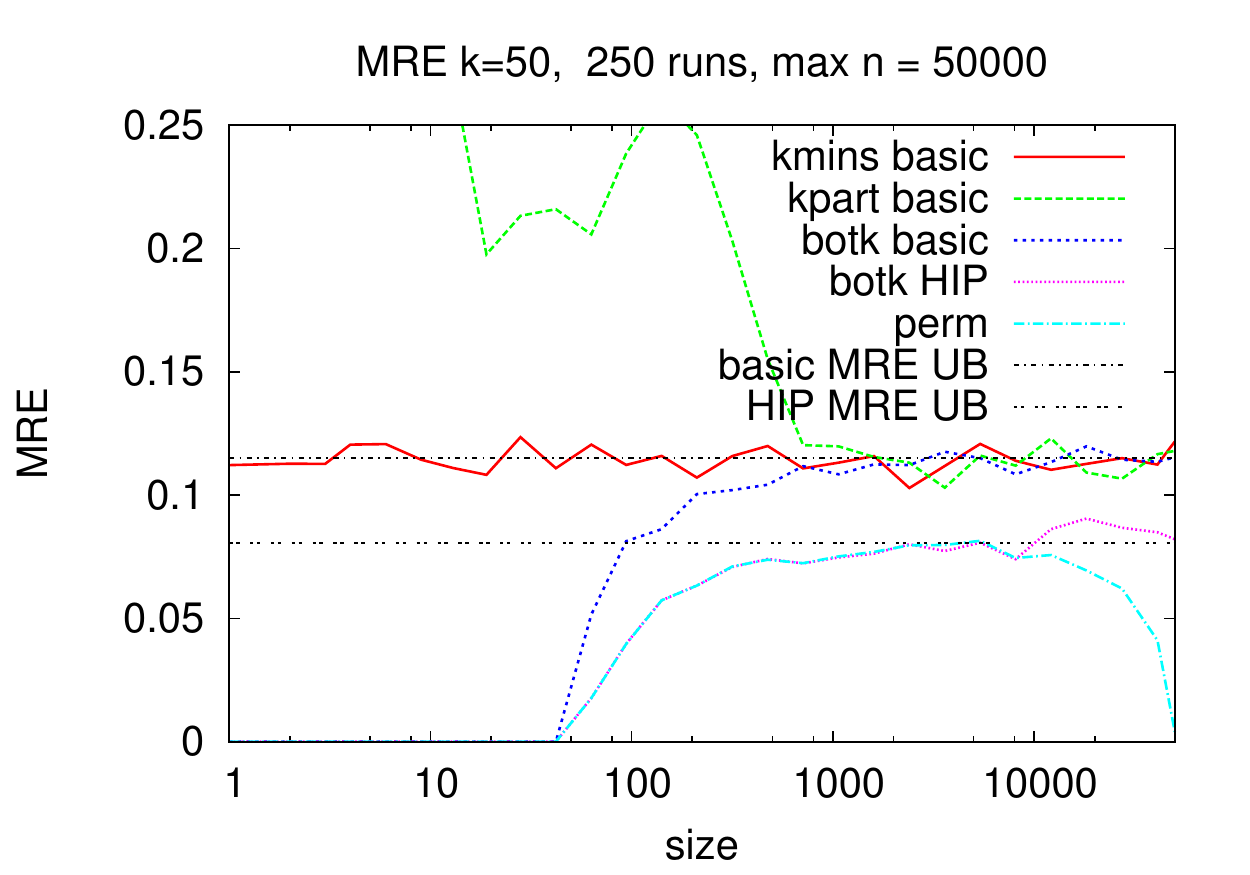} \\
\else
 \epsfig{figure=resNRMSE10000k5r1000.eps,width=0.21\textwidth} &
 \epsfig{figure=resMRE10000k5r1000.eps,width=0.21\textwidth} \\
 \epsfig{figure=resNRMSE10000k10r500.eps,width=0.21\textwidth} &
 \epsfig{figure=resMRE10000k10r500.eps,width=0.21\textwidth} \\
 \epsfig{figure=resNRMSE50000k50r250.eps,width=0.21\textwidth} &
 \epsfig{figure=resMRE50000k50r250.eps,width=0.21\textwidth} \\
\fi
\end{tabular}
}
\caption{ NRMSE (normalized root mean square error) and MRE (mean
  relative error) of neighborhood size estimators with $k=5,10,50$, as
  a function of neighborhood size, averaged over multiple runs.
We show $k$-mins, bottom-$k$, and $k$-partition basic estimators and
our bottom-$k$ HIP and permutation estimators.  For reference, we
also show the exact values $1/\sqrt{k-2}$ and $1/\sqrt{2(k-1)}$ of the CV  of
the basic and HIP $k$-mins estimators.  These are upper bounds on the
CV of respective bottom-$k$ estimators.  We also show
$\sqrt{\frac{2}{\pi(k-2)}}$ for the MRE of the basic $k$-mins
estimator and $\sqrt{\frac{1}{\pi(k-1)}}$ as a reference MRE for HIP.
 \label{nsizeest:fig}}
\end{figure}

\ignore{
\subsection{Covariances}

***


Neighborhood size estimates $\hat{n}_d(v)$ for
  different $v$ can be dependent, and therefore in general, the
  variance of the sum of such estimates is larger than the sum of
  variances.  We study the dependence. 

  We denote the event and respective 
probability that {\em $i$ is sampled for $j$} ($i\in \text{ADS}(j)$)
by $A^{(k)}_{ij}$ recall that 
$$\Pr[A^{(k)}_{ij}] \equiv  p^{(k)}_{ij}=\min \{1,
  \frac{k}{n_{d_{ji}}(j)}\}\ .$$
We say that the distance $d_{ij}$  {\em is sampled} if
$D_{ij}  \equiv A_{ij} \vee A_{ji}$, that is, either
 $j$ is sampled for $i$ or $i$ is sampled for $j$. 
 Note that the
 probabilities of the two events can be 
different.  

***

  The pairwise adjusted weights can be correlated.  The inclusion of a
  node $i$ in ADS$(j)$ and ADS$(v)$ are strongly positively
  correlated.

   When estimating the sum of sizes of neighborhoods, the covariance
   is positive and depends on the intersection of the neighborhoods.

 We now consider estimating the sum of sizes of  multiple subsets 
(say from ADS of multiple nodes).  The subsets can have nodes included
in multiple subsets, which are accordingly counted multiple times.  
Covariance is zero between counts of different nodes.  The total
covariance is the sum over nodes, and then considering all occurrences of
the node $w$ in subset for $v$s, and the neighborhood size
$n_{d(v,w)}(v)$ for each $v$.  The covariance between a pair is
bounded from above by $p_1(1/p_1 1/p_2)-1 = 1/p_2-1$, where $p_2$
is the smaller probability (corresponding to the larger neighborhood).
If the neighborhood sizes for $w$ in the different nodes $c$ are in
non-decreasing order $k<n_1,\ldots,n_c$ (we only consider neighborhood
larger than $k$, since otherwise variance is $0$) the contribution to the
covariance is bounded by $\sum_{i=1}^c \frac{n_i-k}{k}$.

The covariance is $0$ when the $d$ neighborhoods of the two nodes do not
overlap.  Otherwise, covariances are positive.  When there is full
overlap, the covariance is equal to the variance.  
At the extreme, when all neighborhoods are identical,  the variance
of $\hat{S}(d)$ is equal to $n^2 V$ whereas $S(d)=n N$, so the CV
 is the same as for a single neighborhood 
(depends on the estimator but $\propto \sqrt{k}$).
At the other extreme, when all neighborhoods are same size and disjoint (can not really happen), the
variance is $nV$ and the CV decreases proportionally to $\sqrt{kn}$.
 Experiments on real  graphs (with the $k$-partition estimator) 
show that when $d\ll D$,  the estimates on $S(d)$ converge  quickly~\cite{hyperANF:www2011}.
When there is a large strongly connected component and $d$ is such
that the $n_d(v)$ are a significant fraction of the total nodes, 
estimates are correlated and 
the CV of the sum is $\propto \sqrt{k}$.  In this
case, the permutation estimator is much more accurate.
}

\subsection{HIP with base-$b$ ranks}  \label{limprecRC:sec} 

The application and analysis of HIP estimators carries over,
retaining unbiasedness even with
collisions.  To do so we need to compute the HIP probability $\tau_{vi}$, which depends on the
  ranks of all closer nodes, taking the discretization into account
  (instead of using the full-rank expressions).



 When using base-$b$ ranks, however, the HIP
probability $\tau_{vi}$ is a ``rounded down'' form of
the corresponding full rank probability.   Since the probability is
strictly smaller than with the full ranks, the contribution to the variance of the estimate, which is
$1/\tau_{vi}-1$, is higher.
We perform a back-of-the-envelope calculation which shows that $\tau_{vi}$ can be expected to
increase by a factor of  $\frac{1+b}{2}$, which implies the
same-factor increase in variance:
 Considering a range between discretized values, $a=1/b^i$ and $ba=1/b^{i-1}$, and assuming the full rank $x$ lies uniformly in that interval.
The full-rank inclusion probability is $x$ whereas the rounded-down one is $a$.
We consider the expectation of the ratio $x/a$.  This expectation is
$$\frac{1}{a(b-1)}\int_a^{ba} \frac{x}{a} dx = \frac{b+1}{2}\ .$$

 Simulations in the next section show that this calculation is fairly
 accurate.   We can use this calculation 
to find a sweet spot for the base $b$, considering the tradeoff between
representation size  and variance.
The CV as a function of $k,b$ is
$\sqrt{\frac{(1+b)}{4 (k-1)}}$.  The representation size depends on
application.  If sketch is only used for counting, maintaining few
bits for counter, there is diminishing value with smaller bases.  If
the sketch is used as a
sample which supports segment statistics (selection) and stores meta-data (or
node IDs, then $k$ is the dominant term and it is beneficial to work
with full ranks.

\ignore{
******************************
Suppose we represent ranks with
  the exponent and $b$ significant bits.
For a rank value in $[0,1]$,  the exponent is 
the number of leading zeros in the binary representation and the $b$
subsequent bits.}\ignore{\footnote{
For permutation position (Section \ref{permest:sec}), we consider $\lceil \log_2 n
\rceil$ bit representation with the exponent being the
number of leading zeros and significant bits follow. The exponent of a
minimum rank value in a set of $n$ nodes has expected size $\log\log n$.}}
\ignore{ Since both our basic and HIP estimators 
have CV of the order of $1/\sqrt{k}$, we can see that
with $b = \log k$ precision bits, little accuracy is lost.}
\ignore{
In particular, during DP computation, we only need $k\log\log n +
k\log k$ bits per node in active memory.}
\ignore{  For the HIP and permutation estimators adapted to limited precision ranks, }
\ignore{
The variance of the estimator decreases with $k$ and increases with $b$ whereas
representation size increases with $k$ and decreases with $b$.
We would like to minimize variance for a given memory size.  It turns out that the sweet spot is
obtained by retaining the exponent and few precision bits:
\begin{lemma} \label{exponent:lemma}
It is most effective to use $b=\Theta(\log\log\log n)$ with limited
DP  ADS computation.
\end{lemma}
\begin{proof}
For given $b$, the conditioned inclusion probability is increased
by a factor of $1+2^{-b}$ (by requiring that rounded-down new rank is
strictly smaller than rounded-down current threshold).  
On the other hand, variance decreases when we increase $k$,
proportionally to $1/(2k-1)$.  So we have a combined factor of
$f(k,b)=\frac{1+2^{-b}}{2k-1}$.
  The number of memory bits needed to store $k$ rank values is
$s= k (b+\log\log n )$.
We  are therefore interested in the optimal
tradeoff of $s$ and $f(k,b)$, that is, for given $s$, the $k$ and $b$
selections which result in minimum variance.
We now compare $\frac{\partial f}{\partial b}/\frac{\partial
  s}{\partial b}$ and 
$\frac{\partial f}{\partial k}/\frac{\partial
  s}{\partial k}$, looking for the $b$ value where there is equality.
\end{proof}
}
\ignore{
Limited ADS computation means that the counters can be stored in memory, even for
  fairly large graphs.  One potential optimization is to tune the choice of $k$ so that the
  counters can be kept in memory and then repeat the computation
  multiple times for accuracy.  
}

\ignore{
Considering the exponent size, about $2^{-i}$ fraction of nodes have
exponent size $i$ (which is represented using $\log_2 i$ bits).  This means that the total representation size of
the  rank assignment is $O(b)$ per node.  The threshold ranks during
the computation, however, are small, and therefore representation size
of the $k$ threshold ranks (using DP computation) uses $k(b+\log\log
n)$ bits per node.

We can see that when using $b$ bits of precision (beyond the
exponent), the variance increases by at most
$2^{-b}$ fraction over infinite precision rank values.
This means that even if we can store more bits, it
suffices to work with a small value of $b$.
If we work with rank values that are rounded to inverse powers of $2$
(the probability of a rank value of $2^{-i}$ is $2^{-i}$, we loose at
most a factor $2$ in variance over using infinite precision.  )

Limited ADS computation, using only the exponent representation
($b=0$) was used in~\cite{PGF_ANF:KDD2002,hyperANF:www2011} with
$k$-partition and $k$-mins ADS.  The estimators used were approximate
distinct counters \cite{DurandF:ESA03} which essentially work with
$b=0$.  It was experimentally observed in \cite{hyperANF:www2011} that
using more bits is less effective than increasing $k$.  Lemma
\ref{exponent:lemma} formalizes and supports this observation.
We remark that bottom-$k$ implicit DP computation can not be performed
with limited-precision ranks since because of collisions (multiple
nodes can have the same rank) we can not tell that updates
are due to distinct nodes and our bottom-$k$ list can include
multiple entries due to the same node.
}

\section{Approximate distinct counting} \label{distinctcount:sec}
\begin{figure}[htbp]
\centerline{
\begin{tabular}{cc}
\ifpdf
\includegraphics[width=0.21\textwidth]{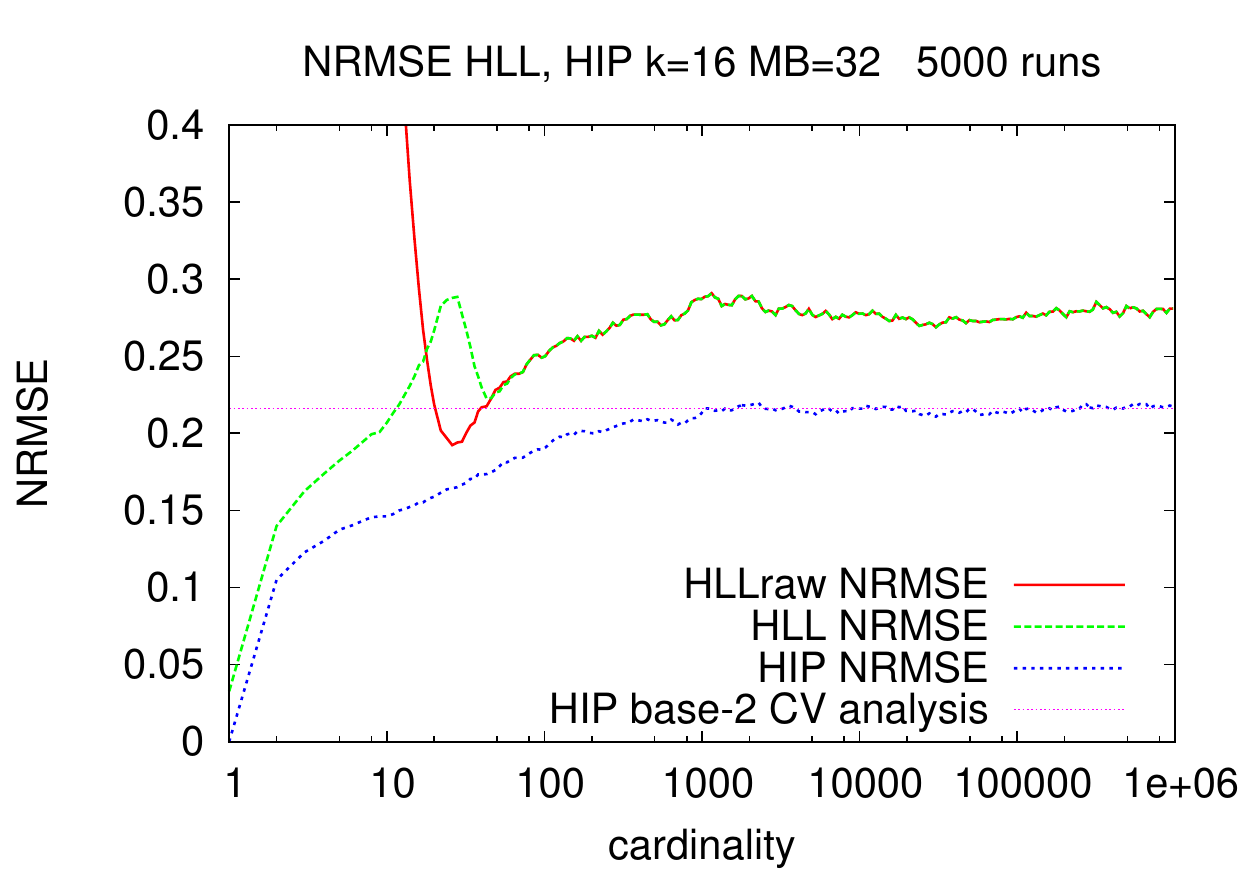} &
\includegraphics[width=0.21\textwidth]{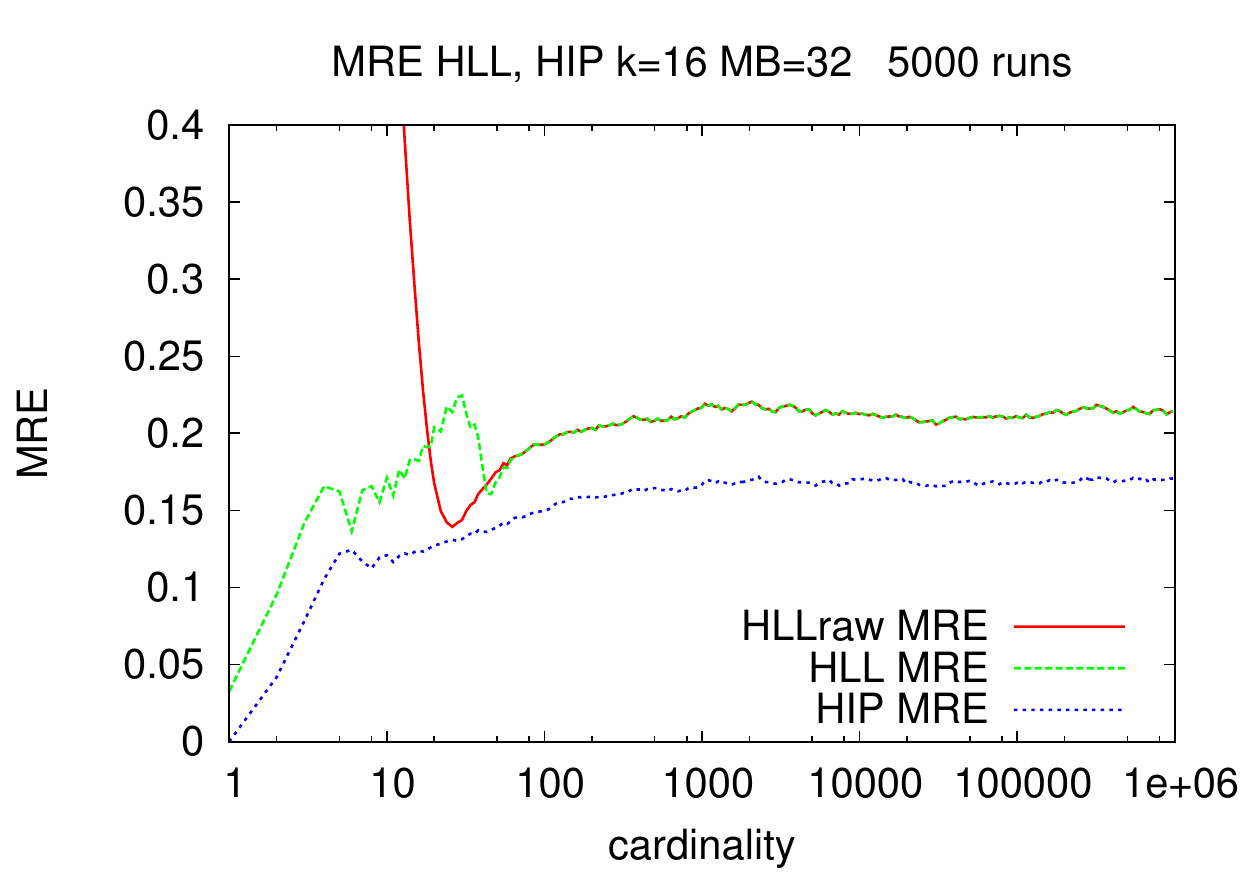} \\
\includegraphics[width=0.21\textwidth]{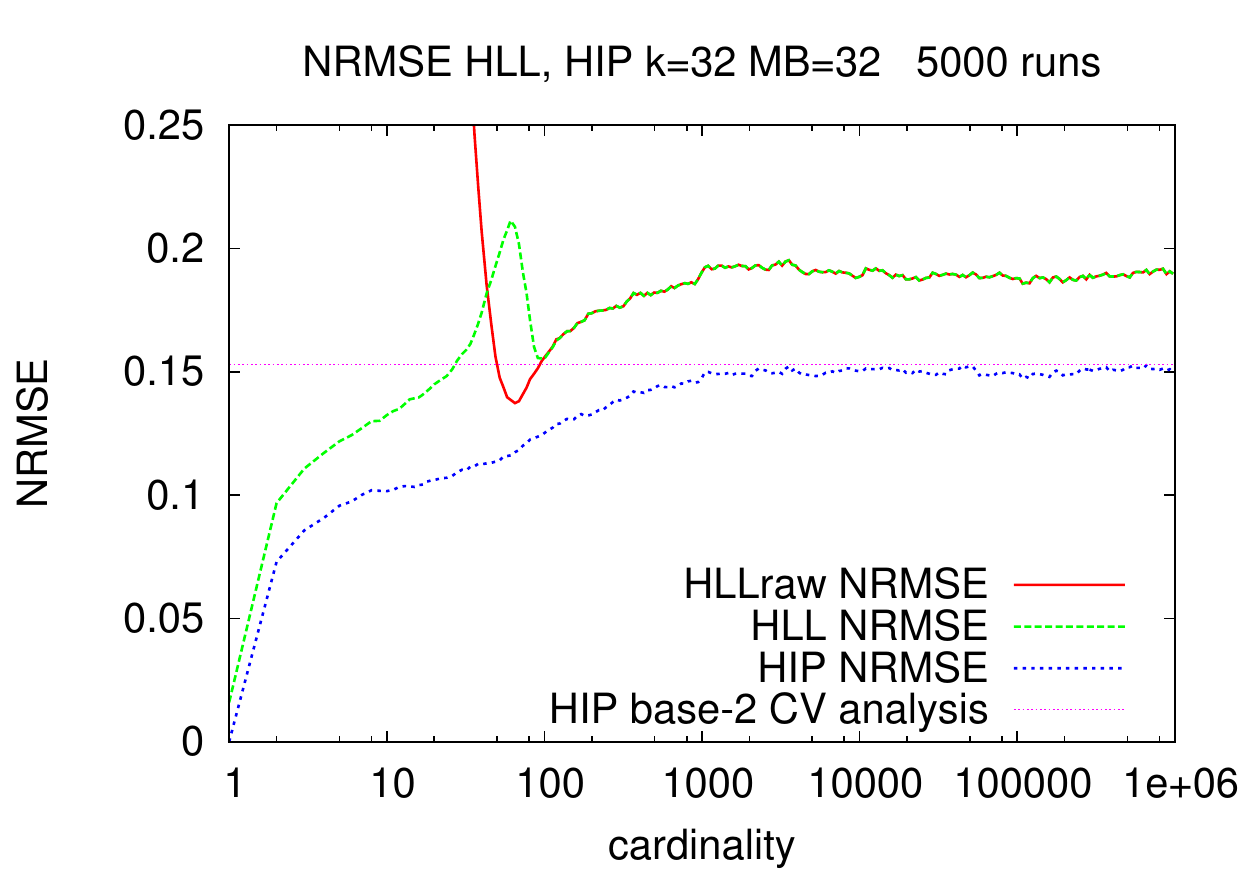} &
\includegraphics[width=0.21\textwidth]{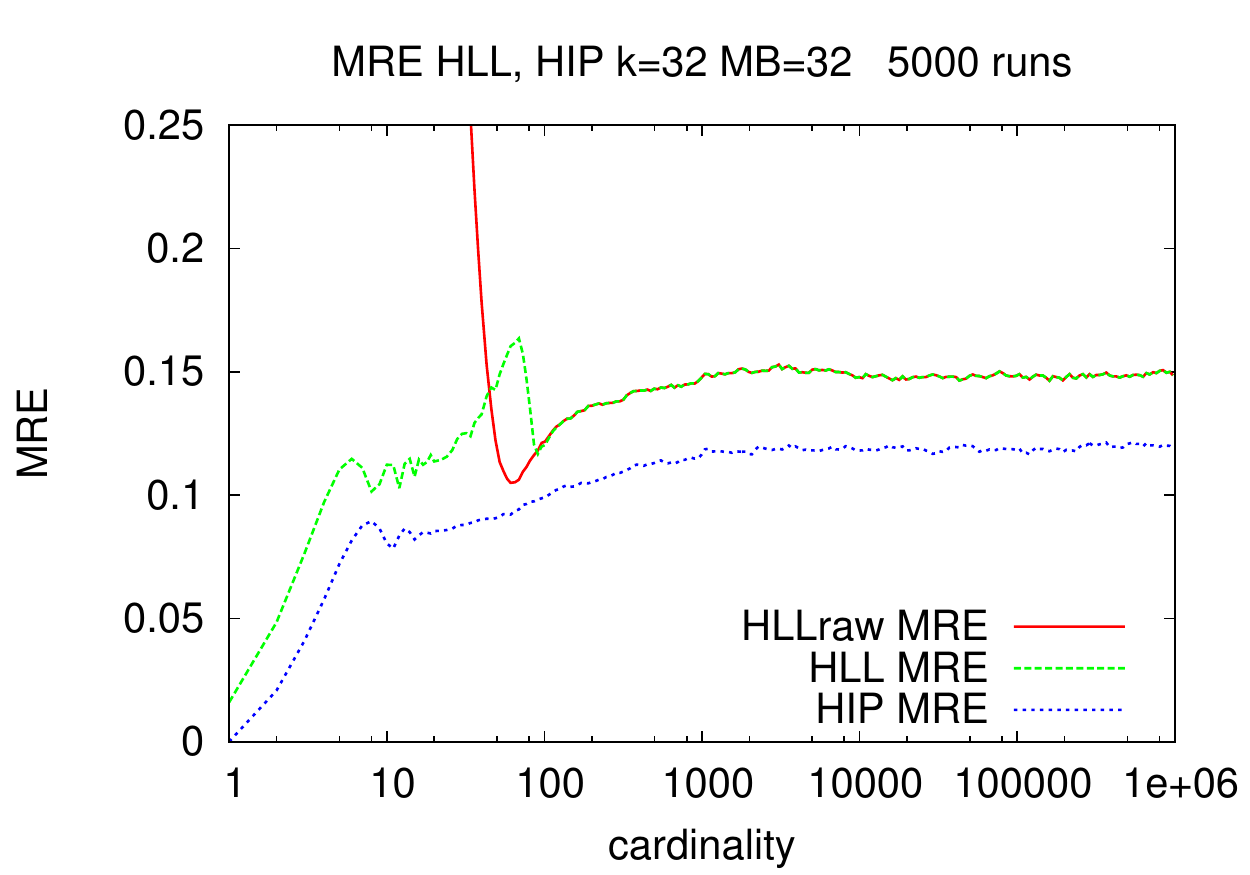}  \\
\includegraphics[width=0.21\textwidth]{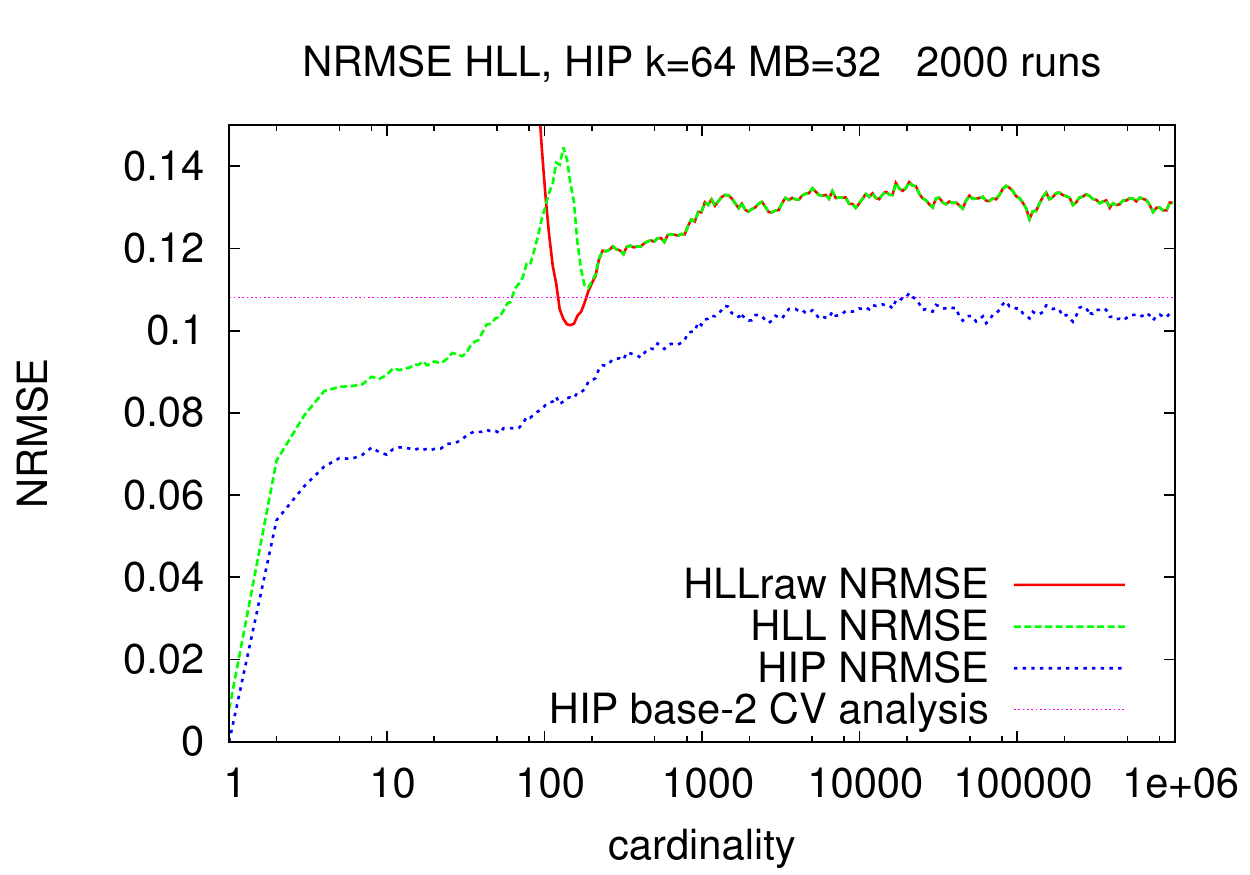}  &
\includegraphics[width=0.21\textwidth]{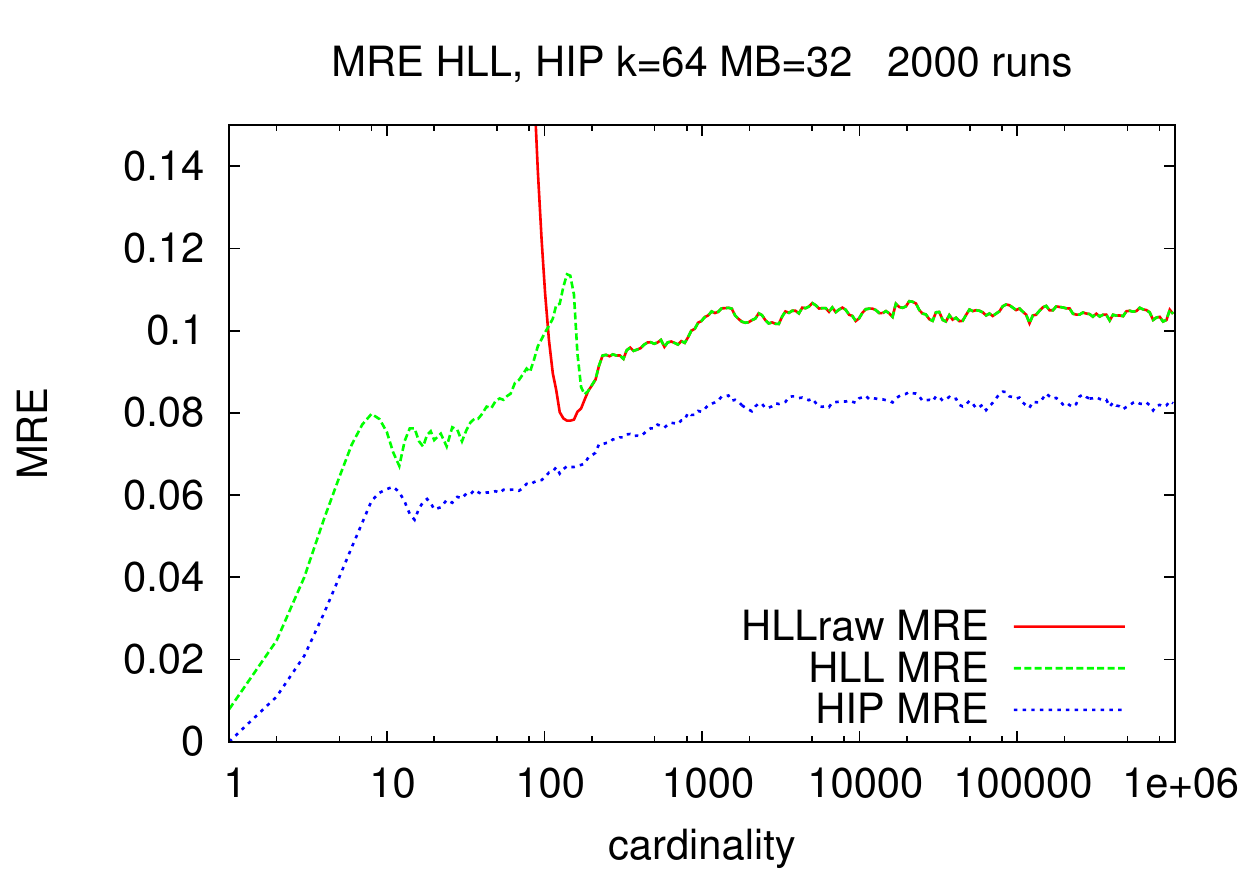} \\
\else
 \epsfig{figure=res_HLL_HIP_k16_r5000_MSE.eps,width=0.21\textwidth} &
 \epsfig{figure=res_HLL_HIP_k16_r5000_MRE.eps,width=0.21\textwidth} \\
 \epsfig{figure=res_HLL_HIP_k32_r5000_MSE.eps,width=0.21\textwidth} &
 \epsfig{figure=res_HLL_HIP_k32_r5000_MRE.eps,width=0.21\textwidth} \\
 \epsfig{figure=res_HLL_HIP_k64_r2000_MSE.eps,width=0.21\textwidth} &
 \epsfig{figure=res_HLL_HIP_k64_r200_MRE0.eps,width=0.21\textwidth} \\
\fi
\end{tabular}
}
\caption{HIP and HLL raw and bias-corrected estimators.  Applied with $k=16,32,64$
and 5-bit counters.  (k-partition base-2 MinHash sketches) \label{HLLHIP:fig}}
\end{figure}

Our HIP estimators (with full or base-$b$ ranks) can be used to approximate
the number of distinct elements in a stream.   
A MinHash sketch is maintained for the
distinct elements on the prefix of the stream that is processed.


To apply HIP, we augment the MinHash sketch with an additional
register
which maintains an 
 approximate count of the number of distinct 
elements.    
Each time the sketch is updated, we compute the adjusted weight of the 
element 
and accordingly increase the count by that amount. 
Since  the expected total number of updates is $\leq k\ln 
n$, where $n$ is the number of distinct elements in the stream (see 
Lemma \ref{updates:lemma}), 
the additional work performed for an update balances out as 
a diminishing fraction of the total stream computation.

An explicit representation of the additional counter as an approximate
counter (see Section~\ref{approxcount:sec})
would require storing the exponent, which is of size $\lceil \log\log
n\rceil+1$ and $\lceil \log_2 \sqrt{(4k/3)}+4\rceil$ significant bits
(precision with respect to the CV of HIP).  The exponent can more
efficiently be stored as an offset to the exponent values stored in
the sketch, removing its dependence on $n$.  Thus, using only $O(\log_2 k)$
for the approximate count.  An even more compact representation
of the approximate count also eliminates the dependence on $k$, and
requires only few bits in total.  To do that we represent the HIP estimate as a
correction of a basic estimate obtained from the MinHash sketch. The
correction can be expressed as a signed multiplier of
$\hat{n}/\sqrt{k}$ using a fixed number of bits.  When the sketch is
updated, we recompute the basic estimate $\hat{n}$ and accordingly
update the correction to be with respect to the new HIP estimate.

  HIP is very flexible.  It applies to all three sketch flavors, to
  full and to base-$b$ ranks, and also works with truncated registers that can
  get saturated.  In this case we simply take the update probability
  of a saturated register to be $0$.  The HIP estimate quality
  gracefully degrades with the number of saturated registers.
  Eventually, if all registers are saturated, the HIP estimate saturates
 and becomes biased.
In order to compare the HIP estimate with HyperLogLog (HLL)
\cite{hyperloglog:2007,hyperloglogpractice:EDBT2013}, which is the
state of the art approximate
distinct counter, we implemented it on the same MinHash sketch that
is used by HLL.  HLL uses $k$-partition MinHash sketches with base-2 ranks.  The
registers have $5$ bits and are thus saturated at 31.  
Pseudo code for the HIP estimator when applied to the HLL sketch is provided in Figure~\ref{HIPHLL:alg}.

\begin{algorithm}
\caption{HIP on HyperLogLog sketches:
$k$-partition, base-2,  5-bit counters \label{HIPHLL:alg}}
\SetKwFunction{bucket}{$\ppart$}
\SetKwFunction{rank}{r}
\KwIn{Stream of elements from domain $V$ \\ random hash functions for $v\in V$:
  $\bucket{v} \sim U[1,\ldots,k]$,  $\rank(v) \sim U[0,1]$}
\SetKwArray{mm}{M}
\KwOut{Array $\mm$ of $k$ 5-bit registers; HIP estimate $c$}
\DontPrintSemicolon
\tcp{ Initialize the sketch}
\For{$i=1,\ldots,k$}{$\mm{i}\gets 0$}
$c \gets 0$\tcp*{approximate count}
\tcp{Process stream element with value $v$}
 $h(v) \gets \min\{31,\lceil -\log_2 \rank(v)\rceil\}$\tcp*{$h(v) \in [0,\ldots,31]$}
\If {$h(v) > \mm{\ppart(v)}$}{
 $c \gets c+ \left(\sum_{i=1}^k I_{\mm{i} < 31}
   2^{-\mm{i}}\right)^{-1}$ \\ \;
 $\mm{\bucket{v}}\gets h(v)$}
\end{algorithm}

\ignore{
\begin{figure}[htbp]
\framebox{
\begin{minipage}{3in}
\begin{algorithmic}[1]
\Require Random uniform hash functions: $\ppart(v):[k]$, $r(v)$: first 32 bits
of $U[0,1]$.
\Statex
\State {\bf Initialization:}
\For {$i=1,\ldots,k$}  $M[i]\gets 0$ \Comment{$M[i]$ are 5-bit
  registers}
\EndFor
\State $c\gets 0$  \Comment{$c$ is an approximate counter}
\Statex
\State {\bf Processing stream element $v$:}
 \State $h(v) \gets \min\{31,\lceil -\log_2 r(v)\rceil\}$
\If {$h(v)>M[\ppart(v)]$}
\State $c \gets c+ \left(\sum_{i=1}^k I_{\M{i} < 31} 2^{-\M{i}}\right)^{-1}$
\State $M{\ppart{v}}\gets h(v)$
\EndIf
\end{algorithmic}
\end{minipage}
}
\caption{Pseudo code for HIP on the HyperLogLog MinHash sketches:
$k$-partition, base-2, each register uses 5 bits.  To apply HIP we
  maintain an additional  register $c$. \label{HIPHLL:alg}}
\end{figure}
}

Figure \ref{HLLHIP:fig}  shows results for the performance of the HIP
and HLL estimators.  Noting again that each simulation can be
performed on any stream of distinct elements (multiple occurrences do
not update the sketch or the estimate).  We implemented 
HyperLogLog using the pseudocode provided in \cite{hyperloglog:2007}.
We show both the raw estimate and the improved bias corrected
estimate as presented.
The Figure also shows the back-of-the-envelope approximate bound we calculated for
the CV of HIP, $\sqrt{\frac{b+1}{4(k-1)}}$, and we can see that
it approximately matches simulation results.

A more recent and more complicated implementation of HLL
\cite{hyperloglogpractice:EDBT2013}
obtains improved performance.  The improvement amounts to  smoothing out the ``bump'' due to the
somewhat ad-hoc bias reducing component, but the asymptotic behavior is the same as
the original hyperLogLog.  We can see that HIP obtains an asymptotic
improvement over HLL and also has a smooth behavior.  Moreover, HIP is
unbiased  (unless all counters are saturated) and elegant, and does not
require corrections and patches as with \cite{hyperloglog:2007,hyperloglogpractice:EDBT2013}.

 We quantify the improvement more precisely in terms of the number $k$ of registers:
  The NRMSE of HLL is $\approx 1.08/\sqrt{k}$ versus $\approx \sqrt{3/(4k)}  \approx 0.866/\sqrt{k}$ of HIP.  This means that an HLL estimator requires
  $\approx 0.56k$ more registers for the same square error as a HIP estimator.
As discussed above, HIP requires an   
 additional register $c$, but its
  benefit, in terms of accuracy outweighs the overhead.

Some encoding optimizations that were proposed for HLL
\cite{hyperloglogpractice:EDBT2013} and elsewhere \cite{KNW:PODS2010}
can also be integrated with HIP.   In particular, the content of the $k$  registers is highly correlated can be
  represented compactly by storing only one value and offsets 
for others (the expected
  size of each offset is constant).  
Recall that the ``exponent'' component of the approximate
  count $c$ can also be represented as an offset (see discussion
  above).

  We note that HIP permits us to work with a different base, and get further
  improvements with respect to HyperLogLog.   Consider using base
  $b=2^{1/i}$ for $i\geq 1$.   With smaller base, we need larger
  counters but we also have a smaller variance.
We need about $\log_2\log_b n$ bits per register, for counting
  up to cardinality (number of distinct elements) $n/16$ (since we want to
  have the counters large enough so that at most a fraction of
  them get saturated).  Since
$\log_2\log_b n = \log_2(\log_2 n /\log_2 b) = \log_2\log_2 n -
\log_2\log_2 b \approx \log_2\log_2 n +\log_2 i$, it means we need
about $\log_2 i$ additional bits per register relative to base-2.
The CV is   $\approx \sqrt{\frac{b+1}{4(k-1)}}$.  
So with $i=1$ (base
$b=2$) we had CV $\approx 0.866/\sqrt{k}$, with $i=2$ ($b=\sqrt{2}$),
we need 1 additional  bit per register but 
the CV is $\approx 0.777/\sqrt{k}$, meaning that we need $20\%$ fewer registers for the same error as when using base-2.  The advantage of  base-$\sqrt{2}$
  kicks in when $n$ exceeds about $3\times 10^8$.
If the counting algorithm also retains a sample of distinct elements
(such as with reservoir sampling) and thus IDs of sampled elements are
  retained, representation size is dominated by $k\log n$ in which
  case we might as well use full ranks for the approximate distinct count.

 Our evaluation aims at practice, and we note the relation to
theoretical results.  First, our analysis applies to
random hash functions, and is justified by simulation results with
standard generators.   A classic lower bound of 
Alon et al. \cite{ams99} on the sketch size
is logarithmic in the cardinality
(and there is a matching upper bound by Kane et al. \cite{KNW:PODS2010}).
The HLL sketch, however, has a much smaller, double logarithmic, size.
The reason is that the lower bound ``includes'' the
encoding of the hash function as part of the sketch, a requirement
which is not justified when many distinct counters use the same hash function.

    Lastly, we comment on the mergeability of our extended MinHash
  sketches.  Mergeability means that we can obtain a sketch of the
  union of (overlapping) data sets from the sketches of the sets.
  This property is important when parallelizing or distributing the
  computation.
The MinHash component of the extended sketch are mergeable, but to
correctly merge the counts, we need to estimate the overlap
between the sets.  This can be done using the similarity estimation
  hat of MinHash sketches, and we plan to address it in future work.

\ignore{
\subsection{HIP approximate distinct counting}
We show how our 
HIP estimators (with full or base-$b$ ranks) can be used for approximate
the number of distinct elements in a stream.  
For each item $i$, we apply hash function to obtain $r(i)$,

where $e(i)$ is the negated exponent of the most significant bit of
$r(i)$ and $b(i)$ is a string of the $b$ most significant bits
following the first ``1''.  We refer to this representation as
$\overline{r}(i)$ and treat it as having value $2^{-e(i)} (1.b)$,
where $b$ here is treated as a string.
For the $k$ partition estimator we also use another random hash
function, $V(i)$, which returns $x\in [k]$.

 We maintain $k$ counters of size $b+ e$, where choosing $e=\lceil
 \log\log n \rceil$ (plus a small constant) suffices.
The counters are as follows:
\begin{itemize}
\item
 The $k$-mins counting uses $k$ different hash functions $r_h$ and
maintains for each $h\in [k]$, the minimum $\overline{r}_h(i)$ of the items
seen so far. 
\item
The $k$-partition  counter uses a single hash function $r$.  For each
bucket $x\in [k]$ it maintains a counter containing the minimum of $1$ and 
$\overline{r}(i)$ of the minimum $r(i)$ for items with bucket $V(i)=x$
seen so far.
\item
  The bottom-$k$ counter maintains the $\overline{r}(i)$ values of the
  $k$ distinct items with minimum $r(i)$ seen so far.
\end{itemize}

 Note that the bottom-$k$ counter can be maintained only with high
 precision ($O(\log n)$ bits so there is low likelihood of
 collisions) or when the data is such that each item occurs only once,
 and we perform approximate counting rather than approximate distinct
 counting.
Therefore, with limited precision, we should use $k$-mins or $k$-parts counters.

  When processing the stream, we update our estimate whenever there is
  an update to the counters (this is the same as inserting an entry to
  the ADS).  We compute the HIP probabilities $\tau_i$, as in
  equations \eqref{botkRCthreshold}, \eqref{kminsRCthreshold}, and
  \eqref{kpartsRCthreshold}, for bottom-$k$, $k$-mins, and $k$-part
  counters, respectively, and increase our count   estimate by
  $1/\tau_i$.
(the subscript $v$ is omitted since we only work with one sketch).

 By choosing $b$ to be increasing slower than $\log\log n$ (say
 maximum of some constant and $\log\log\log n$), we obtain
 CV of $1/\sqrt{2 n}$, improving over the estimator in
 \cite{DurandF:ESA03}  
which had  CV  $\approx 1/\sqrt{n}$ for the same size counters.

  Taking $\epsilon = 1/\sqrt{k}$, we can map our bounds on storage,
  query time, and update time to those in the literature \cite{BJKST:random02,KNW:PODS2010}.
  
 The query time of our HIP distinct counters is $O(1)$,
  since we actually maintain an estimate (we generate a stream of
  increases to the estimate value, which can be rounded (using
  randomized rounding) to match the storage available for the count
  estimate.  Note that query time with the original estimators of
\cite{FlajoletMartin85,DurandF:ESA03} is $\Omega(k)$, since the
estimator is actually applied to the full data structure.
  The update time is $O(k)$ (required $O(k)$ minimum operations) with
  $k$-mins but only $O(1)$ with $k$-partition.  

  The storage is seemingly $O(k (\log\log n +b)) = O(k\log\log n)$,
  since we maintain $k$ registers of size $\log\log n$.  But we
  observe that the values in these registers are well concentrated
around the expectation (which is about $\log\log n$).  By using again
the exponential distribution, the values are exponentially distributed
with parameter $n$ and the register stores their logarithm. 
The probability that this logarithm deviates by $i$ from expectation 
decreases exponentially with $i$.  Therefore, if we only record the
expectation and deviations from the expectation we need $O(kb +
\log\log n)$ storage.


\ignore{
 We consider the memory usage of  limited DP ADS computation. 
For bottom-$k$ computation, we keep in memory $k$ rank values for
each node.  Since we only maintain exponents, this requires
$k\log\log n$ bits per node. 
As discussed earlier, we can maintain a sorted list of the values,
using $O(k)$ time per update, or a max-heap which has some overhead
but takes $O(\log k)$ time to update.
We can further reduce memory requirements by only keeping the
multiplicity of each exponent in the list.
Generally we expect about $\log_2 k$ different
exponents which differ by $1$ and thus the total storage for the
exponents is $\log\log n + \log k$.  

 Similarly, for $k$-mins and $k$-partition we keep a vector of $k$ rank
 values for each permutation/part.  Since we need to maintain the
 association of each value with the part, we can not apply the same
 compression and only maintain multiplicities of each exponent.  Since
there are (in expectation) only $\log k$ different exponents, we can still compress to
$\log\log n + k\log k$ bits per node.
}
}

\section{Approximate counting} \label{approxcount:sec}
An approximate  counter is applied to a stream
of positive integers $\{w_i\}$ and represents $n=\sum_i w_i$ approximately.
Whereas an exact representation takes $\lceil \log_2 n\rceil$ bits, 
an approximate counter that uses only $O(\log\log n)$ bits was
proposed by Morris \cite{Morris77} and analysed and extended 
by Flajolet \cite{Flajolet:BIT85}. 
This {\em Morris counter} is an integer $x\geq 0$ and the
estimate is $\hat{n}=b^x-1$, where the (fixed) base 
$b>1$ controls a tradeoff between approximation quality and
representation size.

 Approximate counters were originally presented only for increments of
 $1$.  We provide procedures here for
efficient weighted updates and for merges of two counters.
An increase of $Y$ to a counter $x$  is performed as follows:  Let $i \gets \lfloor \log_b
(Y/b^x +1) \rfloor$ be the maximum such that increasing the counter by
$i$ would increase the estimate by at most $Y$.  We then compute the
leftover $\Delta\gets Y-b^x(b^i-1)$ and update $x\gets x+i$.  Lastly, we
increase $x$ by $1$ with probability $\Delta/(b^x(b-1))$ (this is an
inverse probability estimate of $\Delta$). Merge of two Morris
counters $x_1,x_2$ is handled the same as incrementing $x_1$ with $b^{x_2}-1$.
The estimator $\hat{n}$ is clearly unbiased (by induction on
updates).

 It is easy to show that the
variance is dominated by the analysis in
\cite{Morris77,Flajolet:BIT85} for increments, since it can
only improve when two consecutive updates are combined to a
single update with the sum of their values. Intuitively, only the ``leftover'' part of the updates
contributes to the variance at all.  

For the application of HIP approximate distinct counters, the
approximate counter is used to accumulate the sum of the HIP estimates.
In this particular case, the magnitudes of the updates are increasing and
typically are about $1/k$ of the total.  Therefore, with the
choice of $b\leq 1+1/k$, the variance is significantly lower than for
the unit increments as analysed in \cite{Morris77,Flajolet:BIT85}.
 The number of bits needed for counter representation is $\log_2\log_b n \approx \log_2\log_2 n + \log_2 (1/(b-1))$ and the CV is about $(b-1)$.  When using $b=1+1/2^j$ we obtain that with $j$ additional bits in the representation we can obtain relative error of $1/2^j$.  

\section{Cardinality estimator from ADS size} \label{adssizeest:section}
We derive the {\em size}
estimator which is the unique unbiased cardinality estimator that is only based
on the size (number of entries) of the ADS.  Specifically, to estimate
the cardinality of $N_d(v)$, we apply the estimator to the number of entries in
$\ADS(v)$ with distances at most $d$.  In a streaming context, the
size estimator is applied to the
number of updates (which resulted in modifying) the MinHash sketch.
The size estimator is weaker than the HIP estimator, but uses less
information. It is of interest in a setting where one can
observe the approximate counter as a black box,  only observing the
number of modifications.

 The estimator we derive below 
is applied to the {\em number of entries} in a bottom-$k$ ADS that are within
distance at most $d$ from $v$.  The estimator assumes that the ADS is
computed with respect to ``unique'' distances.  That is, we apply some
symmetry breaking and ADS may include multiple nodes of same distance.

\begin{lemma}
The unique unbiased estimator $E_s$ of $|N_d(v)|$ based solely on ADS size $s=| N_d(v)  
  \cap \ADS(v) |$ is
$$E_s = \left\{ \begin{array}{ll} s\leq k \quad \text{:} \quad & s \\
                                                 \text{otherwise
                                                 }\quad \text{:} \quad
                                                 & k (1+\frac{1}{k})^{s-k+1}-1\ .
     \end{array}
                                             \right.$$
\end{lemma}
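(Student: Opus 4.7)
\medskip
\noindent The plan is to (i) describe the distribution of the ADS size $s$ as a function of $n=|N_d(v)|$, (ii) argue uniqueness of any unbiased estimator based solely on $s$, and (iii) verify that the stated closed form is unbiased.

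First, I would invoke the argument from the proof of Lemma~\ref{updates:lemma}: when ranks are drawn i.i.d.\ from $U[0,1]$ with unique values, the $i$th closest node to $v$ is included in $\ADS(v)$ as an independent Bernoulli trial with probability $\min(1, k/i)$. Hence for $n\le k$ the sketch contains exactly $s=n$ entries deterministically, and the unbiasedness constraint $\E[E_s\mid n]=n$ immediately forces $E_s=s$ on $\{0,1,\dots,k\}$. For $n>k$, writing $s=k+\sum_{i=k+1}^{n} X_i$ with independent $X_i\sim\mathrm{Ber}(k/i)$, the probability $\Pr[s=n\mid n]=\prod_{i=k+1}^{n} k/i$ is strictly positive. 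So for each $n>k$ the equation $\sum_{s=k}^{n} E_s\,\Pr[s\mid n]=n$ is linear in the unknowns $E_k,E_{k+1},\dots,E_n$ with a nonzero coefficient on $E_n$. The resulting triangular system, together with $E_k=k$ already fixed above, determines $E_{k+1},E_{k+2},\dots$ uniquely.

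The key step is verifying that the proposed formula satisfies all these constraints. Rewriting it as $E_s=(k+1)(1+1/k)^{s-k}-1$ for $s\ge k$, I would compute, using independence of the $X_i$,
\[
\E\!\left[(1+1/k)^{s-k}\,\Big|\,n\right]=\prod_{i=k+1}^{n}\E\!\left[(1+1/k)^{X_i}\right]=\prod_{i=k+1}^{n}\left(1+\tfrac{1}{i}\right)=\frac{n+1}{k+1},
\]
where the middle equality uses $\E[(1+1/k)^{X_i}]=(1-k/i)+(1+1/k)(k/i)=1+1/i$ and the last equality is a telescoping product. Therefore $\E[E_s\mid n]=(k+1)\cdot(n+1)/(k+1)-1=n$, which combined with uniqueness identifies the stated formula as the unique unbiased size-based estimator.

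The main obstacle is guessing the right functional form: the base $1+1/k$ is special precisely because it makes each factor $\E[(1+1/k)^{X_i}]$ collapse to $1+1/i$, so that the product telescopes. Without this observation one is stuck with an unwieldy recurrence; with it, unbiasedness reduces to a one-line computation.
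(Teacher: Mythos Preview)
Your argument is correct. The uniqueness part---fixing $E_s=s$ for $s\le k$ and then reading off each $E_s$ for $s>k$ from a triangular linear system with nonzero diagonal coefficient $\Pr[s=n\mid n]>0$---matches the paper's approach exactly (the paper phrases it via the probabilities $C_{i,\ell}$ and the recursion $E_s=(s-\sum_{i=k}^{s-1}E_i C_{i,s})/C_{s,s}$, but the content is the same).

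Where you genuinely go beyond the paper is in the verification step. The paper, after setting up the recursion and computing $E_{k+1}$, simply asserts ``It can be verified that the general solution satisfies $E_s=k(1+1/k)^{s-k+1}-1$'' without supplying the check. You provide a clean direct proof of unbiasedness: writing $s-k=\sum_{i=k+1}^{n}X_i$ with independent $X_i\sim\mathrm{Ber}(k/i)$, the moment-generating-type identity $\E[(1+1/k)^{X_i}]=1+1/i$ makes the product telescope to $(n+1)/(k+1)$, so $\E[E_s\mid n]=n$. This exploits the independence of the inclusion indicators (noted in Lemma~\ref{updates:lemma}) in a way the paper does not, and it is the only step where the specific base $1+1/k$ is essential. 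Your route is thus more self-contained than the paper's, which leaves the closed form as an unverified claim.
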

\begin{proof}
Let $C_{i,\ell}$ be the probability that exactly $i$ nodes are sampled
from the first $\ell$.  
For $\ell\geq k$ and $i< k$ or for $\ell\leq k$ and $i<\ell$,  $C_{i,\ell}=0$.
If $\ell\leq k$, then $C_{\ell,\ell}=1$.
  We have the relations
\begin{align*}
& \ell>k \ \text{:}\, & C_{\ell,\ell} =&  (k/\ell) C_{\ell-1,\ell-1}\\
& k<i<\ell\ \text{:} \, &  C_{i,\ell} =& (1-k/\ell) C_{i,\ell-1} + (k/\ell) C_{i-1,\ell-1} \\
& k<\ell\ \text{:}\,  & C_{k,\ell} =&  (1-k/\ell) C_{k,\ell-1}
\end{align*}
$(k/\ell)$ is the probability that the $\ell$th node is one of the first $k$
in the random permutation induced on the $\ell$ nodes closest to $v$.

  If $s<  k$, which is only possible if $n_r(v)=s$,  we have $E_s=s$.  
  If $s=  k$, to be unbiased for the case where $n_r(v)=k$ and this is
  the only possible count,  we have $E_k=k$.  
  Otherwise, for $s>k$, we have that any estimator that is unbiased on
  neighborhoods of size $s$  must satisfy
$s = \sum_{i=k}^s E_i C_{i,s}$, which we rearrange to obtain
\begin{equation} \label{ubforsize:eq}
E_s = \frac{s- \sum_{i=k}^{s-1} E_i C_{i,s}}{C_{s,s}}\ .
\end{equation}
We iteratively apply \eqref{ubforsize:eq} to uniquely determine $E_s$
for $s\geq k+1$.
To determine $E_{k+1}$,  we consider the
two possible ADS counts are $k$ and $k+1$ with respective probabilities
$C_{k,k+1}=1/(k+1)$ and $C_{k+1,k+1}=k/(k+1)$.  From \eqref{ubforsize:eq}
$k+1 = E_k C_{k,k+1} + E_{k+1} C_{k+1,k+1} = k/(k+1) + E_{k+1}
k/(k+1)$.  We obtain $E_{k+1}=(k+1)^2/k-1$.
It can be verified that the general solution satisfies 
$E_s = k (1+\frac{1}{k})^{s-k+1}-1\ .$
\end{proof}
  This estimator is also applicable with $k=1$, in which case it is
  simply $2^s$.   




\section{Extension: Non-uniform node weights} \label{nonuniformweights:sec}
For simplicity, our presentation assumed uniform node
weights.  We briefly discuss here the extension to arbitrary
nonnegative node weights.
 The closeness centrality definition 
 \eqref{closenesscdef} incorporates 
 node weights $\beta(j)$. We can also consider computing neighborhood weights 
 $$n_d(v) = \sum_{j | d_{vj} \leq d} \beta(j)\ .$$

To obtain the same CV guarantees
as estimators for uniform weights,  we need to
compute the ADSs with respect to the weights $\beta(i)$.
To do that, we draw the rank $r(i)$ for node $i$  using the exponential distribution with parameter 
$\beta(i)$ \cite{ECohen6f,bottomk07:ds}.  This is the same as drawing 
uniform ranks $r'(i)$ and using ranks $r(i)=-\ln(1-r'(i))/\beta(i)$. 
With these ranks, nodes with higher $\beta$ values have higher inclusion probabilities. 
The same ADS definitions and algorithms apply, simply using 
the modified ranks.  Note however, that ADSs can have larger expected sizes (the $\beta$ weights can be viewed 
as emulating multiple copies of a node).  

We first discuss MinHash cardinality estimators. 
The $k$-mins basic 
estimator (with exponentially distributed ranks) applies with same CV 
of $1/\sqrt{k-2}$
to weighted $k$-mins MinHash sketches \cite{ECohen6f}.  
An estimator for weighted bottom-$k$ MinHash sketches was 
given in 
\cite{bottomk:VLDB2008} for weighted sampling without 
 replacement and general order samples.  
An alternative with bottom-$k$ is to use $r(i)= r'(i)/\beta(i)$, which 
corresponds 
to Sequential Poisson (Priority) sampling 
\cite{Ohlsson_SPS:1998,DLT:jacm07}.  

Our HIP estimators naturally extend, and remain unbiased,  with any
weight-based rankings
$r(i)=f(i,\beta(i),r'(i))$:  When the sketch is modified, we compute
the probability the probability, which depends now on $\beta(i)$, that
$r(i)$ is below the threshold value.
If $r(i)$ are exponentially distributed, the CV of estimating
neighborhood weights and centralities is at most $1/\sqrt{2(k-1)}$.



\subsection*{Conclusion}
  
\ignore{
  Performing computation on massive graphs is a 
  practically important problem.  To apply to graphs with billions of edges,
computation should scale near linearly with graph size.  

The distance 
distribution and its derivatives had been extensively used to study
social and other graphs.
  Fast algorithms for estimating the distance distribution had existed
  for two decades and several implementations are available.  
We present a unified view of existing approaches and
obtain improved and asymptotically optimal estimates  for the distance distribution,
using the same computation.  
}
ADSs, introduced two decades ago, are emerging as a powerful
tool for scalable analysis of massive graphs and data streams.  
We introduce HIP estimators, which apply to an extensive class of
natural statistics, are simple to apply with all sketch flavors, 
 and significantly improve over state of the art.
For neighborhood cardinalities and closeness centralities, HIP
estimators 
have at most half the variance of previous estimators.\notinproc{  HIP can be
integrated and enhance accuracy in existing ADS implementations \cite{hyperANF:www2011,DSGZ:nips2013}.}
For approximate distinct counting on data streams,
HIP estimators outperform state of the art practical estimators.
In follow-up work, we applied 
HIP for ADS-based estimation of closeness similarity of two nodes
\cite{CDFGGW:COSN2013} and timed-influence of a set of seed nodes
\cite{timedinfluence:2014}.  The application
\cite{timedinfluence:2014} used ADS structures defined with respect to
multiple graphs.
In independent later work, Ting \cite{Ting:KDD2014} also proposed HIP in the
context of distinct counting.


Lastly, we obtained interesting insights on MinHash sketch-based
cardinality estimation by applying classic results from the theory of
point estimation.  
We expect this powerful theory, perhaps with some adaptations to
discrete settings, to provide insights on other sketch structures.

\ignore{
In followup work, we explored application of ADSs for similarity
estimation in social networks:  the similarity between two nodes is
estimated from the ADSs of the two nodes.  Similarity
estimation is important for analyzing social networks
\cite{Liben-NowellKlienberg_linkprediction:2007,SarmaGNP:wsdm2010,PanigrahyNX:wsdm2012}
and scalability of these estimates had been an issue.
Preliminary results \cite{CDFGGW:COSN2013} show that the distance upper bound obtained from
ADSs of two nodes has (experimentally) a fractional increase over the
real value in large social networks and theoretically matches in
performance the tradeoffs of classic distance oracle.  Moreover, ADSs
can also be used to estimate  {\em closeness similarities}, which relate two
nodes based on the difference between their distances to all other
nodes.  Some further applications of ADSs are discussed in Appendix \ref{moreADSuses:sec}.
}

\ignore{
From the ADSs, we can deduce upper and lower bounds on distances,
estimate similarity of sets of closets neighbors,  and
estimate similarity of neighborhoods specified by distance, say the $d_1$
neighborhood of $v_1$ and the $d_2$ neighborhood of $v_2$.  The basic
estimator extracts from the ADS a bottom-$k$ (or $k$-mins) sample of
each neighborhood.  These samples are coordinated and the simplest
method for estimation is to compute a corresponding $k$-sample of the
union of the neighborhoods, which can be obtained from the two
$k$-samples.  The union samples allows us to estimate the intersection
and Jaccard index \cite{ECohen6f,Broder:CPM00,BRODER:sequences97}.
Tighter estimators, that still apply to the two (say bottom-$k$)
samples, but use also information that is not included in the
$k$-sample of the union, were presented in \cite{CK:sigmetrics09}.  
 We are looking into extending our techniques to obtain tighter
 estimates  for
pairwise distance and neighborhood similarity queries.
}

\ignore{
  All our HIP estimators can be extended to weighted graphs, where
  nodes have weights and we are interested in size estimates with
  respect
to the weights.   They also extend to weighted edges, where the
distances are computed with respect to that (but then BFS computation
rather than DP may be needed).  With weighted edges, the extension is
straightforward and estimators are the same.

  The extension for weighted nodes follows the lines of our work on bottom-$k$
  estimators.
 With limited precision consideration, the exponent size can be
 larger. We also need to work with exponentially distributed ranks.
The permutation estimator is less natural.
}
\notinproc{
\subsection*{Acknowledgement}
The author would like to thank Seba Vigna for helpful pointers.
}
{\small 
\bibliographystyle{plain}
\bibliography{cycle}
}

 \onlyinproc{
\begin{IEEEbiography}{Edith Cohen}
 Edith Cohen is a visiting full professor at Tel
Aviv University.  She received a Ph.D in Computer Science from
Stanford University in 1991. From 1991 to 2012 she was at AT\&T
Labs (initially AT\&T Bell Laboratories) and from 2012 to  2014 she
was a Principal Researcher at Microsoft Research (Silicon Valley). She was a visiting professor
at UC Berkeley in 1997.  Her research interests include algorithms,
mining and analysis of massive data, optimization, and computer
networking.  She is a winner of the IEEE ComSoc 1997 Bennett prize,
and an author of over 20 patents and hundreds of publications. 
\end{IEEEbiography}

\end{document}}
\appendices

\section{Extension: ADS without tie breaking} \label{notieADS:sec}

 We provide here a modified ADS definition, and respective HIP
 probabilities, for when there is a smaller set of distinct distances.  The advantages of
the modified definition is a smaller ADS size:  The modified ADS (we
provide here the bottom-$k$ flavor) includes a subset of the entries
that would have been included under the original definition (with tie
breaking on distances), but at most $k$ entries (those with smallest ranks) from each distinct distance.

\ignore{
We can partition distances into discrete ranges: $A_i=[a_i,a_{i+1})$ where $a_1=0$ and $a_i$ are increasing.  The values of $a_i$ can be determined according
to the granularity in which we want to differentiate distances.

The $\{a_i\}$-approximate $\ADS$ is follows.
  A node $u$ with $d_{uv}\in A_i$ is included in $\ADS(v)$ if $r(u)$ is smaller than the $k$th lowest rank amongst nodes within distance smaller than $a_{i+1}$.
An approximate ADS includes a strict subset of the nodes included in an exact ADS.
}
 Formally, a node $u$ is included in (modified) $\ADS(v)$ if
 $r(u)$ is smaller than the $k$th lowest rank amongst nodes within
 distance {\em at most} $d_{vu}$ from $v$.

  We can assign HIP inclusion probabilities for the modified ADS as follows.

  For each $v,u$, we compute the probability of $u$, conditioned on fixed ranks of all other nodes excluding $u$, of $u$ having
one of the $k-1$ smallest ranks amongst nodes with distance 
in $d_{vu}$ from $v$.  
We compute this probability only for nodes that satisfy this condition of having one of these $k-1$ smallest ranks.
The threshold probability is that $k$th smallest rank.
  Note that a node $u\in \ADS(v)$ that has the 
$k$th smallest rank in $N_{d_{vu}}(u)$ is not considered ``sampled.''  

The modified HIP probabilities 
can be applied to the same queries.
The HIP probabilities of an entry in the modified ADS are at most
the values in the full with tie-breaking ADS.  Therefore, adjusted
weights and variances are higher.  
The CV is at most $1/\sqrt{k-2}$, this is because when all distances are the same (say edges have $0$ lengths and the ADS is a reachability sketch), the modified ADS is a bottom-$k$ 
MinHash sketch of the reachability set.
\ignore{
  When the partition is such that there is a single node in each $A_i$, then all nodes are ``sampled'' and the probabilities are the HIP probabilities.
When all nodes are in a single $A_i$, we obtain a bottom-$k$ sketch of the set
and only $k-1$ nodes obtain adjusted weights.

 Neighborhood cardinality estimates for $a_i$-neighborhoods obtain a CV that
is between $1/\sqrt{k-2}$ (basic estimates), when all nodes are in a single $A_j$ ($j<i$) in which case the ADS size is $k$, and $1/\sqrt{2k-2}$ when nodes are in distinct ``rings.''

 Computation may require propagating values that are not included in the node's ADS.  If this is not done, ADS of other nodes can have ``errors'' of $a_{i+1}-a_i$.
}

\section{More on ADS computation}
We discuss some additional  aspects of ADS computation.

\ignore{
A complete representation of ADS$(i)$, containing all IDs of included nodes 
and their distances from $i$ requires 
$O(\log n)$ bits (node ID, distance) for each 
included node, which sums up to expected total of $O(k\log^2 n)$ bits 
per node. 
Here, we assume that the rank value is obtained by applying a hash 
function to the node ID. 
}

\subsection{Limited ADS computation} \label{implicit:sec}
When memory is constrained, we can benefit when 
not maintaining the full ADS in 
 ``active'' memory, but only maintaining  the {\em threshold}
information required to proceed with the computation. 
We refer to this as a {\em limited} ADS computation. 

With {\sc PrunedDijkstra}, ranks are processed in increasing order. In the 
iteration from node $i$, when visiting a node, we need to have access 
to all rank-distance pairs in the ADS constructed so far at the 
visited node.  With {\sc DP}, processing is in increasing distance.
To determine if a proposed entry indeed contributes to the ADS, we
only need to maintain  the MinHash sketch of ranks presented so far,
which has size $k$.
The ANF \cite{PGF_ANF:KDD2002,CGLM:ICSR2011} and hyperANF 
\cite{hyperANF:www2011} algorithms are essentially limited {\sc DP}
computation with base-2 ranks.  These algorithm maintain in iteration
$i$ for each node $v$ the MinHash sketch of $N_i(v)$ (all nodes with
distance at most $i$ from $v$). 
Streaming approximate distinct count estimators 
\cite{FlajoletMartin85,DurandF:ESA03} were applied to the base-2 
MinHash sketch of the current $N_i(v)$ to estimate its cardinality after 
each {\sc DP}  iteration.  The results from different nodes were  aggregated after each iteration 
to produce an estimate of the total number of pairs within each 
distance. More accurate estimates (as demonstrated in Section \ref{distinctcount:sec}) can be obtained using
the same implementations by applying our HIP estimators instead.

\ignore{
\paragraph{Active memory considerations}
One implementation parameter is the {\em active memory} required
to perform the computation.  This is the information we need to have
accessible in order to proceed with the ADS computation. 
These considerations were critical in
facilitating computation over Facebook graphs with billions of
nodes~\cite{BackstromBRUV12:websci2012}.
 With {\sc PrunedDijkstra}, the full ADS computed so far, in terms of rank values and respective distances,
has to be accessible for each node.  
This is because updates can involve arbitrary distances, so we have to
maintain the ``threshold'' rank values for all distances.  

{\sc DP} maintains the invariant that after iteration $h$ we have all
ADS entries that are with distance at most $h$ from $i$.
To perform iteration $h+1$, for bottom-$k$ ADS we
only need access the threshold rank values in
$N_h(i)$.  For 
bottom-$k$, these are the 
bottom-$k$ rank values seen so far, for
$k$-partition ADS, it is the minimum rank in each of the $k$ partitions,
and for $k$-mins ADS, the minimum rank in each of the $k$
permutations.
This means that only $k$ rank values per node $i$ need to be accessible in
active memory in order to perform iteration $h+1$.  This can be significant to memory
allocation
since the full ADS can have in expectation $k\log n$ entries per node , that is, both variable in size and
much larger.  
\begin{itemize}
\item
 When the goal is to compute a set of ADSs, we perform an explicit
 computation which tracks the
node IDs associated with each rank value.  We carefully consider when
we need to track node IDs in active memory, an important issue, 
because node IDs have size $\log n$ whereas, as we shall see,
a limited precision of
the ranks that is $\log\log n + \log k$ bits for each value suffices.

With {\sc PrunedDijkstra}, each step only creates new entries involving the same node,
so these entries can be exported to slower storage 
(disk or slower memory) once the entry is created, and there is no
need for including node IDs in active memory.
With {\sc DP},  we maintain node IDs with active memory rank values only as
long as they can be propagated in updates.  So only a limited number
of node IDs need to be kept along with rank values in active memory,
each node ID needs to remain in active memory for at most one
iteration (there could be more than $k$ updates to a node in one
iteration but only $k$ of them remain active for the next iteration).
\item
 When we are only interested in estimating the
distance distribution of each node $v$, we perform a 
computation which only tracks rank values (and not node IDs).
With {\sc DP},  the estimate for distance $h$
can be computed (in case there is at least one change to ADS$(v)$) and
exported once iteration $h$ is completed.   The final summary includes
at most one entry for each distance (there is an entry with distance $h$ only if $\ADS(v)$
was modified in iteration $h$).
 When the (effective) diameter is considerably 
smaller than $k\ln n$, the list includes fewer entries than
the ADS (which can have multiple nodes in the same distance).
\item
 When the goal is only to compute the
overall distance distribution (the total number of pairs within each
distance~\cite{PGF_ANF:KDD2002,hyperANF:www2011,BackstromBRUV12:websci2012}),
then at the end of each iteration $h$,  estimates for distances $\leq
h$ be added up over nodes and only the sum is retained.
\end{itemize}
}

\subsection{Cost of relaxations}
The expected  total number of  relaxations is 
$O(mk\log n)$ but the
expected number of relaxations that actually
result in an update is $O(nk\log n)$.  This distinction is important
because relaxations which result in an update are more costly.

We first consider relaxations with {\sc DP}.
With $k$-mins and $k$-partition ADS, we can retain with each update the
index (out of $k$) which was modified since the last update.  If we do
so, then the cost of relaxing an edge is
$O(1)$, since we only need to look at the rank value in the modified
index.
If the index is not retained, we can perform a coordinate wise minimum
of the $k$ entries, in time $O(k)$.  The better choice depends on the hardware.

With bottom-$k$ ADS,  
we maintain the current bottom-$k$ ranks in active memory.
When relaxing an edge we compare the newly inserted
rank value in the sink ADS. 
If the entries are maintained in a 
 max-heap, the maximum entry is compared with the new value.
If the new value is smaller, it is inserted into the heap and the max
entry is removed.
 The cost is $O(1)$ if the ADS is not updated (node is not inserted) but
$O(\log k)$ otherwise (the max node is removed from the heap and the
new node is inserted).  
Alternatively, we can maintain the $k$ values in a sorted list and
each update takes $O(k)$ time.

Relaxations with {\sc PrunedDijkstra} are less efficient than with
{\sc DP} as we need to search
for the minimum rank in a distance range which increases update times
by a factor of $\log k$.



\subsection{Removing unique distances assumption}
 The strict ADS is defined with respect to
  unique distances.   We can apply any symmetry breaking between nodes
  of equal distance, but to
  maintain efficiency, in particular for {\sc DP} computation, we specify a
  particular one (this is all for analysis purposes).  The symmetry
  breaking is defined according
to the scan order of incoming edges to $v$ in the representation of
the graph.    Amongst two nodes $u$ and $w$ so that
$x=d_{uv}=d_{wv}$, we consider all paths of length $x$ to $v$
originating from $u$ (or $w$) and associate with $u$ the
least ordered incoming edge to $v$.  The closer one of  $u$ and $w$ is
the defined as the one with the earlier edge. 
 If both have the same earliest edge $(y,v)$, we
consider the same order with respect to the common previous node $y$ on
the path, and so on.  If {\sc DP} performs relaxations according to this
order, then we maintain the property that inserted nodes are in the
final ADS (with respect to the ``unique'' order).  

\subsection{Parallelizing  {\sc PrunedDijkstra}}
As stated,  the algorithm
performs $n$ sequential Dijkstra computations.  The dependences can be improved.  Consider $k=1$:
We partition the nodes to two sets according to rank.  We then perform
the computation from the set of lower rank nodes collapsed together
to a single node.  This will provide us ADS entries and their distances
for the closest node in that batch.  After we do this, we can proceed
for the second batch without completely resolving the first set of nodes. 
 Recursing, this gives us logarithmic depth. Further details are in \cite{ECohen6f}.

\subsection{{\sc PrunedDijkstra} base-$b$}  When we work with bottom-$k$
sketches and base-$b$ ranks, which are not uniquely assigned for nodes, we have to ensure that the ADS is not 
updated twice in the same iteration.  This can be done by 
marking each node after the first visit in each iteration and stopping the 
search on subsequent visits.  Note that the threshold information
at each node can include multiple occurrences of the same base-$b$
rank value (each corresponding to a distinct node). 
But because iteration order corresponds to the order on the full rank,
the entries correspond to the entries of the full-rank ADS. 
To obtain the explicit ADS (with node IDs),
we can export each new entry $i\in \ADS(v)$ and the distance $d_{iv}$ to 
a slower medium. After the computation we can aggregate all entries of 
$\ADS(v)$ for each $v$. 

 With {\sc DP}, base-$b$, and bottom-$k$ sketches, we must treat rank values in 
the same ADS as unique.  This is needed to avoid having the same 
node contribute to multiple entries in an ADS of another node.

\ignore{
\paragraph*{Extending DP to weighted edges and DP/BFS combination}
We can use DP-like locality and handle weighted
edges if we also allow entries to be removed from
the threshold ADS.  In this case, the total number of updates can not
be ``charged''  to ADS entries, but it can still be bounded 
as in \cite{bottomk07:ds}.  
The extended DP algorithm retains a BFS-like threshold ADS in memory for each
node (Containing distances and threshold ranks).  If a closer node
with a smaller rank is found via an edge relaxation, then the
threshold ADS is updated.  To compute an explicit ADS, we
propose that all these updates, insertions and deletions,  are exported.    A Map-Reduce
implementation can then combine all updates made to a certain node to
generate its final explicit ADS which includes node IDs.
The total number of passes on the edge list is equal to the maximum
number of edges in a shortest path.  If there are only few nodes with
short paths we can resort to a BFS computation to complete the ADSs.
}

\ignore{
\paragraph*{By name or by count}

***  discuss by count with producing estimates on the go, limiting
active memory.  outputting estimates only for each neighborhood....
limited precision.

The end product of ADS by count is a list of distance
and neighborhood size estimate pairs.  This representation is more
compact: There is at most one entry for each unique distance, which is an improvement when the diameter is
much smaller than $k \log n$.

Moreover,  we shall see that a limited precision of
the ranks that is $\log\log n + \log k$ bits for each value suffices
and therefore 
the computation itself of by count ADS also requires
less active storage, as there is no need to track node ID's but
only (limited precision) rank value information.    
Moreover, when the goal, as in 
\cite{PGF_ANF:KDD2002,CGLM:ICSR2011,hyperANF:www2011}, is to compute
the distance distribution of the graph we do not even need to export
distance estimate pairs for individual nodes.  At the end of each DP
iteration, all values for all nodes can be summed up.
We remark that the distance distribution algorithms in
\cite{PGF_ANF:KDD2002,CGLM:ICSR2011,hyperANF:www2011} essentially use
($k$-mins and $k$-partition) ADS by count and limited precision rank values.

 Finally, in typical data, there are very many nodes of the same
 distance.
A strict definition of the ADS means that we can have many nodes from
the same distance.  In the sequel, we can see that for the purposes of
neighborhood size estimation, the ADS can be trimmed.  With DP
we can compute estimates layer by layer, retaining only the current
bottom-$k$ and exporting one distance size-estimate pair for each
update
(and at most one per layer).
}

\end{document}